\newcommand{\ble}{\begin{lemma}}
\newcommand{\ele}{\end{lemma}}
\newtheorem{lemma}{Lemma}[section]
\newtheorem{theorem}[lemma]{Theorem}
\newtheorem{corollary}[lemma]{Corollary}
\newtheorem{observation}[lemma]{Observation}
\newtheorem{remark}[lemma]{Remark}
\newtheorem{claim}[lemma]{Claim}
\newtheorem{algorithm}[lemma]{Algorithm}
\theoremstyle{definition}
\newtheorem{definition}[lemma]{Definition}
\newcommand{\beao}{\begin{eqnarray*}}
\newcommand{\eeao}{\end{eqnarray*}\noindent}
\newcommand{\beam}{\begin{eqnarray}}
\newcommand{\eeam}{\end{eqnarray}\noindent}
\newcommand{\lp}[2][p]{\|#2\|_{#1}}
\newcommand{\nats}{{\mathbb N}}
\newcommand{\eps}{{\varepsilon}}
\newcounter{note}[section]
\begin{document}

\title{The Norms of Graph Spanners}
\author{Eden Chlamt\'a\v{c}\footnote{Supported in part by ISF grant 1002/14}\\Ben Gurion University \and Michael Dinitz\footnote{Supported in part by NSF awards CCF-1464239 and CCF-1535887}\\Johns Hopkins University \and Thomas Robinson\footnote{Supported in part by ISF grant 1002/14}\\Ben Gurion University}

\maketitle

\begin{abstract}
    A $t$-spanner of a graph $G$ is a subgraph $H$ in which all distances are preserved up to a multiplicative $t$ factor.  A classical result of Alth\"ofer et al.~is that for every integer $k$ and every graph $G$, there is a $(2k-1)$-spanner of $G$ with at most $O(n^{1+1/k})$ edges.  But for some settings the more interesting notion is not the number of edges, but the degrees of the nodes.  This spurred interest in and study of spanners with small maximum degree.  However, this is not necessarily a robust enough objective: we would like spanners that not only have small maximum degree, but also have ``few'' nodes of ``large'' degree.  To interpolate between these two extremes, in this paper we initiate the study of graph spanners with respect to the $\ell_p$-norm of their degree vector, thus simultaneously modeling the number of edges (the $\ell_1$-norm) and the maximum degree (the $\ell_{\infty}$-norm).  We give %precise existential
    precise upper bounds for all ranges of $p$ and stretch $t$: we prove that the greedy $(2k-1)$-spanner has $\ell_p$ norm of at most $\max(O(n), O(n^{\frac{k+p}{kp}}))$, and that this bound is tight (assuming the Erd\H{o}s girth conjecture).  We also study universal lower bounds, allowing us to give ``generic'' guarantees on the approximation ratio of the greedy algorithm which generalize and interpolate between the known approximations for the $\ell_1$ and $\ell_{\infty}$ norm.  Finally, we show that at least in some situations, the $\ell_p$ norm behaves fundamentally differently from $\ell_1$ or $\ell_{\infty}$: there are regimes ($p=2$ and stretch $3$ in particular) where the greedy spanner has a provably superior approximation to the generic guarantee.
\end{abstract}

\section{Introduction}

Graph spanners are subgraphs which approximately preserve distances.  Slightly more formally, given a graph $G = (V, E)$ (possibly with lengths on the edges), a subgraph $H$ of $G$ is a \emph{$t$-spanner} of $G$ if $d_G(u,v) \leq d_H(u,v) \leq t \cdot d_G(u,v)$ for all $u,v \in V$, where $d_G$ denotes shortest-path distances in $G$ (and $d_H$ in $H$).  The value $t$ is called the \emph{stretch} of the spanner.

Graph spanners were originally introduced in the context of distributed computing~\cite{PU89,PS89}, but have since proved to be a fundamental building block that is useful in a variety of applications, from property testing~\cite{BGJRW09} to network routing~\cite{TZ01}.  When building spanners there are many objectives which we could try to optimize, but probably the most popular is the number of edges (the \emph{size} or the \emph{sparsity}).  Not only is sparsity important in many applications, it also admits a beautiful tradeoff with the stretch, proved by Alth\"ofer et al.~\cite{ADDJS93}:

\begin{theorem}[\cite{ADDJS93}] \label{thm:ADDJS}
For every integer $k \geq 1$ and every weighted graph $G = (V, E)$ with $|V| = n$, there is a $(2k-1)$-spanner $H$ of $G$ with at most $O(n^{1+1/k})$ edges.
\end{theorem}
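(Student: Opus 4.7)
The plan is to construct $H$ explicitly via the classical \emph{greedy algorithm}: sort the edges of $G$ in non-decreasing order of weight, and process them one by one, adding an edge $(u,v)$ of weight $w$ to $H$ if and only if the current shortest-path distance in $H$ between $u$ and $v$ strictly exceeds $(2k-1)w$. The two properties to verify are (i) that the resulting $H$ is indeed a $(2k-1)$-spanner, and (ii) that $|E(H)| \le O(n^{1+1/k})$.

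For (i), I would first argue edge-by-edge: for every $(u,v) \in E$, the algorithm guarantees $d_H(u,v) \le (2k-1)\cdot w(u,v) = (2k-1)\cdot d_G(u,v)$, because either the edge was added (making $d_H(u,v) \le w(u,v)$) or the distance in $H$ was already within this bound. The general bound for arbitrary pairs $u,v$ then follows by decomposing a shortest $u$–$v$ path in $G$ into its edges and summing up the per-edge stretch guarantee, invoking the triangle inequality.

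For (ii), the key structural fact is that $H$ has \emph{girth} at least $2k+1$. Indeed, suppose for contradiction that $H$ contains a cycle $C$ of length at most $2k$, and let $e=(u,v)$ be the edge of $C$ added \emph{last} by the greedy procedure. Since the edges are processed in non-decreasing order of weight, every other edge of $C$ has weight at most $w(e)$ and was already present in $H$ when $e$ was considered. Thus the remaining $|C|-1 \le 2k-1$ edges of $C$ form a $u$–$v$ path in $H$ of total length at most $(2k-1)\, w(e)$, violating the greedy criterion that caused $e$ to be added.

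Finally, I invoke the Moore bound: any $n$-vertex graph of girth $\ge 2k+1$ has at most $O(n^{1+1/k})$ edges. The main (and really only) obstacle here is this extremal step, which is standard and follows from a BFS/ball-counting argument—expanding a BFS tree of depth $k$ from any vertex of minimum degree $\delta$ must produce $1+\delta+\delta(\delta-1)+\cdots+\delta(\delta-1)^{k-1}$ distinct vertices (otherwise a short cycle forms), forcing $\delta \le O(n^{1/k})$ and hence $|E(H)| \le O(n^{1+1/k})$ via an iterated minimum-degree argument. Combining the spanner guarantee of (i) with this edge bound yields the theorem.
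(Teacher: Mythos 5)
Your proposal is correct and is exactly the argument the paper attributes to Alth\"ofer et al.: the greedy $(2k-1)$-spanner has girth at least $2k+1$ (last-edge-on-a-short-cycle contradiction), and the Moore bound caps the number of edges of any $n$-vertex graph of girth $\geq 2k+1$ at $O(n^{1+1/k})$. All three steps (per-edge stretch plus summing along shortest paths, the girth claim, and the extremal edge count) are sound as you sketch them.
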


%This theorem is known to be tight if we assume the Erd\H{o}s girth conjecture~\cite{Erdos-girth}, a classical conjecture in extremal graph theory.  

While understanding the tradeoff between the size and the stretch was a seminal achievement, for many applications (particularly in distributed computing) we care not just about the size, but also about the \emph{maximum degree}. 
%There are many reasons for this, but as a simple example, if the graph and its spanner model a distributed system, the degree of a node corresponds to the load placed on it: a high-degree node must be able to process information much more quickly than a low-degree node.  %Similarly, if the graph corresponds to a network, high-degree nodes require much faster and more powerful switches.  So we would prefer to use only low-degree nodes, i.e., we would like to minimize the maximum degree.  
Unfortunately, unlike the size, there is no possible tradeoff between the stretch and the maximum degree.  This is trivial to see: if $G$ is a star,  then the only spanner of $G$ with non-infinite stretch has maximum degree of $n-1$.  In general, if $G$ has maximum degree $\Delta$, then all we can say is the trivial fact that $G$ has a spanner with maximum degree at most $\Delta$.  Nevertheless, given the importance of the maximum degree objective, there has been significant work on building spanners that minimize the maximum degree from the perspective of \emph{approximation algorithms}~\cite{KP98,CDK12,CD14}.  From this perspective, we are given a graph $G$ and stretch value $t$ and are asked to find the ``best'' $t$-spanner of $G$ (where ``best'' means minimizing the maximum degree).  
%We note that this approximation algorithms view of spanners has been used and studied even in the more traditional setting where ``best'' is measured by size~\cite{DK11-stoc,BBMRY13,DZ16}.  

While this has been an interesting and productive line of research, clearly there are problems with the maximum degree objective as well.  For example, if it is unavoidable for there to be some node of large degree $d$, the maximum degree objective allows us to make every other vertex also of degree $d$, with no change in the objective function.  But clearly we would prefer to have fewer high-degree nodes if possible!

So we are left with a natural question: can we define a notion of ``cost'' of a spanner which discourages very high degree nodes, but if there are high degree nodes, still encourages the rest of the nodes to have small degree?  There is of course an obvious candidate for such a cost function: the $\ell_p$ norm of the degree vector.  That is, given a spanner $H$, we can define $\|H\|_p$ to be the $\ell_p$-norm of the $n$-dimensional vector in which the coordinate corresponding to a node $v$ contains the degree of $v$ in $H$.  Then $\|H\|_1$ is just (twice) the total number of edges, and $\|H\|_{\infty}$ is precisely the maximum degree.  Thus the $\ell_p$-norm is an interpolation between these two classical objectives.  Moreover, for $1 < p < \infty$, this notion of cost has precisely the properties that we want: it encourages low-degree nodes rather than high-degree nodes, but if high-degree nodes are unavoidable it still encourages the rest of the nodes to be as low-degree as possible.  These properties, of interpolating between the average and the maximum, are why the $\ell_p$-norm has appeared as a popular objective for a variety of problems, ranging from clustering (the famous $k$-means problem~\cite{kmeans1,kmeans2}), to scheduling~\cite{scheduling1,scheduling2,scheduling3}, to covering~\cite{covering1}.

\subsection{Our Results and Techniques} \label{sec:results}

In this paper we initiate the study of graph spanners under the $\ell_p$-norm objective.  We prove a variety of results, giving upper bounds, lower bounds, and approximation guarantees.  Our main result is the analog of Theorem~\ref{thm:ADDJS} for the $\ell_p$-norm objective, but we also characterize universal lower bounds as part of an effort to understand the generic approximation ratio for the related optimization problem.  We also show that in some ways the $\ell_p$-norm can behave fundamentally differently than the traditional $\ell_1$ or $\ell_{\infty}$ norms, by proving that the greedy algorithm can have an approximation ratio that is \emph{strictly better} than the generic guarantee, unlike the $\ell_1$ or $\ell_{\infty}$ settings.

\subsubsection{Upper Bound}

We begin by proving our main result: a universal upper bound (the analog of Theorem~\ref{thm:ADDJS}) for $\ell_p$-norm spanners.  Recall the classical greedy algorithm for constructing a $t$-spanner $H$ of a graph $G = (V, E)$. Consider the edges in nondecreasing order of edge length, and when considering edge $\{u,v\}$, add it to $H$ if currently $d_H(u,v) > t \cdot d_G(u,v)$.  We call $H$ the \emph{greedy $t$-spanner} of $G$.  It is trivial to show that the greedy $t$-spanner has girth at least $t+2$.  This is the algorithm that was used to prove Theorem~\ref{thm:ADDJS}, and it has since received extensive study (see, e.g., \cite{FS16,CDNS92}) and will form the basis of our upper bound:   

\begin{theorem} \label{thm:upper-main}
Let $k\geq 1$ be an integer, let $G = (V, E)$ be a graph (possibly with lengths on the edges), and let $H = (V, E_H)$ be the greedy $(2k-1)$-spanner of $G$.  Then $\lp{H} \leq \max\left(O(n), O\left(n^{\frac{k+p}{kp}}\right)\right)$ for all $p \geq 1$.
\end{theorem}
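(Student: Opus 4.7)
\emph{Proof plan.} The plan is to combine the high-girth structure of the greedy spanner with a suite of moment bounds on the degree sequence. First I would observe that the greedy $(2k-1)$-spanner $H$ has girth at least $2k+1$: any cycle $C \subseteq H$ of length $\ell \leq 2k$ would force the last-added edge $\{x,y\} \in C$ to have been added at a moment when $d_{H}(x,y) \leq \ell - 1 \leq 2k-1$ via the other edges of $C$, contradicting the greedy rule. Applying Theorem~\ref{thm:ADDJS} to $H$ itself then yields $\sum_v d_v = 2|E(H)| = O(n^{1+1/k})$, which is the statement for $p=1$.

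Next I would exploit the $C_4$-freeness of $H$ (implied by girth $\geq 2k+1 \geq 5$ when $k \geq 2$), which says that any two distinct vertices share at most one common neighbor. Counting ordered $s$-tuples of common neighbors, for each integer $s \geq 2$,
\[
\sum_v d_v^s \;=\; \sum_{(u_1,\dots,u_s)\in V^s} |N(u_1)\cap\dots\cap N(u_s)| \;\leq\; O_s(n^s),
\]
since intersections over tuples with distinct entries have size at most $1$ and repeated-index contributions reduce to lower-order sums. In particular $\sum_v d_v^2 = O(n^2)$, so by monotonicity of the $\ell_p$-norm in $p$ the case $p \geq 2$ is immediate: $\|H\|_p \leq \|H\|_2 \leq O(n)$, which matches $\max(O(n), O(n^{(k+p)/(kp)})) = O(n)$ since $(k+p)/(kp) \leq 1$ whenever $p \geq k/(k-1)$.

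For $1 \leq p < 2$ I would use a threshold integration. Writing $m(T) = |\{v : d_v \geq T\}|$ and
\[
\sum_v d_v^p \;=\; p \int_0^{\infty} T^{p-1} m(T)\,dT,
\]
Markov applied to each moment bound gives $m(T) \leq \min\bigl(n,\; O(n^{1+1/k}/T),\; O_s(n^s/T^s)\bigr)$ for every integer $s \geq 2$. Identifying the binding bound on each interval of $T$ (the transitions occur around $T = n^{1/k}$ and $T = n^{(k-1)/k}$) and integrating piecewise should yield $\sum_v d_v^p \leq O(\max(n^p, n^{1+p/k}))$, i.e.\ $\|H\|_p \leq O(\max(n, n^{(k+p)/(kp)}))$.

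The main obstacle I anticipate is tightness of this integration. Using only the Alth\"ofer bound $m(T) \leq O(n^{1+1/k}/T)$ together with the trivial $m(T) \leq n$ one obtains only the weaker $\sum d_v^p \leq O(n^{p+1/k})$, losing a factor of $n^{(p-1)(k-1)/(pk)}$ in the $\ell_p$-norm relative to the claim. Closing this gap requires the higher-moment bounds above (in particular the $s=2$ bound, which forces $m(T) \leq O(n^2/T^2)$ in the upper tail and saturates the Moore-type extremal profile), and may additionally need the induced-subgraph form $e(H[S]) \leq O(|S|^{1+1/k})$ for every $S \subseteq V$---a direct consequence of the girth lower bound on $H[S]$---to rule out degree profiles that satisfy the crude moment constraints but are not realisable in a girth-$(2k+1)$ graph.
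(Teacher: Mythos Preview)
Your plan is correct for $k=2$ but has a genuine gap for $k\geq 3$. The moment bounds you derive, $\sum_v d_v^s \leq O_s(n^s)$ for integer $s\geq 2$, use only $C_4$-freeness and therefore cannot distinguish girth $5$ from girth $2k+1$. Carrying out your threshold integration with the three constraints $m(T)\leq n$, $m(T)\leq O(n^{1+1/k}/T)$, and $m(T)\leq O(n^2/T^2)$ (the $s\geq 3$ moments give nothing new for $T\leq n$), the region $T\in[n^{1/k},n^{(k-1)/k}]$ contributes
\[
n^{1+1/k}\int_{n^{1/k}}^{n^{(k-1)/k}} T^{p-2}\,dT \;\asymp\; n^{1+1/k}\cdot n^{(p-1)(k-1)/k} \;=\; n^{(2+p(k-1))/k},
\]
and $(2+p(k-1))/k > (k+p)/k$ whenever $k>2$ and $p>1$. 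Concretely, the degree profile with $n^{2/k}$ vertices of degree $n^{(k-1)/k}$ and the remaining $\approx n$ vertices of degree $n^{1/k}$ satisfies both $\sum d_v = O(n^{1+1/k})$ and $\sum d_v^2 = O(n^2)$, and also your induced-subgraph constraint $e(H[S])\leq O(|S|^{1+1/k})$, yet has $\sum d_v^{k/(k-1)} = \Theta(n^{(k+2)/k}) \gg n^{k/(k-1)}$. So no amount of degree-only information closes the gap.

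What the paper does instead is work with the quantities $d_j(v)=|N_j(v)|$ for $j$ up to $k-1$, which encode the full girth-$(2k+1)$ structure. After reducing via H\"older to the single exponent $p=k/(k-1)$, it decomposes $V$ into $V_{low}$ (small $d_1$), $V_{med}$ (where $d_{k-1}(v)$ is large relative to $d_1(v)$), and a family $V_{high,j}$ indexed by a distance scale $j$ at which the expansion ratio $d_{k-2j-1}(v)/d_{k-2j-3}(v)$ is small. The $V_{med}$ bound rests on the identity $\sum_v d_1(v)\,d_{k-1}(v)\leq n^2$ (which for $k=2$ is exactly your second-moment bound, explaining why your argument succeeds there), while the $V_{high,j}$ bound requires a further structural lemma controlling $\sum_w\sum_{v\in N(w)} d_{k-1}(v)/d_k(w)$. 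The missing idea in your proposal is precisely this passage from degree moments to expansion profiles $d_j(v)$.
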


In other words, if $p \geq \frac{k}{k-1}$ then our upper bound is $O(n)$, and otherwise it is $O\left(n^{\frac{k+p}{kp}}\right)$.  Clearly this interpolates between $p=1$ and $p = \infty$: when $p=1$ this is the same bound as Theorem~\ref{thm:ADDJS}, while if $p = \infty$ this gives $O(n)$ which is the only possible bound in terms of $n$.  It is also straightforward to prove that this bound is tight if we again assume the Erd\H{o}s girth conjecture~\cite{Erdos-girth}; for completeness, we do this in Appendix~\ref{app:UB-tight}. 

The proof of Theorem~\ref{thm:ADDJS} from~\cite{ADDJS93} is relatively simple: the greedy $(2k-1)$-spanner has girth at least $2k+1$, and any graph with more than $n^{1+1/k}$ edges must have a cycle of length at most $2k$.  Generalizing this to the $\ell_p$-norm is significantly more complicated, since it is not nearly as easy to show a relationship between the girth and the $\ell_p$-norm.  But this is precisely what we do.  

It turns out to be easiest to prove Theorem~\ref{thm:upper-main} for stretch $3$: it just takes one more step beyond~\cite{ADDJS93} to split the vertices of the high-girth graph (the spanner) into ``low'' and ``high'' degrees, and show that each vertex set does not contribute too much to the $\ell_p$ norm.  However for larger stretch values this approach does not work: the main lemma used for stretch $3$ (Lemma~\ref{lem:stretch3-large}) is simply false when generalized to larger stretch bounds.  Instead, we need a much more involved decomposition into ``low'', ``medium'', and ``high''-degree nodes.  This decomposition is very subtle, since the categories are not purely about the degree, but rather about how the degree relates to expansion at some particular distances from the node.  We also need to further decompose the ``high''-degree nodes into sets determined by which distance level we consider the expansion of.  We then separately bound the contribution to the $p$-norm of each class in the decomposition; for ``low''-degree nodes this is quite straightforward, %simple since our notion of ``low'' is quite simple,
but for medium and high-degree nodes this requires some subtle arguments which strongly use the structure of large-girth graphs.  

\subsubsection{Universal Lower Bounds}

To motivate our next set of results, consider the optimization problem of finding the ``best'' $t$-spanner of a given input graph.  When ``best'' is the smallest $\ell_1$-norm this is known as the \textsc{Basic $t$-Spanner} problem~\cite{DK11-stoc,BBMRY13,DKR16,DZ16}, and when ``best'' is the smallest $\ell_{\infty}$-norm this is the \textsc{Lowest-Degree $t$-Spanner} problem~\cite{KP98,CDK12,CD14}.  It is natural to consider this problem for the $\ell_p$-norm as well.  It is also natural to consider how well the greedy algorithm (used to prove the upper bound of Theorem~\ref{thm:upper-main}) performs as an approximation algorithm.   
%i.e., the $\ell_p$-norm of the greedy spanner with respect to the optimum of the particular input graph rather than just as a function of $n$. 

To see an obvious way of analyzing the greedy algorithm as an approximation algorithm, consider the $\ell_1$-norm.  Theorem~\ref{thm:ADDJS} implies that the greedy algorithm always returns a spanner of size at most $O(n^{1+1/k})$, while clearly \emph{every} spanner must have size at least $\Omega(n)$ (assuming that the input graph is connected).  Thus we immediately get that the greedy algorithm is an $O(n^{1/k})$-approximation.  By dividing a universal upper bound (an upper bound on the size of the greedy spanner that holds for every graph) by a universal lower bound (a lower bound on the size of \emph{every} spanner in every graph), we can bound the approximation ratio in a way that is generic, i.e., that is essentially independent of the actual graph.

Now consider the $\ell_{\infty}$-norm.  The generic approach seems to break down here: the universal upper bound is only $\Theta(n)$ (as shown by the star graph), while the universal lower bound is only $\Theta(1)$ (as shown by the path).  So it seems like the generic guarantee is just the trivial $\Theta(n)$.  But this is just because $n$ is the wrong parameter in this setting: the correct parameterization is based on $\Delta$, the maximum degree of $G$ (i.e., $\Delta=\|G\|_\infty$). With respect to $\Delta$, the greedy algorithm (or any algorithm) returns a spanner with maximum degree at most $\Delta$, while \emph{any} $t$-spanner of a graph with maximum degree $\Delta$ must have maximum degree at least $\Omega(\Delta^{1/t})$ (assuming the graph is unweighted).  So there is still a ``generic'' guarantee which implies that the greedy algorithm is an $O(\Delta^{1-1/t}) \leq O(n^{1-1/t})$-approximation.

This suggests that for $1 < p < \infty$, we will need to parameterize by \emph{both} the number of nodes $n$ and the $\ell_p$-norm $\Lambda$ of $G$.  We can define both universal upper bounds and universal lower bounds with respect to this dual parameterization: 

\begin{align*}
\mathrm{UB}_t^p(n,\Lambda)&=\max_{\substack{G=(V, E): |V|=n, \|G\|_p = \Lambda,\\ \text{and $G$ is connected}}} \;\min_{H:\text{ $H$ is a $t$-spanner of $G$}}\|H\|_p\\
%\mathrm{UB}_t^p(n,\Lambda)&=\max_{\substack{G:\text{ $G$ is an $n$-vertex graph} \\ \text{with no isolated vertices, and}\\\|G\|_2=\Lambda}}\min_{H:\text{ $H$ is a $t$-spanner of $G$}}\|H\|_p\\
\mathrm{LB}_t^p(n,\Lambda)&=\min_{\substack{G=(V, E): |V|=n, \|G\|_p = \Lambda,\\ \text{and $G$ is connected}}} \;\min_{H:\text{ $H$ is a $t$-spanner of $G$}}\|H\|_p 
\end{align*}

With this notation, we can define the generic guarantee $g_t^p(n, \Lambda) = \mathrm{UB}_{t}^p(n, \Lambda) / \mathrm{LB}_{t}^p(n, \Lambda)$, and if we want a guarantee purely in terms of $n$ we can define the generic guarantee $g_t^p(n) = \max_{\Lambda} g_t^p(n, \Lambda)$.  Our upper bound of Theorem~\ref{thm:upper-main} can then be restated as the claim that $$\mathrm{UB}_{2k-1}^p(n, \Lambda) \leq \min\left\{\Lambda, \max\left\{O(n), O(n^{\frac{k+p}{kp}})\right\}\right\}$$ for all $n, k, p, \Lambda$.  So in order to understand the  generic guarantees $g_{2k-1}^p(n, \Lambda)$ or $g_{2k-1}^p(n)$, we need to understand the universal lower bound quantity $\mathrm{LB}_{2k-1}^p(n,\Lambda)$. 

%We then study universal lower bounds: for parameters $n$ and $\Lambda$, we show that in \emph{every} graph $G$ with $n$ nodes and $\|G\|_p = \Lambda$, every $(2k-1)$-spanner $H$ must have relatively large $\|H\|_p$.  The exact definition of ``relatively large'' is extraordinarily complicated, involving the solution to a particular linear program.  This is in stark contrast to the classical $\ell_1$ and $\ell_{\infty}$ settings, where the equivalent universal lower bounds are very simple: $\Theta(n)$ and $\Lambda^{1/k}$ respectively.

Surprisingly, unlike the $\ell_1$ and $\ell_{\infty}$ cases, the universal lower bound for other values of $p$ is extremely complex.  Understanding its value, and understanding the structure of the extremal graphs which match the bound given by $\mathrm{LB}_t^p(n,\Lambda)$, are the most technically involved results in this paper.  However, while the analysis and even the exact formulation of the lower bound is quite complex, it turns out to be easily computable from a simple linear program:
\begin{theorem} There is an explicit linear program of size $O(t)$ which calculates $\mathrm{LB}_{t}^p(n, \Lambda)$ for any $t\in\nats, p\geq 1$. The bound given by the program is tight up to a factor of $\log(n)^{O(t)}$.
\label{thm:lower-meta}
\end{theorem}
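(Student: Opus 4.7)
The plan is to identify a family of $O(t)$ canonical ``lower-bound gadgets'' whose disjoint unions serve as the extremal graphs for $\mathrm{LB}_t^p(n,\Lambda)$ up to $\log(n)^{O(t)}$ factors, and to express the minimum spanner norm over such unions as a small linear program. For each distance scale $i\in\{1,\ldots,t\}$, I would fix a gadget $\Gamma_i$ (e.g.\ a bipartite Moore-type graph with a specific high-low degree pattern, essentially the graphs supplied by the Erd\H{o}s girth conjecture) whose minimum-$\ell_p$-norm $t$-spanner has a known closed-form value in the gadget's parameters. The LP would have one variable per scale counting the number of copies of the corresponding gadget, together with auxiliary variables for dyadically discretized degrees, linear constraints matching the total vertex count to $n$ and the total $\ell_p$-norm contribution to $\Lambda^p$, and an objective summing the forced spanner contribution per scale. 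This naturally yields $O(t)$ variables and $O(t)$ constraints.

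For the upper-bound direction (LP value $\geq \mathrm{LB}_t^p(n,\Lambda)$), I would take an optimal LP solution and realize it as a disjoint union of the $\Gamma_i$'s in the specified quantities, verifying directly that the resulting graph has $n$ vertices, $\ell_p$-norm $\Theta(\Lambda)$, and a $t$-spanner attaining the LP objective. Because each gadget's spanner $\ell_p$-norm is a closed-form function of its parameters, this direction reduces to bookkeeping after the gadgets are fixed.

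For the matching lower bound (LP value $\leq \mathrm{LB}_t^p(n,\Lambda)$), I would take any graph $G$ with $\|G\|_p = \Lambda$ and any $t$-spanner $H$, round all vertex degrees and ball sizes in both $G$ and $H$ to dyadic buckets, and partition vertices of $H$ by an ``expansion scale'' --- the smallest $i$ such that the ball of radius $i$ around a vertex in $H$ contains enough vertices to support that vertex's degree in $G$. This is analogous to the low/medium/high classification used in the proof of Theorem~\ref{thm:upper-main}, but refined into $O(t)$ classes, one per spanner distance. From this partition I would extract a feasible LP solution by setting the scale-$i$ variable to the total contribution of vertices classified at scale $i$; the value of this solution approximates $\|H\|_p$ within $\log(n)^{O(t)}$.

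The hard part will be the lower-bound direction. The decomposition must produce a \emph{simultaneously} feasible LP solution --- respecting the vertex, $\ell_p$-norm, and per-scale spanner constraints at the same time --- and the objective must approximately match $\|H\|_p$. Controlling the compounding of dyadic-rounding errors across $O(t)$ coupled scales is exactly what yields the $\log(n)^{O(t)}$ slack. A secondary challenge is showing that the gadgets $\Gamma_i$ really are extremal at their scale: no cleverer graph with the same $(n,\Lambda)$ parameters can outperform an optimal union of gadgets. Together with the gadget choice, the technical core of the argument is the per-scale spanner lower bound that links the LP constraints to genuine distance-preservation obstructions in $H$.
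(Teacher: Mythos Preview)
Your plan diverges from the paper's approach in a way that is not merely stylistic: the paper's LP does \emph{not} model a disjoint union of per-scale gadgets, and the extremal graphs are not such unions. Instead, the LP has variables $n_0,\ldots,n_t$ (layer sizes) and $d_1,\ldots,d_t$ (inter-layer degrees) describing a \emph{single} layered graph on $t+1$ layers, together with reachability variables $\Delta_i$ tracking how many vertices in layer $i$ are reachable from a vertex in layer $0$. The constraints encode that each bipartite slab contributes at most $n^\ell$ to the spanner norm, that paths of length $t$ reach $\Delta_t$ vertices, and that $n_0^{1/p}\Delta_t\geq n^\lambda$. The extremal graphs are the $(L,C,R)$-minimal spanners: one connected layered object with a decreasing section, a flat central section, and an increasing section, spanned by a biclique between the outer layers. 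There is no per-scale gadget family.

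Correspondingly, the lower-bound direction in the paper is not a vertex classification by ``expansion scale''. It is a path-following pruning: pick a degree bucket $B_{\hat j}$ in $G$ carrying a $1/\log n$ fraction of $\|G\|_p^p$, take for each $u\in B_{\hat j}$ and each $v\in N_G(u)$ a length-$t$ spanning path in $H$, and then iterate $i=t-1,\ldots,0$, at each step bucketing layer-$i$ vertices by their forward degree into $V_{i+1}$ and keeping the single bucket that preserves the most $p$-mass in the quantity $\sum_{v\in V_0}|N^t_{\hat H}(v)\cap N_G(v)|^p$. Each step loses at most a $\lceil\log n\rceil^p$ factor, yielding the $\log(n)^{O(t)}$ slack and a feasible LP solution $(\hat n_i,\hat d_i,\hat\Delta_i)$ whose objective is at most $\|H\|_p$ up to that slack.

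Your proposal has a genuine gap at exactly this point. Classifying a vertex $v$ by ``the smallest $i$ such that $|B_H(v,i)|$ supports $d_G(v)$'' gives a partition of vertices, but it does not produce the layered, near-biregular structure that any LP of size $O(t)$ must exploit. The spanner constraint is a \emph{path} constraint (each $G$-edge is spanned by a length-$\leq t$ walk in $H$), and the contribution to $\|H\|_p$ comes from degrees along those paths, not from ball sizes at the endpoints. Without the path-following pruning you have no mechanism to turn an arbitrary $(G,H)$ into a feasible solution of a gadget-packing LP, and indeed the true extremal instances (the $(L,C,R)$-minimal spanners) are single layered objects that are not decomposable as disjoint unions of a fixed $O(t)$-size family of Moore-type gadgets. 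The girth conjecture is also irrelevant here; neither direction of the paper's argument uses it.
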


Our linear program and the proof of Theorem~\ref{thm:lower-meta} appear in Section~\ref{sec:LP}. In fact, our linear program not only calculates a lower bound on the $\ell_p$-norm of any $t$-spanner, it also gives the parameters which define an extremal graph of $\ell_p$-norm $\Lambda$ with a $t$-spanner whose $\ell_p$-norm matches this lower bound. While the structure of these extremal graphs is simple, the dependence of the parameters of these graphs on $t$ and $p$ is quite complex. Nevertheless, we give a complete explicit description of these graphs for every possible value of $p,t$. 

%\mdnote{Added this paragraph -- feel free to remove/change}
Interestingly, despite the fact that $\mathrm{LB}_{t}^p(n, \Lambda)$ is fundamentally a question of extremal graph theory  (although as discussed our motivation is the generic guarantee on approximation algorithms), our techniques are in some ways more related to approximation algorithms.  We give a linear program which computes the LB function, and we reason about it by explicitly constructing dual solutions.  This is, to the best of our knowledge, the first time that structural bounds on spanners (as opposed to approximation bounds) have been derived using linear programs.  Moreover, the structure of the extremal graphs is fundamentally related to a quantity which we call the \emph{$p$-log density} of the input graph.  This is a generalization of the notion of ``log-density'', which was introduced as the fundamental parameter when designing approximation algorithms for the \textsc{Densest $k$-Subgraph} (DkS) problem~\cite{BCCFV10}, and has since proved useful in many approximation settings (see, e.g., \cite{CDK12,CDM17,CMMV17,CM18}).

\subsubsection{Greedy Can Do Better Than The Generic Bound}

%Finally, we make an intriguing connection to approximation algorithms, which also motivates our study of universal lower bounds.  Consider the optimization problem of finding the ``best'' $t$-spanner of a given input graph.  When ``best'' is the $\ell_1$-norm this is known as the \textsc{Basic $t$-Spanner} problem~\cite{DK11-stoc,BBMRY13,DKR16,DZ16}, and when ``best'' is the $\ell_{\infty}$-norm this is the \textsc{Lowest-Degree $t$-Spanner} problem~\cite{KP98,CDK12,CD14}.  It is natural to consider this problem for the $\ell_p$-norm as well.  Note that the existential upper bound divided by the universal lower bound is a ``generic'' approximation algorithm: simply by using the existential upper bound on the input graph, we are guaranteed to achieve this generic approximation ratio since the upper bound implies that there is some spanner of the specified cost, and the lower bound implies that \emph{every} spanner has at least some cost.  This approximation ratio is generic in the sense that the upper bound holds for all graphs (not just the given input graph), and lower bound also holds for all graphs (not just the given input graph). 

%\ecnote{Aren't both guarantees related to the greedy algorithm? Maybe ``per-instance" beats generic?}

As discussed, when $p=1$ or $p=\infty$, the approximation ratio of the greedy algorithm can be bounded by the generic guarantee.  But it turns out that the connection is actually even closer: when $p=1$ and $p=\infty$, for every $n$ and $\Lambda$ the approximation ratio of the greedy algorithm is \emph{equal} to the generic guarantee $g_{2k-1}^p(n, \Lambda)$.  In other words, greedy is no better than generic in the traditional settings (we prove this for completeness, but it is essentially folklore).  In fact, for the $\ell_1$ objective, giving \emph{any} approximation algorithm which is better than the generic guarantee $g_{2k-1}^1(n)$ is a long-standing open problem~\cite{DZ16} which has only been accomplished for stretch $3$~\cite{BBMRY13} and stretch $4$~\cite{DZ16}, while for the $\ell_{\infty}$ objective such an improvement was only shown recently~\cite{CD14} (and not with the greedy algorithm).  

%We show that, surprisingly, for other values of $p$ the greedy algorithm can actually perform better than the generic guarantee!  This is, to the best of our knowledge, the first spanner optimization problem for which the greedy algorithm has been shown to be better than generic.  Moreover, most recent progress on spanner approximations has been through a variety of convex relaxations~\cite{DK11-stoc,BBMRY13,DZ16,CDK12,CD14,ChlamtacDKL17}: this is one of the very few (and the only recent) spanner approximations which is purely combinatorial. 

%Thanks to Theorem~\ref{thm:upper-main} and \mdnote{lower bound equivalents}, we have a good understanding of the generic guarantees for $\ell_p$-norm graph spanners.  For the traditional $\ell_1$-norm objective, these generic guarantees are very difficult to beat: sophisticated LP rounding algorithms can be used for stretch $3$ and $4$, but nothing is known for larger stretch values, and even for stretch $3$ and $4$ we know that the greedy algorithm does not perform better than the generic guarantee (in the worst case).  Similarly, for the $\ell_{\infty}$ objective, the generic guarantee of $\Lambda^{1-1/t}$ can only be beat through an LP rounding scheme with an even more complicated analysis~\cite{CD14}.  But what about for intermediate values of $p$?

We show that, at least in some regimes of interest, $\ell_p$-norm graph spanners exhibit fundamentally different behavior from $\ell_1$ and $\ell_{\infty}$: the greedy algorithm has approximation ratio which is \emph{better} than the generic guarantee, even though the universal upper bound is proved via the greedy algorithm!  In particular, we consider the regime of stretch $3$, $p=2$, and $\Lambda = \Theta(n)$.  This is a very natural regime, since $p=2$ is the most obvious and widely-studied norm other than $\ell_1$ and $\ell_{\infty}$, and stretch $3$ is the smallest value for which nontrivial sparsification can occur.  %Moreover, for this value of $t$ and $p$ we always have $n^{1/2} \leq \Lambda \leq n^{3/2}$, so the graphs we are considering are of ``intermediate'' cost.  

Our theorems about UB and LB imply that $g_3^2(n) = g_3^2(n,n) = \Theta(\sqrt{n})$.  But we show that in this setting (and in fact for any $\Lambda$ as long as $p=2$ and the stretch is $3$) the greedy algorithm is an $O(n^{63/128})$-approximation.  
%We do this by using the norm of the optimal solution to bound the number of paths of length $2$ in the greedy spanner, and then relating the number of paths of length $2$ to the $\ell_2$-norm of the spanner. 
%.  To do this, we first use the fact that the optimal spanner must have norm that is at most $O(n^{65/128})$ (or else we would already be finished by Theorem~\ref{thm:upper-main} and the universal lower bound) to argue that the $6$-neighborhood in the optimal spanner of any node must be relatively small.  This implies that the $2$-neighborhood of any node in the actual graph $G$ must be relatively small, and thus the $2$-neighborhood of any node in the greedy spanner must be small.  But now, since the greedy spanner has girth at least $5$, this implies that the total number of paths of length $2$ in the greedy spanner is pretty small.  This in turn implies that the $2$-norm of the greedy spanner is small, which is enough to prove the claimed approximation ratio.  
Thus we show that, unlike $\ell_1$ and $\ell_{\infty}$, for $p=2$ the greedy algorithm provides an approximation guarantee that is strictly better than the generic bound, both for specific values of $\Lambda$ and when considering the worst case $\Lambda$.  

\subsection{Outline}
We begin in Section~\ref{sec:definitions} with some basic definitions and preliminaries.  In order to illustrate the basic concepts in a simpler and more understandable setting, we then focus in Section~\ref{sec:stretch3} on the special case of stretch $3$: we prove the stretch-$3$ version of Theorem~\ref{thm:ADDJS} in Section~\ref{sec:stretch3-upper}, and then show that the greedy algorithm has approximation ratio better than the generic guarantee in Section~\ref{sec:greedy-generic}.  We then prove our upper and lower bounds in full generality: the upper bound (i.e., the proof of Theorem~\ref{thm:upper-main}) in Section~\ref{sec:upper}, and then our universal lower bound in Section~\ref{sec:lower-bound-main}.  Due to space constraints, all missing proofs can be found in the appendices.

\section{Definitions and Preliminaries} \label{sec:definitions}

Let $G = (V, E)$ be a graph, possibly with lengths on the edges.  For any vertex $u \in V$, we let $d(u)$ denote the degree of $u$ and let $N(u)$ denote the neighbors of $u$.  We will also generalize this notation slightly by letting $N_i(u)$ denote the set of vertices that are exactly $i$ hops away from $u$ (i.e., their distance from $u$ if we ignore lengths is exactly $i$), and we let $d_i(u) = |N_i(u)|$.  Note that by definition, $N_0(u) = \{u\}$ and $d_0(u) = 1$ for all $u \in V$.  We will sometimes use $B(v, r) = \cup_{i=0}^i N_i(v)$ to denote the ball around $v$ of radius $r$.

We let $d_G : V \times V \rightarrow \mathbb{R}_{\geq 0}$ denote the shortest-path distances in $G$.  A subgraph $H = (V, E_H)$ of a graph $G = (V, E)$ is a \emph{$t$-spanner} of $G$ if $d_H(u,v) \leq t \cdot d_G(u,v)$ for all $u,v \in E$.  Recall that $\|\vec{x}\|_p = \left(\sum_{i=1}^n x_i^p\right)^{1/p}$ for any $p \geq 1$ and $\vec{x} \in \mathbb{R}^n$.   To measure the ``cost'' of a spanner, for any graph $G = (V, E)$, let $\vec{d_G}$ denote the vector of degrees in $G$ and for any $p \geq 1$, let $\lp{G} = \lp{\vec{d_G}}$.  For any subset $S \subseteq V$, we let $\|S\|_p$ denote the $\ell_p$ norm of the vector obtained from $\vec{d_G}$ by removing the coordinate of every node not in $S$ (note that we do not remove the nodes from the graph, i.e., $\|S\|_p$ is the norm of the degrees in $G$ of the nodes in $S$, not in the subgraph induced by $S$).

\section{Warmup: Stretch $3$} \label{sec:stretch3}
We begin by analyzing the special case of stretch $3$, particularly for the $\ell_2$-norm.  More specifically, we will focus on bounding $\mathrm{UB}_3^p(n,\Lambda)$.  This is one of the simplest cases, but demonstrates (at a very high level) the outlines of our upper bound.  Moreover, in this particular case we can prove that the greedy algorithm performs better than the generic guarantee, showing a fundamental difference between the $\ell_2$ norm and the more traditional $\ell_1$ and $\ell_{\infty}$ norms.  

\subsection{Upper Bound} \label{sec:stretch3-upper}
Recall that \emph{greedy spanner} is the spanner obtained from the obvious greedy algorithm: starting with an empty graph as the spanner, consider the edges one at a time in nondecreasing length order, and add an edge if the current spanner does not span it (within the given stretch requirement).  It is obvious that when run with stretch parameter $t$ this algorithm does indeed return a $t$-spanner, and moreover it will return a $t$-spanner that has girth at least $t+2$ (if there is a $(t+1)$-cycle then the algorithm would not have added the final edge).  

Our main goal in this section will be to prove the following theorem

\begin{theorem} \label{thm:upper-3}
Let $G = (V, E)$ be a graph and let $H = (V, E_H)$ be the greedy $3$-spanner of $G$.  Then $\lp{H} \leq \max(O(n), O(n^{(2+p)/(2p)}))$ for all $p \geq 1$.
\end{theorem}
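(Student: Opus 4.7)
The plan is to exploit the fact that the greedy $3$-spanner $H$ has girth at least $5$, and to combine the classical sparsity bound with a new local second-neighborhood bound. Girth $5$ immediately implies $|E(H)| = O(n^{3/2})$ (the Alth\"ofer et al.\ bound), so $\sum_v d(v) = O(n^{3/2})$. It also yields a stronger local bound: since no triangles or $4$-cycles exist in $H$, for any vertex $v$ every $u \in N(v)$ contributes $d(u)-1$ pairwise-distinct vertices to $N_2(v)$, so
\[
n \;\geq\; |B(v,2)| \;=\; 1 + d(v) + \sum_{u \in N(v)}(d(u)-1) \;=\; 1 + \sum_{u \in N(v)} d(u).
\]
Summing over $v$ and using the identity $\sum_v \sum_{u \in N(v)} d(u) = \sum_u d(u)^2$ yields $\sum_u d(u)^2 \leq n^2$, i.e., $\|H\|_2 \leq n$.

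With these two ingredients, I would split the vertices at a threshold $T$ into $L = \{v : d(v) \leq T\}$ and $S = \{v : d(v) > T\}$. For $p \in [1,2]$, I bound the two parts separately: for $L$, factor out the maximum and use the total degree sum to get $\sum_{v \in L} d(v)^p \leq T^{p-1}\sum_v d(v) = O(T^{p-1} n^{3/2})$; for $S$, factor out the minimum and use the degree-squared bound,
\[
\sum_{v \in S} d(v)^p \;=\; \sum_{v \in S} d(v)^{p-2}\cdot d(v)^2 \;\leq\; T^{p-2} \sum_v d(v)^2 \;\leq\; T^{p-2} n^2,
\]
which is valid because $p-2 \leq 0$ forces $d(v)^{p-2} \leq T^{p-2}$ on $S$. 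Setting $T = \Theta(\sqrt{n})$ balances these at $O(n^{(p+2)/2})$, and taking the $p$-th root yields $\|H\|_p = O(n^{(p+2)/(2p)})$, matching the theorem in this regime.

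For $p \geq 2$ the target collapses to $O(n)$, and I would simply invoke the monotonicity of $\ell_p$-norms on finite-dimensional vectors ($\|x\|_q \leq \|x\|_p$ whenever $q \geq p \geq 1$) to conclude $\|H\|_p \leq \|H\|_2 \leq n$.

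The main technical step is really the local inequality $\sum_{u \in N(v)} d(u) \leq n-1$, which is where the girth-$5$ structure does its work and is the ``one more step beyond'' the standard argument that the introduction alludes to; once it is paired with $|E(H)| = O(n^{3/2})$, everything else is a fairly routine split-and-balance calculation. I expect no substantial obstacles beyond being careful to choose the correct direction (factoring out $T$ or $1/T$) when handling exponents of either sign.
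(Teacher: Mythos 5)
Your proof is correct, but it routes around the paper's key lemma in a genuinely different way. The paper's stretch-$3$ argument rests on Lemma~\ref{lem:stretch3-large}: in a girth-$\geq 5$ graph the vertices of degree at least $2\sqrt{n}$ have total degree $O(n)$, proved by a minimal-counterexample argument using the fact that two vertices share at most one common neighbor; the high-degree part is then bounded by $\|\cdot\|_p\leq\|\cdot\|_1\leq 2n$ and the low-degree part trivially by $|V|\cdot(2\sqrt n)^p$. You instead prove the global $\ell_2$ bound $\sum_v d(v)^2\leq n^2$ via the $2$-ball identity $|B(v,2)|=1+\sum_{u\in N(v)}d(u)$ (your justification that girth $5$ rules out coincidences among the second neighbors is right), and then interpolate between this and $\sum_v d(v)=O(n^{3/2})$ by splitting at $T=\sqrt n$ with the correct sign conventions on the exponents $p-1\geq 0$ and $p-2\leq 0$; monotonicity of norms handles $p\geq 2$. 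Neither key inequality implies the other, but both suffice. What your route buys: the $\ell_2$ bound is a one-line double count, the threshold argument is the standard $\ell_1$--$\ell_2$ interpolation, and your inequality $\sum_v d(v)^2\leq n^2$ is exactly the $k=2$ case of the bound $\sum_v d_1(v)d_{k-1}(v)\leq n^2$ that the paper itself uses in Lemma~\ref{lem:med} for general stretch (and the same $2$-path count reappears in the proof of Theorem~\ref{thm:greedy-approx}), so your argument is closer in spirit to the paper's general-stretch machinery. What the paper's route buys is the explicit structural statement of Lemma~\ref{lem:stretch3-large} about high-degree vertices, which the authors highlight precisely because it is the statement that fails to generalize to larger girth.
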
 

In other words, when $1 \leq p \leq 2$ the greedy $3$-spanner $H$ has $\lp{H} \leq O(n^{(2+p)/(2p)})$, and when $p \geq 2$ we get that that $\lp{H} \leq O(n)$.

To prove this theorem, we will use first show that nodes with ``large'' degree cannot be incident on too many edges in any graph of girth at least $5$ (like the greedy $3$-spanner).  This is the most important step, since for $p > 1$ the $p$-norm of a graph gives greater ``weight'' to nodes with larger degree.  

\begin{lemma} \label{lem:stretch3-large}
Let $G=(V,E)$ be a graph with girth at least 5. Then  $\sum_{v\in V:d(v)\geq 2\sqrt{n}}d(v)\leq 2n.$
\end{lemma}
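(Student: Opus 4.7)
The plan is to double-count paths of length two whose endpoints both lie in the high-degree set $S = \{v \in V : d(v) \geq 2\sqrt{n}\}$, exploiting the fact that girth at least $5$ forbids $C_4$ and therefore forces any two distinct vertices to share at most one common neighbor. Write $s = |S|$ and $m_S = \sum_{v \in S} d(v)$. By definition of $S$, we immediately get the easy bound $s \leq m_S/(2\sqrt{n})$, so it remains to obtain a complementary upper bound on $m_S$ in terms of $s$.

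For every $w \in V$, let $d_S(w) = |N(w) \cap S|$. Since each neighbor of a vertex $v \in S$ contributes $1$ to $d_S$ somewhere, we have $\sum_{w \in V} d_S(w) = m_S$, and by Cauchy--Schwarz $\sum_w d_S(w)^2 \geq m_S^2/n$. On the other hand, $\sum_w d_S(w)(d_S(w)-1)$ equals the number of ordered pairs $(v_1,v_2)$ of distinct vertices of $S$ sharing a common neighbor $w$, and the $C_4$-freeness guarantees that each such ordered pair is realized by at most one $w$. Hence $\sum_w d_S(w)(d_S(w)-1) \leq s(s-1) \leq s^2$.

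Combining these two estimates with the size bound $s \leq m_S/(2\sqrt{n})$ gives
\[
\frac{m_S^2}{n} - m_S \;\leq\; \sum_w d_S(w)^2 - \sum_w d_S(w) \;\leq\; s^2 \;\leq\; \frac{m_S^2}{4n},
\]
from which $\tfrac{3}{4}\cdot m_S^2/n \leq m_S$ and thus $m_S \leq 4n/3 \leq 2n$, as desired.

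The main step to get right is the path-counting application of the girth hypothesis: one has to be careful that we are counting pairs whose \emph{endpoints} are in $S$ (not necessarily the middle vertex $w$), so the relevant extremal quantity is $s^2$ rather than something depending on $m_S$ directly. Everything else is a routine use of Cauchy--Schwarz together with the definitional lower bound $d(v) \geq 2\sqrt{n}$ on elements of $S$; the factor $2$ in the threshold is precisely what makes the Cauchy--Schwarz term dominate the $s^2$ term so that the inequality can be closed.
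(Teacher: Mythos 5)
Your proof is correct, and it takes a genuinely different route from the paper's. The paper argues by contradiction with a minimal violating set $\{v_1,\dots,v_{\ell+1}\}$: it bounds $\ell\le\sqrt n$ from the degree threshold, uses the at-most-one-common-neighbor property to show each $v_j$ contributes at least $d(v_j)-\ell\ge d(v_j)/2$ \emph{new} neighbors, and then overflows the vertex count $n$. You instead run the K\H{o}v\'ari--S\'os--Tur\'an-style global double count: writing $d_S(w)=|N(w)\cap S|$, you lower-bound $\sum_w d_S(w)^2$ by Cauchy--Schwarz as $m_S^2/n$ and upper-bound $\sum_w d_S(w)(d_S(w)-1)$ by $s^2$ using $C_4$-freeness (each ordered pair of distinct vertices of $S$ has at most one common neighbor), then close the loop with $s\le m_S/(2\sqrt n)$. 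All three estimates check out, and the final algebra gives $m_S\le 4n/3$, marginally sharper than the paper's $2n$. The trade-off: the paper's argument is more elementary (no Cauchy--Schwarz, just a union bound on distinct neighbors), while yours is more systematic and makes transparent exactly where the threshold $2\sqrt n$ enters --- it is what makes the $m_S^2/n$ term dominate the $s^2$ term. Both proofs ultimately rest on the same structural fact about girth-$5$ graphs, namely that distinct vertices share at most one common neighbor.
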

\begin{proof} 
Suppose for the sake of contradiction that these vertices have total degree greater than $2n$, and let $\{v_1,\ldots,v_{\ell+1}\}$ be a minimal set with this property. That is, all these vertices have degree at least $2\sqrt{n}$, and furthermore $\sum_{i=1}^{\ell}d(v_i)\leq 2n < \sum_{i=1}^{\ell+1}d(v)$.

Because $G$ has girth at least 5, any two vertices $v_i,v_j$ in this set have at most one common neighbor. That is, $|N(v_i)\cap N(v_j)|\leq 1$. Thus, for every $j\in[\ell+1]$, the number of ``new'' neighbors contributed by $N(v_j)$ is $\left|N(v_j)\setminus\left(\bigcup_{i=1}^{j-1}N(V_i)\right)\right|\geq |N(v_j)|-\sum_{i=1}^{j-1}|N(v_i)\cap N(v_j)|\geq d(v_j)-(j-1)\geq d(v_j)-\ell$.

On the other hand, we have $2n\geq\sum_{i=1}^\ell d(v_i)\geq \ell\cdot 2\sqrt{n}$, and so we have $\ell\leq \sqrt{n}$. Thus, every $v_j$ contributes at least $d(v_j)-\ell\geq d(v_j)-\sqrt{n}\geq d(v_j)/2$ new neighbors, and so we get $n\geq\left|\bigcup_{j=1}^{\ell+1}N(v_j)\right|=\sum_{j=1}^{\ell+1}\left|N(v_j)\setminus\left(\bigcup_{i=1}^{j-1}N(v_i)\right)\right|\geq\sum_{j=1}^{\ell+1}d(v_j)/2$, which contradicts our assumption that $\sum_{j=1}^{\ell+1}d(v_j)>2n$.
\end{proof}

We can now prove Theorem~\ref{thm:upper-3}.

\begin{proof}[Proof of Theorem~\ref{thm:upper-3}]
Let $V_{low} = \{v \in V : d(v) \leq 2\sqrt{n}\}$, and let $V_{high} = \{v \in V : d(v) > 2\sqrt{n}\}$.  Since $H$ has girth at least $5$, we can apply Lemma~\ref{lem:stretch3-large}.  So using this lemma and standard algebraic inequalities, we get that
\begin{align*}
    \lp{H} &= \left(\sum_{v \in V_{low}} d(v)^p + \sum_{v \in V_{high}} d(v)^p \right)^{1/p}
    \leq \left(\sum_{v \in V_{low}} d(v)^p\right)^{1/p} + \left(\sum_{v \in V_{high}} d(v)^p\right)^{1/p} \\
    &\leq \left(\sum_{v \in V_{low}} (2\sqrt{n})^p\right)^{1/p} + \sum_{v \in V_{high}} d(v)
    \leq \left(n \cdot 2 n^{p/2}\right)^{1/p} + \sum_{v \in V_{high}} d(v) 
    \leq n^{\frac{2+p}{2p}} + 2n,
\end{align*}
which implies the theorem.
\end{proof}
%Let us assume that $1 \leq p \leq 2$ for now.  Let $V_0 = \{v : d(v) < 2\sqrt{n}\}$, and for $i \geq 1$ let $V_i \subseteq V$ be the vertices with degree in $[2^i \sqrt{n}, 2^{i+1} \sqrt{n})$.  So by Lemma~\ref{lem:deg-5} we know that $|V_i| \leq 2n/ (2^i \sqrt{n}) = \sqrt{n} / 2^{i-1}$ for $i \geq 1$.  Then for some constant $c$ \mdnote{depends on $p$, though}, we get that
%\begin{align*}
%\lp{H} &\leq \left(|V_0| (2\sqrt{n})^p + \sum_{i=1}^{(1/2) \log n} |V_i| \left(2^{i+1} \sqrt{n}\right)^p\right)^{1/p} \\
%&\leq \left(n\left(2\sqrt{n}\right)^p + \sum_{i=1}^{(1/2)\log n} \frac{\sqrt{n}}{2^{i-1}} \left(2^{i+1} \sqrt{n}\right)^p\right)^{1/p}\\
%& = \left(2^p n^{\frac{2+p}{2}} + n^{\frac{p+1}{2}} \sum_{i=1}^{(1/2)\log n} 2^{i(p-1) + p +1} \right)^{1/p} \\
%&\leq \left(2^p n^{\frac{2+p}{2}} + n^{\frac{p+1}{2}} c 2^{(p-1)(1/2) \log n +p + 1}\right)^{1/p} \\
%&= \left( 2^p n^{\frac{2+p}{2}} + c n^{\frac{p+1}{2}} 2^{p+1} n^{(p-1) / 2} \right)^{1/p} \\
%&= \left(2^p n^{\frac{2+p}{2}} + c 2^{p+1} n^p\right)^{1/p} \\
%& = O\left(n^{\frac{2+p}{2p}}\right).
%\end{align*}

%So if $1 \leq p \leq 2$ we have proved the Theorem.  If $p > 2$, then by monotonicity of norms we get that $\lp{H} \leq \|H\|_2 = O(n)$, which implies the theorem.
%\end{proof}

%\subsubsection{Tightness of Analysis: lower bounding $\mathrm{UB}_3^p(n, \Lambda)$}

It is easy to show that the above bound is tight: for every $p\geq 1$ there are graphs in which every $3$-spanner has size at least $\max(\Omega(n), \Omega(n^{\frac{2+p}{2p}}))$.  In fact, we can generalize slightly to also account for different values of $\Lambda$.  Theorem~\ref{thm:upper-main} can be interpreted as claiming that $\mathrm{UB}_3^p(n, \Lambda) \leq O(\min(\max(n, n^{\frac{2+p}{2p}}), \Lambda))$.  
%  To see this, if $\Lambda \geq \max(n, n^{\frac{2+p}{p}})$ then Theorem~\ref{thm:upper-main} implies this directly.  On the other hand, if $\Lambda < \max(n, n^{\frac{2+p}{p}})$ then the graph $G$ can act as a $3$-spanner of itself with $\ell_p$-norm $\Lambda$.  
In Appendix~\ref{app:UB-tight} we show (Theorem~\ref{thm:UB-tight}) that this is tight: $\mathrm{UB}_3^p(n, \Lambda) \geq \Omega(\min(\max(n, n^{\frac{2+p}{2p}}), \Lambda))$ for all $p \geq 1$ and $\Omega(n^{1/p}) \leq \Lambda \leq O(n^{\frac{1+p}{p}})$.

\subsection{Greedy vs Generic} \label{sec:greedy-generic}
It is not hard to show that in the traditional settings in which spanners have been studied, the $\ell_1$ and $\ell_{\infty}$ norms, the greedy algorithm does no better than the generic guarantee, for all relevant parameter regimes.  In slightly more detail, for $\ell_{\infty}$ it is relatively easy to show that $\mathrm{UB}_{t}^{\infty}(n, \Lambda) = \Theta(\Lambda)$, while $\mathrm{LB}_{t}^{\infty}(n, \Lambda) = \Theta(\Lambda^{1/t})$.  Thus the generic guarantee $g_t^{\infty}(n, \Lambda) = \Theta(\Lambda^{1-\frac{1}{t}})$, and moreover we can build graphs in which the approximation ratio of the greedy algorithm is also $\Theta(\Lambda^{1-\frac{1}{t}})$.  Similarly, for the $\ell_1$-norm, classical results on spanners imply that $\mathrm{UB}_{2k-1}^{1}(n, \Lambda) = \Theta(\min(n^{1+\frac{1}{k}}), \Lambda))$ and $\mathrm{LB}_{2k-1}^{1}(n, \Lambda) = \Theta(n)$, so the generic guarantee is $g_{2k-1}^1(n, \Lambda) = \Theta(\min(n^{1+\frac{1}{k}}), \Lambda) / n)$ and there are graphs for all parameter regimes where this is the approximation ratio achieved by greedy.

We show that the behavior of the greedy spanner in intermediate $\ell_p$-norms is fundamentally different: in some parameter regimes of interest, greedy \emph{outperforms} the generic guarantee!

To demonstrate this, consider the regime of stretch $3$ with the $\ell_2$ norm and with $\Lambda = n$.  In this regime, the results of Section~\ref{sec:stretch3-upper} imply that $\mathrm{UB}_{3}^{2}(n, n) = \Theta(n)$.  On the other hand, our results on the universal lower bound from Section~\ref{sec:lower-bound-main} (Corollary~\ref{cor:LB-low-stretch} in particular) directly imply that $\mathrm{LB}_{3}^{2}(n, n) = \tilde \Theta(\sqrt{n})$.  Thus the generic guarantee is $g_3^2(n, n) = \tilde \Theta(\sqrt{n})$, and this is the worst case over $\Lambda$ and thus $g_3^2(n) = \tilde \Theta(\sqrt{n})$. However, we show that the greedy algorithm is a strictly better approximation, even without parameterizing by $\Lambda$.

\begin{theorem} \label{thm:greedy-approx}
The greedy algorithm is an $O(n^{63/128})$-approximation for the problem of computing $3$-spanner with smallest $\ell_2$-norm.  
\end{theorem}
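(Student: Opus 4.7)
The plan is to combine the girth of the greedy spanner $H$ with a multi-level Cauchy--Schwarz argument in the optimum to reduce cherries in $H$ to vertex pairs in $H^{*}$ at distance at most~$6$. Let $H^{*}$ be any optimal $3$-spanner of $G$ and set $\Lambda:=\|H^{*}\|_2$; the goal is to show $\|H\|_2\leq O(n^{63/128})\,\Lambda$.

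The starting point is that, being a greedy $3$-spanner, $H$ has girth at least $5$, so any two vertices have at most one common neighbor in $H$ (as in Lemma~\ref{lem:stretch3-large}). Distinct cherries (paths of length two) in $H$ therefore have distinct endpoint pairs, so
\[
\sum_{v\in V}\binom{d_H(v)}{2}\ \leq\ \#\{\{u,w\}\subseteq V:d_H(u,w)=2\}.
\]
Any such pair satisfies $d_G(u,w)\leq d_H(u,w)=2$ and hence $d_{H^{*}}(u,w)\leq 6$, since $H^{*}$ is a $3$-spanner. So the right-hand side is bounded by the number of vertex pairs at $H^{*}$-distance at most~$6$.

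The heart of the argument is bounding that count by iterated Cauchy--Schwarz in $H^{*}$. From
\[
|N_k^{*}(v)|\ \leq\ \sum_{u\in N_{k-1}^{*}(v)}d_{H^{*}}(u)\ \leq\ \sqrt{|N_{k-1}^{*}(v)|}\,\cdot\,\Lambda,
\]
starting from $|N_1^{*}(v)|=d_{H^{*}}(v)$ and iterating five times yields $|N_6^{*}(v)|\leq d_{H^{*}}(v)^{1/32}\Lambda^{31/16}$. Summing over $v$ and applying H\"older's inequality with exponents $(64,64/63)$ gives
\[
\sum_v |B_6^{*}(v)|\ \leq\ \Lambda^{31/16}\cdot\Bigl(\sum_v d_{H^{*}}(v)^2\Bigr)^{1/64}\!\cdot\,n^{63/64}\ =\ O\bigl(\Lambda^{63/32}\,n^{63/64}\bigr).
\]
Combining this with the cherry bound and with $\sum_v d_H(v)=2|E(H)|=O(n^{3/2})$ (from Theorem~\ref{thm:ADDJS}) gives $\|H\|_2^2\leq O(\Lambda^{63/32}n^{63/64}+n^{3/2})$, so $\|H\|_2\leq O(\Lambda^{63/64}n^{63/128}+n^{3/4})$.

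Dividing by $\Lambda$, the approximation ratio becomes $O(\Lambda^{-1/64}n^{63/128}+n^{3/4}/\Lambda)$. The first term is at most $n^{63/128}$ since $\Lambda\geq 1$. For the second term I would invoke the trivial lower bound $\Lambda\geq \Omega(\sqrt{n})$, valid for connected $G$ since $|E(H^{*})|\geq n-1$ and by the power-mean inequality $\|H^{*}\|_2\geq \|H^{*}\|_1/\sqrt{n}$; this yields $n^{3/4}/\Lambda\leq n^{1/4}\ll n^{63/128}$ (disconnected inputs are handled componentwise). The main obstacle I anticipate is keeping the six-level Cauchy--Schwarz/H\"older chain fully rigorous: the exponents must compose exactly to $63/32$ and $63/64$, and any slip propagates directly into the target approximation exponent~$63/128$. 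A subtler point worth double-checking is that the girth-$5$ reduction really maps distinct cherries to distinct endpoint pairs, not merely bounding a sum of binomials.
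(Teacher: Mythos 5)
Your proof is correct and takes essentially the same route as the paper's: both arguments double-count $2$-paths in the girth-$\geq 5$ greedy spanner, map their endpoint pairs (at $G$-distance $2$) into $6$-balls of $H^*$, and bound those balls by iterated Cauchy--Schwarz — your recursion $|N_k^{*}(v)|\le\sqrt{|N_{k-1}^{*}(v)|}\,\Lambda$ is precisely the inductive step of Lemma~\ref{lem:opt-neighborhood}. Your only deviations are bookkeeping refinements (keeping the $d_{H^*}(v)^{1/32}$ factor and finishing with H\"older rather than the uniform $n\cdot\Lambda^{63/32}$ bound, and separating out the $2|E(H)|$ term), which assemble the same exponent $63/128$.
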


To prove this, let $G = (V, E)$ be a graph with $|V| = n$, let $H$ be the greedy $3$-spanner of $G$, and let $H^*$ be the $3$-spanner of $G$ with minimum $\|H\|_2$.  Let $\alpha = \log_n \|H^*\|_2$, so $\|H^*\|_2 = n^{\alpha}$; note that $\alpha \geq 1/2$.  We first prove a lemma which uses $\|H^*\|_2$ to bound neighborhoods.

\begin{lemma} \label{lem:opt-neighborhood}
    $|B_{H^*}(v,r)| \leq n^{\left(2 -\frac{1}{2^{r-1}}\right)\alpha}$ for all $v \in V$ and $r \in \mathbb{N}$.
\end{lemma}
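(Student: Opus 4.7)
My plan is to prove this bound by induction on $r$, with the inductive step driven by a Cauchy--Schwarz argument that converts information about $\|H^*\|_2$ into control over the sum of degrees on any sphere around $v$ in $H^*$.

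For the base case $r=1$, I observe that the maximum degree in $H^*$ is at most $\|H^*\|_2 = n^\alpha$, since the $\ell_2$-norm of a vector dominates each coordinate. This directly yields $|B_{H^*}(v,1)| = 1 + d_{H^*}(v) \leq 1 + n^\alpha$, matching the target exponent $(2 - 1/2^0)\alpha = \alpha$ up to an additive constant.

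For the inductive step, every vertex in $N_r(v)$ must be adjacent in $H^*$ to some vertex of $N_{r-1}(v)$, so $|N_r(v)| \leq \sum_{u \in N_{r-1}(v)} d_{H^*}(u)$. By Cauchy--Schwarz applied to this sum,
\[
\sum_{u \in N_{r-1}(v)} d_{H^*}(u) \;\leq\; \sqrt{|N_{r-1}(v)|}\cdot\sqrt{\textstyle\sum_{u \in V} d_{H^*}(u)^2} \;=\; \sqrt{|N_{r-1}(v)|}\cdot n^\alpha \;\leq\; \sqrt{|B_{H^*}(v, r-1)|}\cdot n^\alpha.
\]
Substituting the inductive hypothesis $|B_{H^*}(v, r-1)| \leq n^{(2 - 1/2^{r-2})\alpha}$ yields $|N_r(v)| \leq n^{(1 - 1/2^{r-1})\alpha}\cdot n^\alpha = n^{(2 - 1/2^{r-1})\alpha}$, which also dominates $|B_{H^*}(v, r-1)|$ since $2 - 1/2^{r-2} < 2 - 1/2^{r-1}$. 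Combining $|B_{H^*}(v,r)| = |B_{H^*}(v,r-1)| + |N_r(v)|$ then gives the claimed bound.

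The intuition is that Cauchy--Schwarz lets us trade a factor of $\sqrt{|B_{H^*}(v, r-1)|}$ for the fixed quantity $n^\alpha = \|H^*\|_2$; this halves the remaining deficit from the asymptotic exponent $2\alpha$ at each step, explaining the precise form $(2 - 1/2^{r-1})\alpha$. I do not expect any serious obstacle here: the only mild subtlety is that each inductive step introduces a constant factor (so the bound is really $\leq C_r \cdot n^{(2 - 1/2^{r-1})\alpha}$ for some $C_r = 2^{O(r)}$), but since the lemma will be applied only at small constant radius in the stretch-$3$ analysis, these constants are harmless and can be absorbed into the $O(\cdot)$ hidden in the approximation guarantee of Theorem~\ref{thm:greedy-approx}.
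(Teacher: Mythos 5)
Your proof is correct and follows essentially the same route as the paper: induction on $r$, using the fact that $\|H^*\|_2 = n^\alpha$ to bound the total degree of the previous layer (your Cauchy--Schwarz step is exactly the paper's ``average degree at most $n^{\alpha-\gamma/2}$'' step in disguise). Your remark about the accumulating constants $C_r$ is a fair observation that the paper elides, and it is indeed harmless since the lemma is only invoked at constant radius.
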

\begin{proof}
We use induction on $r$.  For the base case $r = 1$, since $\|H^*\|_2 = n^{\alpha}$ we know that $v$ has degree at most $n^{\alpha}$, and thus $|B_{H^*}(v,1)| \leq n^{\alpha} = n^{\left(2 - \frac{1}{2^{r-1}}\right)\alpha}$.

Now suppose that the theorem is true for some integer $r$.  Let $|B_{H^*}(v,r)| = n^{\gamma} \leq n^{\left(2 - \frac{1}{2^{r-2}}\right)\alpha}$ (by induction).  Since $\|H^*\|_2 = n^{\alpha}$, the average degree (in $H^*$) of the nodes in $B_{H^*}(v,r)$ is at most $n^{\alpha - (\gamma / 2)}$.  Thus we get that
$|B_{H^*}(v, r+1)| \leq n^{\gamma} \cdot n^{\alpha - (\gamma / 2)} = n^{\alpha + (\gamma / 2)} \leq n^{\alpha + \left(1 - \frac{1}{2^{r-1}}\right)\alpha} = n^{\left(2 - \frac{1}{2^{r-1}}\right)\alpha}$, as claimed.
\end{proof}

Using this lemma, we can now prove Theorem~\ref{thm:greedy-approx}.

\begin{proof}[Proof of Theorem~\ref{thm:greedy-approx}]
Lemma~\ref{lem:opt-neighborhood} implies that $|B_{H^*}(v, 6)| \leq n^{(63/32)\alpha}$ for all $v \in V$.  Since $H^*$ is a $3$-spanner of $G$, every vertex in $B_G(v,2)$ must be in $B_{H^*}(v,6)$, and thus $|B_G(v,2)| \leq n^{(63/32)\alpha}$.  Now we can use this to bound the number of $2$-paths in $H$.  Let $P_2(H)$ denote the number of paths of length $2$ in $H$.  Since $H$ is the greedy $3$-spanner of $G$ it must have girth at least $5$.  This means that every path of length $2$ in $H$ which starts from $v$ must have a different other endpoint: there cannot be two different paths of the form $v - w - u$ and $v - x - u$ in $H$, or else $H$ would have girth at most $4$.  Thus the number of $2$-paths in $H$ which start from $v$ is bounded by $|B_H(v, 2)| \leq |B_G(v,2)| \leq n^{(63/32)\alpha}$, and thus $P_2(H) \leq n^{1 + (63/32)\alpha}$.

On the other hand, note that instead of counting $2$-paths in $H$ by their starting vertex, we could instead count them by their middle vertex.  The number of $2$-paths where $v$ is the \emph{middle} node is $d_H(v)^2$, and thus $P_2(H) = \sum_{v \in V} d_H(v)^2 = \|H\|_2^2$.  Combining these two inequalities implies that $\|H\|_2 \leq n^{\frac12 + \frac{63}{64}\alpha}$, and hence the greedy spanner has approximation ratio of at most
$\frac{\|H\|_2}{\|H^*\|_2} \leq \frac{n^{\frac12 + \frac{63}{64}\alpha}}{n^{\alpha}} = n^{\frac12 - \frac{1}{64}\alpha} \leq n^{\frac12 - \frac{1}{128}} = n^{63 / 128}$.
\end{proof}

\section{Upper Bound: General Stretch} \label{sec:upper}

We now want to generalize the bounds from Section~\ref{sec:stretch3} to hold for larger stretch ($2k-1$ in particular) in order to prove Theorem~\ref{thm:upper-main}.  A natural approach would be an extension of the stretch $3$ analysis: if in Lemma~\ref{lem:stretch3-large} we replaced the the bound of $2\sqrt{n}$ with $2 n^{1/k}$, then the proof of Theorem~\ref{thm:upper-3} could easily be extended to prove Theorem~\ref{thm:upper-main}.  Unfortunately this is impossible: there are graphs of girth at least $2k+1$ where it is not true that the number of edges incident on nodes of degree at least $2n^{1/k}$ is at most $O(n)$.  This can be seen from, e.g., \cite{N01} for $k=3$.  %A relatively simple counterexample is \mdnote{add}.

So we cannot just break the vertices into ``high-degree'' and ``low-degree'' as we did for stretch $3$.  Instead, our decomposition is more complicated.  We will still have low-degree nodes, which can be analyzed trivially.  But our definition of ``high'' will actually be parameterized by a distance $j$, and we will define a node to be ``high-degree'' at distance $j$ if its degree is large relative to the expansion of its neighborhood at approximately distance $j$.  We will also introduce a new type of ``medium-degree'' node. In Section~\ref{sec:upper-decomposition} we define this decomposition and prove that it is a full decomposition of $V$, and then in Sections~\ref{sec:upper-lemmas} and \ref{sec:upper-final} we show that no part in this decomposition can contribute too much to the overall cost.
%We first show that $V$ can in fact be partitioned into these new low, medium, and high sets, and then will show that each such set cannot contribute much to the overall cost of the spanner.  

First, though, we make one simple observation that will allow us to simplify notation by only considering one particular value of $p$.  While we could analyze general values of $p$ as we did for stretch $3$ in Section~\ref{sec:stretch3-upper}, it is actually sufficient to prove the bound for the special case of $k$ and $p$ where the two terms in the maximum are equal, i.e., when  $\frac{k+p}{kp} = 1$.  The following is a straightforward application of H\"older's inequality.

%that the bound in the low $p$ region is $O\left(n^{\frac{k+p}{kp}}\right)=O\left( n^{\frac{1}{k}+\frac{1}{p}}\right)$,
%and that the transition from the low $p$ to the high $p$ region occurs exactly when $p$ and $k$ are Holder conjugates of each other.  This is no coincidence.  In fact, we may reduce the proof of our bound in Theorem~\ref{thm:upper-main} to an analysis at this transition point using Holder's theorem to analyze the low $p$ region, and monotonicity of $p$-norm to analyze the high $p$ region.  We prove this reduction next.

\begin{lemma} \label{lem:upper-main-reduction}
Let $k \geq 1$ be an integer, let $G = (V, E)$ be a graph, and let $H = (V, E_H)$ be the greedy $(2k-1)$-spanner of $G$.  If $\|H\|_{p'} = O (n)$ for $p'=k/(k-1)$ then
$\|H\|_{p} \leq \max\left(O(n), O\left(n^{\frac{k+p}{kp}}\right)\right)$ for all $p \geq 1$.
\end{lemma}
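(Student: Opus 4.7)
The plan is to split on whether $p$ is above or below the pivot value $p' = k/(k-1)$, which is chosen precisely so that
\[
\frac{k+p'}{kp'} \;=\; \frac{1}{p'} + \frac{1}{k} \;=\; \frac{k-1}{k} + \frac{1}{k} \;=\; 1.
\]
At $p=p'$ both quantities inside the target maximum collapse to $n$, so the hypothesis $\|H\|_{p'} = O(n)$ is exactly the statement of the lemma at this single endpoint. The lemma is then a purely analytic interpolation statement about the degree vector $\vec{d_G}$, with no further combinatorial input needed from the spanner structure.

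For $p \geq p'$ I will invoke monotonicity of $\ell_q$ norms on $\mathbb{R}^n$ with the counting measure: the map $q \mapsto \|x\|_q$ is nonincreasing, so
$\|H\|_p \leq \|H\|_{p'} = O(n)$,
which already matches the first term of the maximum in the conclusion.

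For $1 \leq p < p'$ I will interpolate via H\"older's inequality. Writing $\sum_v d(v)^p = \sum_v d(v)^p \cdot 1$ and applying H\"older with conjugate exponents $p'/p$ and $p'/(p'-p)$ yields
\[
\|H\|_p^{\,p} \;\leq\; \Bigl(\sum_v d(v)^{p'}\Bigr)^{\!p/p'} \!\cdot n^{(p'-p)/p'} \;=\; \|H\|_{p'}^{\,p}\, n^{1 - p/p'}.
\]
Taking $p$-th roots and substituting the hypothesis $\|H\|_{p'} = O(n)$ gives $\|H\|_p \leq O\!\left(n^{1 + 1/p - 1/p'}\right)$. Applying the identity $1/p' = 1 - 1/k$ then collapses the exponent to $1/p + 1/k = (k+p)/(kp)$, which is exactly the second term of the target maximum.

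There is no substantive obstacle here: the monotonicity handles the regime above $p'$, H\"older handles the regime below, and the two bounds glue continuously at the pivot $p'$. The only thing to be careful about is the exponent arithmetic on each side — in particular checking that the two cases produce the same value $O(n)$ at $p=p'$, which is precisely the reason for singling out $p' = k/(k-1)$ as the "balanced" exponent in the main theorem.
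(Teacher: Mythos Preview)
Your proposal is correct and follows essentially the same approach as the paper: monotonicity of $\ell_q$ norms for $p \geq p'$, and H\"older's inequality for $1 \leq p < p'$, with the same exponent arithmetic to arrive at $1/p + 1/k = (k+p)/(kp)$. The only cosmetic difference is that the paper phrases the H\"older step via the generalized form $\|\vec{d_H}\|_p \leq \|\vec{d_H}\|_{p'}\,\|1\|_q$ with $1/p' + 1/q = 1/p$, whereas you apply the standard two-exponent version to $\sum_v d(v)^p \cdot 1$; these are equivalent and yield the identical bound $\|H\|_p \leq n^{1/p - 1/p'}\|H\|_{p'}$.
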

\begin{proof}
First note that $p'=k/(k-1)$ if and only if $\frac{k+p'}{kp'}=1$.  So we break into two cases, one for $p > p'$ and one for $1 \leq p< p'$.  For the first case, where $p> p'$, the result follows simply because of the monotonicity of $p$-norms: $\|H\|_p \leq \|H\|_{p'} = O(n) = \max\left(O(n), O\left(n^{\frac{k+p}{kp}}\right)\right)$.

For the second case, where $1 \leq p < p'$, let $q$ be the value such that  $1 \leq p \leq p'$ and $\frac{1}{p'}+\frac{1}{q}=\frac{1}{p}$.  Recall that ${\vec{d_H}}$ is the degree vector of $H$.  Then H\"older's inequality implies that 
$\|{\vec{d_H}}\|_{p} \leq \|{\vec{d_H}}\|_{p'} \|1\|_{q} = n^{\frac{1}{p}-\frac{1}{p'}} \|{\vec{d_H}}\|_{p'}$.
Since by assumption we have $\|\vec{d_H}\|_{p'} \leq O(n)$, this implies that $\|H\|_p \leq O\left(n^{1+\frac{1}{p} - \frac{1}{p'}}\right) = O\left(n^{\frac{1}{p}-\frac{k-1}{k}+1}\right) = O\left(n^{\frac{k+p}{kp}}\right)$, as claimed.  
\end{proof}

%Now that we can simplify notation by just considering the case of $p = k / (k-1)$ and trying to prove a bound of $O(n)$, we can start trying to actually prove Theorem~\ref{thm:upper-main}.  

%These new definitions of high (and, to a lesser extent, medium) 

%We begin with this decomposition of $V$ into ``high'', ``medium'' and ``low'' vertex sets, rather than just ``high'' and ``low.''  We will then prove a few useful lemmas about high-girth graphs, and will then use these lemmas to prove Theorem~\ref{thm:upper-main}.

\subsection{Graph Decomposition} \label{sec:upper-decomposition}
%For graphs of girth at least $2k+1$, with $k \geq 3$, we will define ``high,'' ``medium'' and ``low'' nodes where ``medium'' and ``high'' are relative to degree at a distance, rather than absolute.  

Recall that $d_{i}(v)$ denotes the number of vertices at distance exactly $i$ from $v$.  This will let us define the following vertex sets.  

\begin{definition}
Let $G=(V,E)$ be a graph of girth at least $2k+1$, with $k \geq 3$.  Then define 
\begin{align*}
V_{low} & = \{v \in V : d_1(v) \leq
%< 
n^{1/k}\} \\
V_{med} & = \{v \in V : n^{(k-2)/(k-1)} d_1(v)^{1/(k-1)} \leq d_{k-1}(v)\} \\
V_{high,j}&= \{v \in V : d_{k-2j-1}(v) \leq n^{1/(k-1)} d_{k-2j-3}(v) d_1(v)^{(k-2)/(k-1)}\},
\end{align*}
where $0 \leq j \leq \lfloor (k-3)/2 \rfloor$.
\end{definition}

%Note that, since $G$ has high girth, it is a local expander. Thus, if $G$ was regular with degree $d$, then the degrees at distance $k-2j-1$ and $k-2j-3$ in the definition of $V_{high,j}$ would satisfy $d_{k-2j-1}(v)/d_{k-2j-3}(v)\approx d^2$, and the degree at distance $k-1$ in the definition of $V_{med}$ satisfies $d_{k-1}(v)=d^{k-1}$. %However, if we assume that $d>n^{1/k}$ (the low degree case is treated separately in $V_{low}$), then a vertex $v\in V_{high,j}$ would satisfy
%\begin{align*}
%    d(v)^{k-2}\geq (d_{k-2j-1}/d_{k-2j-3})^{k-1}/n&\approx d(v)^{2(k-1)}/n\\&>d^{2k-2}/d^{k}=d^{k-2},
%\end{align*}
%which is impossible. Thus, vertices in these sets have ``high'' degree in the sense of a lower bound on $d_1(v)$ which would never be satisfied in a regular graph. 
%In this case, if $d=n^{1/k}$ 
%(the low degree case is treated separately in $V_{low}$), 
%then the conditions in $V_{low}$, $V_{med}$ and $V_{high,j}$ are all satisfied with strict equality.
%Note that if, for example, $G$ is similar to a degree $d = n^{1/k}$ expander (or random regular graph), then for every node $v$ we would have that $d_{k-2j-1}(v) / d_{k-2j-3}(v) \approx d^2$, so that the condition for ``high'' nodes becomes trivially true, and similarly for the medium nodes the condition becomes trivially true. 
It is not hard to see that this notion of high still corresponds to a deviation from regularity, as in the stretch $3$ setting; the difference is that this deviation is relative to the size of the neighborhood at distance $k-2j-1$ vs the neighborhood at distance $k-2j-3$.

As we will see in Sections~\ref{sec:upper-lemmas} and \ref{sec:upper-final}, analyzing the contribution of $V_{high,j}$ to the $p$-norm of the greedy spanner is in some sense the ``main'' technical step: analyzing $V_{low}$ is straightforward, and analyzing $V_{med}$, while nontrivial, turns out to be easier than the case for $V_{high,j}$.  Before we do this, though, we will show that we have a full decomposition of $V$.

\begin{theorem}
\label{thm:decomposition}
Let $G=(V,E)$ be a graph of girth at least $2k+1$, with $k \geq 3$.  Then $V=V_{low} \cup V_{med} \cup \left( \cup _{0 \leq j \leq \lfloor (k-3)/2 \rfloor} V_{high,j} \right)$.
\end{theorem}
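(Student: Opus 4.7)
The approach is contradiction: assume some $v \in V$ lies in none of $V_{low}$, $V_{med}$, or any $V_{high,j}$, and derive a contradiction by chaining the negations of the $V_{high,j}$ definitions and colliding with the negation of the $V_{med}$ definition. The key observation is that $v \notin V_{high,j}$ means $d_{k-2j-1}(v) > A\cdot d_{k-2j-3}(v)$, where $A := n^{1/(k-1)} d_1(v)^{(k-2)/(k-1)}$ does not depend on $j$. This makes the inequalities ripe for telescoping as $j$ ranges over $0,1,\ldots,J$ with $J = \lfloor (k-3)/2\rfloor$.

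First I would assume $v\notin V_{low}\cup V_{med}\cup \bigcup_j V_{high,j}$ and record the three consequences: (i) $d_1(v) > n^{1/k}$, (ii) $d_{k-1}(v) < n^{(k-2)/(k-1)} d_1(v)^{1/(k-1)}$, and (iii) $d_{k-2j-1}(v) > A\,d_{k-2j-3}(v)$ for every $j=0,\ldots,J$. Iterating (iii) downward from $j=0$ telescopes to
\[
d_{k-1}(v) \;>\; A^{J+1}\cdot d_{k-2J-3}(v).
\]
Next I would separate by parity of $k$: for odd $k$ one has $k-2J-3 = 0$ and $d_0(v)=1$, giving $d_{k-1}(v) > A^{(k-1)/2}$; for even $k$ one has $k-2J-3 = 1$, giving $d_{k-1}(v) > A^{(k-2)/2} d_1(v)$. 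In either case, substituting the explicit value of $A$ and combining with the upper bound (ii) yields a single inequality between powers of $n$ and $d_1(v)$.

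The final step is to simplify exponents. In both parity cases the algebra collapses to $n^{(k-3)/(2(k-1))}\;>\;d_1(v)^{k(k-3)/(2(k-1))}$ in the odd case and $n^{(k-2)/(2(k-1))}\;>\;d_1(v)^{k(k-2)/(2(k-1))}$ in the even case; raising each to the appropriate positive power (legitimate once $k\geq 4$) yields $n^{1/k} > d_1(v)$, contradicting (i). The case $k=3$ is degenerate and immediate: then $J=0$, $V_{med}$ and $V_{high,0}$ are defined by complementary strict inequalities on $d_2(v)$ versus $n^{1/2}d_1(v)^{1/2}$, so their union already covers $V$.

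I expect the main obstacle to be purely bookkeeping: keeping track of the telescoping product carefully enough that the ``residual'' factor $d_{k-2J-3}(v)$ is correctly identified as either $d_0(v)=1$ or $d_1(v)$ depending on parity, and then verifying that the exponent simplifications genuinely yield $n^{1/k}$ on the nose (rather than some slightly weaker bound). In particular, the arithmetic $(k-2)(k-1)-2 = k(k-3)$ in the odd case and $(k-2)^2 + 2(k-2) = k(k-2)$ in the even case are the critical identities that make the exponents line up, and they are what force the specific definition of $V_{low}$ with threshold $n^{1/k}$.
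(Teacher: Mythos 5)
Your proposal is correct and follows essentially the same route as the paper: negate membership in all three classes, telescope the $V_{high,j}$ inequalities down to $d_{k-1}(v) > n^{1/2}d_1(v)^{(k-2)/2}$ (splitting by the parity of $k$ to identify the residual factor $d_0$ or $d_1$), and collide this with the $V_{med}$ and $V_{low}$ negations; the exponent identities you flag are exactly the ones the paper relies on. The only cosmetic difference is that the paper multiplies the three inequalities to reach the contradiction $1>1$ (which also absorbs the $k=3$ case), whereas you isolate $n^{1/k} > d_1(v)$ and treat $k=3$ separately.
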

\begin{proof}
We prove the case when $k$ is odd.  The other case is similar.  

Assume that $v \notin \cup_{0 \leq j \leq \lfloor (k-3)/2 \rfloor} V_{high,j}$.  Then by the definition of $V_{high, j}$, we know that $d_{k-2j-1}(v) > n^{1/(k-1)} d_{k-2j-3}(v) d_1(v)^{(k-2)/(k-1)}$ for all $j$.  Then a straightforward induction on $j$  implies that 
% \begin{align*}
% d_{k-1}(v) & > n^{1/(k-1)} d_{k-3} (v)(d_1(v))^{(k-2)/(k-1)}\\
% d_{k-3}(v) & > n^{1/(k-1)} d_{k-5} (v)(d_1(v))^{(k-2)/(k-1)}\\
% d_{k-5}(v) & > n^{1/(k-1)} d_{k-7} (v)(d_1(v))^{(k-2)/(k-1)}\\
% &\vdots\\
% d_{4}(v) & > n^{1/(k-1)} d_{2}(v) (d_1(v))^{(k-2)/(k-1)}\\
% d_{2}(v) & > n^{1/(k-1)} d_{0} (v)(d_1(v))^{(k-2)/(k-1)}.
% \end{align*}
%If we combine these inequalities (essentially a trivial induction), we get 
\begin{align}
\label{tel}
d_{k-1}(v) &> n^{1/2}d_1(v)^{(k-2)/2}.
\end{align}
If further we assume that $v \notin V_{low}$, then $d_1(v) > n^{1/k}$, and thus 
\begin{align}
\label{low}
    (d_1(v))^{k(k-3)/(2(k-1))} & \geq n^{(k-3)/(2(k-1))}.
\end{align}
Finally, assuming that $v \notin V_{med}$ implies that
\begin{align}
\label{med}
    n^{(k-2)/(k-1)} (d_1(v))^{1/(k-1)} &> d_{k-1}(v).
\end{align}
If we then multiply inequalities (\ref{tel}), (\ref{low}) and (\ref{med}), after some elementary algebra, we find $1 > 1$, which is a contradiction.  Thus $v \in V_{low} \cup V_{med} \cup \left( \cup _{0 \leq j \leq \lfloor (k-3)/2 \rfloor} V_{high,j} \right)$, implying the theorem.  
\end{proof}

\subsection{Structural Lemmas for High-Girth Graphs} \label{sec:upper-lemmas}

With Theorem~\ref{thm:decomposition} in hand, it remains to bound the contribution to the $p$-norm of the spanner of these different vertex sets.  In order to do this, we start with a few useful lemmas. We first prove a simple lemma: if the girth is large enough, then the neighborhoods around a node can be bounded by the neighborhoods around its neighbors.

\begin{lemma} \label{lem:backtrack}
Let $G = (V, E)$ have girth at least $2k+1$ with $k \geq 2$.  Then $\sum_{w \in N_1(v)} d_{k-1}(w) \leq d_k(v) + d_1(v) d_{k-2}(v)$ for all $v \in V$. 
\end{lemma}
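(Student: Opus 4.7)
The plan is to count ordered pairs and split by layer:
\[
\sum_{w \in N_1(v)} d_{k-1}(w) = \bigl|\{(w, u) : w \in N_1(v),\ u \in N_{k-1}(w)\}\bigr|,
\]
where by the (hop-)triangle inequality the vertex $u$ must lie in exactly one of $N_{k-2}(v)$, $N_{k-1}(v)$, $N_k(v)$. I will show that these three layers contribute at most $d_1(v)\cdot d_{k-2}(v)$, $0$, and $d_k(v)$ respectively, which sum to the claimed bound.

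The main tool is a standard girth fact: whenever $d(v, u) \leq k$, the hop-shortest $v$-$u$ path is unique, because the edge-symmetric-difference of two distinct such paths would contain a cycle of length at most $2d(v, u) \leq 2k$, contradicting girth $\geq 2k+1$. When this uniqueness applies and $u \neq v$, let $w^*(u) \in N_1(v)$ denote the first step of this shortest path; among neighbors of $v$, $w^*(u)$ is characterized by $u \in N_{d(v,u)-1}(w^*(u))$.

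The case $u \in N_k(v)$ is immediate: the requirement $u \in N_{k-1}(w) = N_{d(v,u)-1}(w)$ forces $w = w^*(u)$, so each $u \in N_k(v)$ is counted at most once, giving $\leq d_k(v)$ pairs. The case $u \in N_{k-2}(v)$ is handled using only the trivial bound that $w$ has at most $d_1(v)$ choices, producing $\leq d_1(v)\cdot d_{k-2}(v)$ pairs.

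The heart of the argument, and the main obstacle, is the middle case $u \in N_{k-1}(v)$: here the triangle inequality alone permits $u \in N_{k-1}(w)$, so the girth hypothesis must be invoked to rule it out. For $w = w^*(u)$ one has $u \in N_{k-2}(w)$, not $N_{k-1}(w)$. For $w \neq w^*(u)$ with $u \in N_{k-1}(w)$, prepending the edge $vw$ to a hop-shortest $w$-$u$ path yields a $v$-$u$ path of length $k$, distinct from the unique shortest $v$-$u$ path of length $k-1$ (the two are distinct because they have different lengths, and the former is genuinely a simple path since $d(v,u)=k-1$ forbids $v$ from lying on the $w$-$u$ shortest path). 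Their edge-symmetric-difference has size at most $k + (k-1) = 2k-1$ and contains a cycle, contradicting girth $\geq 2k+1$. Summing the contributions of the three layers yields the claimed inequality.
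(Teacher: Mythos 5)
Your proof is correct and follows essentially the same route as the paper's: both count the pairs $(w,u)$ with $u\in N_{k-1}(w)$ by the layer of $u$ relative to $v$, use girth-induced uniqueness of short paths to show each $u\in N_k(v)$ is reached from only one neighbor $w$, rule out $u\in N_{k-1}(v)$ via a short cycle, and bound the $N_{k-2}(v)$ contribution trivially by $d_1(v)d_{k-2}(v)$. Your treatment of the middle case is somewhat more explicit than the paper's one-line justification, but the argument is the same.
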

\begin{proof}
Since $G$ has girth at least $2k+1$, for every $i \leq k$ and $w \in N_i(v)$ there is exactly one path of length $i$ from $v$ to $w$ (or else there would be a cycle of length at most $2k$).  Thus the $(k-1)$-neighborhoods of the neighbors of $v$ form a partition of $N_k(v)$ (when intersected with $N_k(v)$).  More formally, $N_k(v) = \cup_{w \in N_1(v)} \left(N_{k-1}(w) \cap N_k(v)\right)$, and $\left(N_{k-1}(w) \cap N_k(v)\right) \cap \left(N_{k-1}(u) \cap N_k(v)\right) = \emptyset$ for all $u \neq w \in N_1(v)$.  Moreover, the part of $N_{k-1}(w)$ which is not in $N_k(v)$ is a subset of $N_{k-2}(v)$, since the path from $w$ to any such node would go through $v$ as its first hop (where we consider $N_0(v) = \{v\}$).  Thus we get that $\sum_{w \in N_1(v)} d_{k-1}(w) = \sum_{w \in N_1(v)} \left(|N_{k-1}(w) \cap N_k(v)| + |N_{k-1}(w) \cap N_{k-2}(v)|\right) \leq |N_k(v)| + \sum_{w \in N_1(v)} |N_{k-2}(v)|= d_k(v) + d_1(v) d_{k-2}(v)$, as claimed.
\end{proof}

With this lemma in hand, we will now prove a more complicated technical lemma which will likewise hold for all high-girth graphs.  For a given $v,w$ with $v \in N(w)$, we can consider the fraction of the $k$-neighborhood of $w$ which is also contained in the $(k-1)$-neighborhood of $v$.  Then if we sum this fraction over all neighbors $v$ of $w$, we would of course get $1$ since the girth constraint would imply that any two neighbors of $v$ cannot both be first hops on paths to the same node in $N_k(w)$.  But what if we consider the slightly different ratio of $d_{k-1}(v) / d_k(w)$?  This is notably different since it includes in the numerator not just $N_{k-1}(v) \cap N_k(w)$, but also $N_{k-1}(v) \cap N_{k-2}(w)$.  It will prove useful for us to reason about these values, so we show that ``on average'' they behave approximately the same: if we sum up the neighbors of any given node then these fractions can add up to something quite large (not $1$), but overall they only add up to $O(n)$.

\begin{lemma} \label{lem:ratio}
Let $k \geq 1$ be an integer, and let $G = (V, E)$ have girth at least $2k+1$ and minimum degree at least $4$.  Then 
%\begin{equation*}
    $\sum_{w \in V} \sum_{v \in N(w)} \frac{d_{k-1}(v)}{d_{k}(w)} \leq 2n$.
%\end{equation*}
\end{lemma}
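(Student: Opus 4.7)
The plan is to apply Lemma~\ref{lem:backtrack} (with the roles of $v$ and $w$ swapped) to reduce the double sum to a single-sum residual, and then to bound that residual using the girth-$(2k+1)$ expansion combined with an AM--HM inequality and the min-degree hypothesis $\delta \geq 4$.

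First, applying Lemma~\ref{lem:backtrack} with the roles of $v$ and $w$ interchanged, for every $w \in V$ we get $\sum_{v \in N(w)} d_{k-1}(v) \leq d_k(w) + d_1(w)\, d_{k-2}(w)$. Dividing by $d_k(w)$ and summing over $w$ yields
\[
\sum_{w \in V}\sum_{v \in N(w)} \frac{d_{k-1}(v)}{d_k(w)} \leq n + \sum_{w \in V} \frac{d_1(w)\, d_{k-2}(w)}{d_k(w)},
\]
so it would suffice to show the residual $\sum_w d_1(w)\, d_{k-2}(w)/d_k(w) \leq n$.

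To bound the residual, I would use the expansion $d_k(w) \geq (\delta-1)\, d_{k-1}(w)$ (from girth) together with AM--HM applied to the identity $d_{k-1}(w) = \sum_{u \in N_{k-2}(w)}(d(u)-1)$, which yields $d_{k-1}(w) \geq d_{k-2}(w)^2 / \sum_{u \in N_{k-2}(w)}(d(u)-1)^{-1}$. Chaining these gives the per-vertex bound
\[
\frac{d_1(w)\, d_{k-2}(w)}{d_k(w)} \leq \frac{d_1(w)}{(\delta-1)\, d_{k-2}(w)} \sum_{u \in N_{k-2}(w)}\frac{1}{d(u)-1}.
\]
Then I would sum over $w$, swap the order to a sum over pairs at distance $k-2$ (rewriting $\sum_w \sum_{u \in N_{k-2}(w)}$ as $\sum_u \sum_{w \in N_{k-2}(u)}$), and absorb the leftover $d_1(w)/d_{k-2}(w)$ factor using the iterated expansion bound $d_{k-2}(w) \geq d_1(w)(\delta-1)^{k-3}$ that comes from propagating min degree through the BFS tree. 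Using $d(u)-1 \geq \delta - 1 \geq 3$, the bound should collapse to something of order $\delta n/(\delta-1)^2$, which is at most $n$ for $\delta \geq 4$ (indeed $4/9 \leq 1$).

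The main obstacle I expect is making the swap-and-absorb step work uniformly for all $k \geq 3$. For $k = 3$ one has $d_{k-2}(w) = d_1(w)$, so the awkward ratio cancels and the argument runs through essentially directly; for $k \geq 4$, the iteration of expansion bounds across intermediate levels is essential, and must be carried out carefully. The naive route---using only $d_k(w) \geq (\delta-1)^2 d_{k-2}(w)$ combined with $\sum_w d_1(w) = 2|E|$---yields only $2|E|/(\delta-1)^2$, which is $\omega(n)$ when $|E|$ is close to the Moore-type extremal value $n^{1+1/k}$ for girth-$(2k+1)$ graphs. Thus the AM--HM refinement is crucial: it lets the per-vertex contribution depend on the actual degree distribution in $N_{k-2}(w)$ rather than on $|E|$, which is what makes the overall sum bounded by $n$.
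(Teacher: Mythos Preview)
Your first reduction is correct and matches the paper: applying Lemma~\ref{lem:backtrack} gives
\[
\sum_{w}\sum_{v\in N(w)}\frac{d_{k-1}(v)}{d_k(w)}\;\le\; n+\sum_{w}\frac{d_1(w)\,d_{k-2}(w)}{d_k(w)},
\]
so the whole question is how to control the residual $R_k:=\sum_{w}d_1(w)d_{k-2}(w)/d_k(w)$.

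Your plan for $R_k$ has a real gap for $k\ge 4$. After your AM--HM step and the swap you arrive at
\[
R_k\;\le\;\frac{1}{\delta-1}\sum_{u}\frac{1}{d(u)-1}\sum_{w\in N_{k-2}(u)}\frac{d_1(w)}{d_{k-2}(w)},
\]
and you then ``absorb'' the ratio via $d_{k-2}(w)\ge d_1(w)(\delta-1)^{k-3}$. That leaves
\[
R_k\;\le\;\frac{1}{(\delta-1)^{k-2}}\sum_{u}\frac{d_{k-2}(u)}{d(u)-1},
\]
and this last sum is \emph{not} $O(n)$ in irregular high-girth graphs. Concretely, take a girth-$(2k+1)$ graph with one vertex $v^*$ of degree $D\approx n/3^{k}$ and all other degrees equal to~$4$ (a tree rooted at $v^*$ with fan-out $D$ at the root and $3$ elsewhere, closed up at depth $\ge k$). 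Each of the $D$ neighbors $u$ of $v^*$ has $d_2(u)\approx D$ while $d(u)-1=3$, so already for $k=4$,
\[
\frac{1}{(\delta-1)^{2}}\sum_{u}\frac{d_{2}(u)}{d(u)-1}\;\gtrsim\;\frac{D^{2}}{27}\;=\;\Omega(n^{2}),
\]
far from $O(n)$. The loss occurs precisely at the step where you replace $d_1(w)/d_{k-2}(w)$ by $(\delta-1)^{-(k-3)}$: this throws away all degree information along the BFS layers and cannot be recovered afterward. Your own caveat that ``for $k\ge 4$ the iteration \ldots\ must be carried out carefully'' is exactly right, but the route you sketch does not carry it out.

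The paper closes this gap by \emph{not} bounding $R_k$ directly. Writing $\Phi(k)$ for the full double sum, it shows $\Phi(k)\le n+\tfrac12\Phi(k-1)$ and inducts down to $\Phi(1)\le n$. The inductive step also uses AM--HM, but applied to the family $\{d_{k-1}(v)\}_{v\in N(w)}$ (neighbors at distance~$1$) rather than to $\{d(u)-1\}_{u\in N_{k-2}(w)}$: from Lemma~\ref{lem:backtrack} and $d_k(w)\ge 4d_{k-2}(w)$ one gets $d_k(w)\ge 2\cdot\mathrm{AM}_{v\in N(w)}d_{k-1}(v)\ge 2\cdot\mathrm{HM}_{v\in N(w)}d_{k-1}(v)$, hence
\[
R_k\;\le\;\tfrac12\sum_{w}\sum_{v\in N(w)}\frac{d_{k-2}(w)}{d_{k-1}(v)}\;=\;\tfrac12\,\Phi(k-1).
\]
This is the missing idea: recursion in $k$ preserves the structure of the sum, so no degree information is discarded at intermediate levels.
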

\begin{proof}
For ease of notation, let $\Phi(k) = \sum_{w \in V} \sum_{v \in N(w)} \frac{d_{k-1}(v)}{d_{k}(w)}$.  We will prove that $\Phi(1) \leq n$ and that $\Phi(k) \leq n + \frac12 \Phi(k-1)$ for all $k \geq 2$.  These two statements clearly imply the lemma by a simple induction.

Let us first prove that $\Phi(1) \leq n$, which is the base case of the induction.  Starting from the definition of $\Phi(1)$, (and noting that $d_0(v) = 1$ by definition), we get that
\begin{align*}
    \Phi(1) &= \sum_{w \in V} \sum_{v \in N(w)} \frac{d_{0}(v)}{d_1(w)} \leq \sum_{w \in V} \frac{d_1(w)}{d_1(w)} = n.
\end{align*}
%The final inequality is from the fact that $d_1(v) \geq 4$ for all $v \in V$ and the girth is at least $5$, which together imply that $d_1(w) / d_2(w) \leq 1$ for all $w \in V$.

For $k > 1$, we can begin similarly, using the definition of $\Phi$ and now also Lemma~\ref{lem:backtrack} to get that
\begin{align*}
    \Phi(k) & = \sum_{w \in V} \sum_{v \in N(w)} \frac{d_{k-1}(v)}{d_{k}(w)} \leq \sum_{w \in V} \frac{d_{k}(w) + d_1(w) d_{k-2}(w)}{d_{k}(w)} = n + \sum_{w \in V} \frac{d_1(w) d_{k-2}(w)}{d_{k}(w)}.
\end{align*}

So now we need to prove that $\sum_{w \in V} \frac{d_1(w) d_{k-2}(w)}{d_{k}(w)} \leq \frac12 \Phi(k-1)$.  Let us first fix some $w \in V$ and try to lower bound $d_k(w)$.  Our assumption that every vertex has degree at least $4$ implies that $d_1(w) \geq \frac12 d_1(w) + 2$, and so $d_1(w) d_k(w) \geq \frac12 d_1(w) d_k(w) + 2 d_k(w)$.  This gives a lower bound on $d_k(w)$:
\begin{equation} \label{eq:min-deg1}
d_k(w) \geq \frac{\frac12 d_1(w) d_k(w) + 2 d_k(w)}{d_1(w)}.
\end{equation}
Now again using the fact that all vertices have degree at least $4$ (in fact, degree at least $3$ would be sufficient), and the fact that the girth is at least $2k+1$, we get a different lower bound: $d_k(w) \geq 2 d_{k-1}(w) \geq 4 d_{k-2}(w)$ for all $w \in V$.   Combining this with~\eqref{eq:min-deg1} gives us the bound
\begin{equation*}
    d_k(w) \geq \frac{2 d_1(w) d_{k-2}(w) + 2 d_k(w)}{d_1(w)} =2 \frac{d_1(w) d_{k-2}(w) + d_k(w)}{d_1(w)}.
\end{equation*}
Now we can apply Lemma~\ref{lem:backtrack} to the numerator, giving us
\begin{equation*}
    d_k(w) \geq 2\frac{\sum_{v \in N(w)}d_{k-1}(v)}{d_1(w)}.
\end{equation*}
The right hand side of this inequality is clearly (twice) the arithmetic mean of the values $\{d_{k-1}(v)\}_{v \in N(w)}$.  Since the arithmetic mean is at least the harmonic mean, we get that
\begin{equation} \label{eq:min-deg2}
    d_k(w) \geq 2\frac{\sum_{v \in N(w)}d_{k-1}(v)}{d_1(w)} \geq 2 \frac{d_1(w)}{\sum_{v \in N(w)} \frac{1}{d_{k-1}(v)}}.
\end{equation}

This is now finally the lower bound on $d_k(w)$ that we will use to prove that $\sum_{w \in V} \frac{d_1(w) d_{k-2}(w)}{d_{k}(w)} \leq \frac12 \Phi(k-1)$.  In particular, we immediately obtain
\begin{align*}
    \sum_{w \in V} \frac{d_1(w) d_{k-2}(w)}{d_{k}(w)} &\leq \frac12 \sum_{w \in V} \frac{d_1(w) d_{k-2}(w)}{\frac{d_1(w)}{\sum_{v \in N(w)} \frac{1}{d_{k-1}(v)}}} 
    = \frac12 \sum_{w \in V} d_{k-2}(w) \sum_{v \in N(w)} \frac{1}{d_{k-1}(v)} \\
    &= \frac12 \sum_{w \in V} \sum_{v \in N(w)} \frac{d_{k-2}(w)}{d_{k-1}(v)} 
    = \frac12 \sum_{w \in V} \sum_{v \in N(w)} \frac{d_{k-2}(v)}{d_{k-1}(w)} 
    = \frac12 \Phi(k-1).
\end{align*}
As shown, this implies the lemma.
\end{proof}

%The proof of this lemma is quite technical, but is done with an induction on $k$ and careful use of the arithmetic-harmonic mean inequality. 
While Lemma~\ref{lem:ratio} is the main structural result that we will use to bound the ``high'' degree nodes, the following corollary makes it slightly simpler to use.
\begin{corollary} \label{techratio}
Let $k \geq2$ be an integer, and let $G=(V,E)$ have girth at least $2k+1$ and minimum degree at least $4$.  Then $\sum_{v \in V}\frac{(d_1(v))^2 d_{k-2}(v)}{d_k(v)+d_1(v)d_{k-2}(v)} \leq 2n$.
\end{corollary}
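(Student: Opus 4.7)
The plan is to reduce the stated sum to the quantity $\Phi(k-1) = \sum_w \sum_{v\in N(w)} d_{k-2}(v)/d_{k-1}(w)$ introduced in the proof of Lemma~\ref{lem:ratio}, and then invoke that lemma at index $k-1$. The hypotheses carry over cleanly: girth $\geq 2k+1 \geq 2(k-1)+1$ and min-degree $\geq 4$ both hold, and $k \geq 2$ gives $k-1\geq 1$, so Lemma~\ref{lem:ratio} yields $\Phi(k-1)\leq 2n$.

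The reduction proceeds in three steps. First, replace the denominator of each summand using Lemma~\ref{lem:backtrack}, which gives $d_k(v) + d_1(v)d_{k-2}(v) \geq \sum_{w\in N(v)} d_{k-1}(w)$ and hence enlarges each summand. Second, apply the AM--HM inequality to the family $\{d_{k-1}(w): w\in N(v)\}$, obtaining $\sum_{w\in N(v)} d_{k-1}(w)\geq d_1(v)^2/\sum_{w\in N(v)} 1/d_{k-1}(w)$; after substitution the two factors of $d_1(v)$ in the numerator cancel against $d_1(v)^2$ coming from AM--HM, and the bound becomes $\sum_v d_{k-2}(v)\sum_{w\in N(v)} 1/d_{k-1}(w) = \sum_v\sum_{w\in N(v)} d_{k-2}(v)/d_{k-1}(w)$. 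Third, observe that this double sum ranges over all ordered pairs $(v,w)$ with $\{v,w\}\in E$, contributing $d_{k-2}(v)/d_{k-1}(w)$; relabeling $(v,w)\leftrightarrow(w,v)$, which is a bijection on directed edges, rewrites it as $\sum_w\sum_{v\in N(w)} d_{k-2}(v)/d_{k-1}(w) = \Phi(k-1)$, and we conclude.

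The main thing that needs justification along the way is that the AM--HM step is meaningful, i.e.\ that $d_{k-1}(w)>0$ for every $w$. This is easy: minimum degree $\geq 4$ together with girth $\geq 2k+1$ forces $d_i(w)\geq 4\cdot 3^{i-1}$ for $1\leq i\leq k$ via a standard BFS expansion argument (each vertex at distance $i<k$ from $w$ has at least three neighbors pushed strictly forward, and by girth these are all distinct across different predecessors). The relabeling trick in step three is essentially the same symmetry already used implicitly at the very end of the proof of Lemma~\ref{lem:ratio}, so no new structural input is needed; the only subtlety is that $d_{k-2}$ and $d_{k-1}$ end up on the ``wrong'' variables before relabeling, which the $(v,w)\leftrightarrow(w,v)$ swap fixes.
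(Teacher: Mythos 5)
Your proof is correct and follows essentially the same route as the paper's: bound the denominator via Lemma~\ref{lem:backtrack}, apply AM--HM over $\{d_{k-1}(w)\}_{w\in N(v)}$, and swap the roles of $v$ and $w$ in the resulting double sum so that Lemma~\ref{lem:ratio} applies at index $k-1$. Your extra verification that $d_{k-1}(w)>0$ (via the BFS expansion argument) is a detail the paper leaves implicit, but it is correct and harmless.
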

\begin{proof}
We have
\begin{align*}
\sum_{v \in V}\frac{(d_1(v))^2 d_{k-2}(v)}{d_k(v)+d_1(v)d_{k-2}(v)} 
&=\sum_{v \in V}\frac{(d_1(v)) d_{k-2}(v)}{\frac{d_k(v)+d_1(v)d_{k-2}(v)}{|N(v)|}}\\
&\leq \sum_{v \in V}\frac{(d_1(v)) d_{k-2}(v)}{\frac{\sum_{w \in N(v)}d_{k-1}}{|N(v)|}}. & (\text{Lemma}~\ref{lem:backtrack})
\end{align*}
Note the arithmetic mean in the denominator.  Some elementary algebra together with the fact that the arithmetic mean is at least the harmonic mean yields:
\begin{align*}
\sum_{v \in V}\frac{(d_1(v))^2 d_{k-2}(v)}{d_k(v)+d_1(v)d_{k-2}(v)} 
&\leq 
\sum_{v \in V}
\sum_{w \in N(v)}
\frac{d_{k-2}(v)}{d_{k-1}(w)}\\
&=
\sum_{w \in V}
\sum_{v \in N(w)}
\frac{d_{k-2}(v)}{d_{k-1}(w)}\\
&\leq 2n.  & (\text{Lemma}~\ref{lem:ratio}) 
\end{align*}
as claimed.
\end{proof}

\subsection{Proving Theorem~\ref{thm:upper-main}} \label{sec:upper-final}
We can now finally prove Theorem~\ref{thm:upper-main} by analyzing the contribution of the different sets in the decomposition to any graph of girth at least $2k+1$ (in particular, the greedy $(2k-1)$-spanner).  
%All missing proofs can be found in Appendix~\ref{app:upper-final}.

%.  Fix some $k \geq 2$, and suppose that we are trying to prove Theorem~\ref{thm:upper-main} for stretch $2k-1$, so we know that the greedy $(2k-1)$-spanner will have girth at least $2k+1$.  So we focus on bounding the norm of graphs with girth at least $2k+1$.

%We separately analyze the ``low,'' ``medium'' and ``high'' nodes.  According to Lemma~\ref{lem:upper-main-reduction} we need only prove the case when $p=k/(k-1)$ and we restrict our attention to this case until the final proof.  

The analysis of the low nodes is straightforward, while the analysis of the medium nodes is slightly more complex.  But the main difficulty is in the high nodes.
\begin{lemma} \label{lem:low}
Let $k \geq 2$ be an integer and let $G = (V, E)$ be a graph with girth at least $2k+1$.  Then $\|V_{low}\|_{\frac{k}{k-1}} \leq n$.
\end{lemma}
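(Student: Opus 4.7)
The plan is to push through a direct calculation using only the defining degree bound of $V_{low}$. Recall from the decomposition that $V_{low} = \{v \in V : d_1(v) \leq n^{1/k}\}$, and that by definition $\|V_{low}\|_{k/(k-1)} = \bigl(\sum_{v \in V_{low}} d_1(v)^{k/(k-1)}\bigr)^{(k-1)/k}$ (here $d_1(v)$ is exactly the degree of $v$ in $G$). So the statement is equivalent to showing $\sum_{v \in V_{low}} d_1(v)^{k/(k-1)} \leq n^{k/(k-1)}$.

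The key observation is that raising the per-vertex bound $d_1(v) \leq n^{1/k}$ to the exponent $k/(k-1)$ gives $d_1(v)^{k/(k-1)} \leq n^{1/(k-1)}$. Since $|V_{low}| \leq |V| = n$, summing the trivial pointwise bound over $V_{low}$ yields $\sum_{v \in V_{low}} d_1(v)^{k/(k-1)} \leq n \cdot n^{1/(k-1)} = n^{k/(k-1)}$. Taking $(k-1)/k$-th roots gives exactly $\|V_{low}\|_{k/(k-1)} \leq n$, as required.

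There is no real obstacle here: the girth hypothesis is not used at all for this particular lemma (it is reserved for the later $V_{med}$ and $V_{high,j}$ analyses that invoke the structural lemmas of Section~\ref{sec:upper-lemmas}). The entire content is that the threshold $n^{1/k}$ defining ``low'' degree was chosen precisely so that the pointwise bound, together with the count $|V_{low}| \leq n$, matches the target $\ell_{k/(k-1)}$-norm exactly. I would present the proof in two lines: quote the definition, apply the pointwise inequality, and collect the sum.
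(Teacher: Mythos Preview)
Your proof is correct and essentially identical to the paper's own argument: both plug the defining bound $d_1(v)\le n^{1/k}$ into the sum $\sum_{v\in V_{low}} d_1(v)^{k/(k-1)}$, use $|V_{low}|\le n$, and simplify the exponent. Your observation that the girth hypothesis plays no role here is also accurate.
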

\begin{proof}
This is a straightforward calculation using only the definition of $V_{low}$: 
    \begin{equation*}
        \|V_{low}\|_{\frac{k}{k-1}} = \left(\sum_{v \in V_{low}} (d_1(v))^\frac{k}{k-1} \right)^{\frac{k-1}{k}} \leq \left( \sum_{v \in V_{low}} n^{\frac{1}{k-1}}\right)^{\frac{k-1}{k}} \leq n^{\left(1+\frac{1}{k-1}\right)\frac{k-1}{k}}=n. \qedhere
    \end{equation*}
\end{proof}

We next bound the medium nodes.
\begin{lemma} \label{lem:med}
Let $k\geq 2$ be an integer, let $p = \frac{k}{k-1}$, and let $G = (V, E)$ have girth at least $2k+1$.  Then $\lp{V_{med}} \leq n$.
\end{lemma}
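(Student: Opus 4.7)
My plan is to use the defining inequality of $V_{med}$ to convert the $\ell_p$-sum into a sum of products $d_1(v) d_{k-1}(v)$, and then bound this global sum by iteratively applying Lemma~\ref{lem:backtrack}. Concretely, for $v \in V_{med}$ I would multiply the defining inequality $n^{(k-2)/(k-1)} d_1(v)^{1/(k-1)} \leq d_{k-1}(v)$ by $d_1(v)$, which immediately gives
$$ d_1(v)^{k/(k-1)} \leq n^{-(k-2)/(k-1)} \, d_1(v) \, d_{k-1}(v). $$
Summing over $v \in V_{med}$ (and trivially extending the sum to all of $V$) reduces the task to showing $\sum_{v \in V} d_1(v) d_{k-1}(v) = O(n^2)$, since then $\lp{V_{med}}^p \leq n^{-(k-2)/(k-1)} \cdot O(n^2) = O(n^{k/(k-1)}) = O(n^p)$, giving $\lp{V_{med}} \leq O(n)$ as required.

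To bound $\sum_v d_1(v) d_{k-1}(v)$, I would swap the order of summation to obtain $\sum_w \sum_{v \in N_1(w)} d_{k-1}(v)$, and then apply Lemma~\ref{lem:backtrack} (valid since the girth is at least $2k+1$) to get $\sum_w d_1(w) d_{k-1}(w) \leq \sum_w d_k(w) + \sum_w d_1(w) d_{k-2}(w) \leq n^2 + \sum_w d_1(w) d_{k-2}(w)$. The residual sum has the same form with $k$ replaced by $k-1$, so I would iterate the recurrence down through $d_{k-3}, d_{k-4}, \ldots, d_0$, paying only an additive $n^2$ at each level. Terminating at $\sum_w d_1(w) d_0(w) = \sum_w d_1(w) = 2|E| \leq n^2$ yields $\sum_v d_1(v) d_{k-1}(v) \leq k \cdot n^2 = O(n^2)$ for fixed $k$.

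The main obstacle I expect is to carry out Step 2 carefully: I need to verify that Lemma~\ref{lem:backtrack} remains applicable at every level of the telescoping (it does, since all applications use a parameter at most $k$ and hence only require girth $\geq 2k+1$), and to ensure the algebra balances so that the final exponent of $n$ matches $k/(k-1) = p$ exactly. Once that iteration is in hand, the rest of the argument is a routine computation combining Step~1 and the $O(n^2)$ bound from Step~2.
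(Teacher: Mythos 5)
Your proof is correct, and its first step is exactly the paper's: the defining inequality of $V_{med}$ reduces the lemma to showing $\sum_{v\in V} d_1(v)\,d_{k-1}(v)=O(n^2)$. Where you diverge is in proving that sub-claim. The paper sums $d_1$ over ordered pairs at distance exactly $k-1$, swaps the two endpoints, and uses the girth to evaluate $\sum_{w\in N_{k-1}(v)} d_1(w) = d_k(v)+d_{k-1}(v)$ exactly, which gives $\sum_v d_1(v)d_{k-1}(v)\le n^2$ in one shot with constant $1$. You instead swap only over edges and invoke Lemma~\ref{lem:backtrack} repeatedly at parameters $k, k-1,\ldots,2$; this is valid (each application needs girth only $2j+1\le 2k+1$, parameter $j\ge 2$, and the lemma has no minimum-degree hypothesis), but the telescoping accumulates an additive $n^2$ per level, yielding $\sum_v d_1(v)d_{k-1}(v)\le k n^2$ and hence $\|V_{med}\|_p\le k^{(k-1)/k}\, n$ rather than the clean $\|V_{med}\|_p\le n$ asserted in the statement. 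The extra factor is harmless downstream --- Theorem~\ref{thm:upper-assuming-degree} only needs $O(kn)$ --- but strictly speaking you prove a slightly weakened form of the lemma. The paper's single-step counting argument is worth internalizing, as it is shorter and keeps the constant independent of $k$.
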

\begin{proof}
From the definition of $V_{med}$, we get that
\begin{align*}
    \|V_{med}\|_p^p &= \sum_{v \in V_{med}} d_1(v)^\frac{k}{k-1} \leq \sum_{v \in V_{med}} n^{\frac{2-k}{k-1}}d_{k-1}(v) d_1(v) = n^{\frac{2-k}{k-1}} \sum_{v \in V_{med}} d_1(v) d_{k-1}(v) \\
    &\leq n^{\frac{2-k}{k-1}}\sum_{v \in V} d_1(v) d_{k-1}(v).
\end{align*}
We claim that $\sum_{v \in V} d_1(v) d_{k-1}(v) \leq n^2$, which would imply the lemma since we would have $\lp{V_{med}} \leq n^{\frac{2-k}{k}} n^{\frac{2k-2}{k}} = n$.  To analyze $\sum_{v \in V} d_1(v) d_{k-1}(v)$, note that
$    \sum_{v \in V} d_1(v) d_{k-1}(v) = \sum_{v \in V} \sum_{w \in N_{k-1}(v)} d_1(v) =  \sum_{v \in V} \sum_{w \in N_{k-1}(v)} d_1(w)$,
%\sum_{v, w \in V : d(u,v) = k-1} (d_1(v) + d_1(w))
where we have used the fact that since the girth is at least $2k+1$, there is exactly one length $k-1$ path between any two nodes at distance $k-1$.  But now, again since the girth is at least $2k+1$, we know that $\sum_{w \in N_{k-1}(v)} d_1(w) = d_k(v) + d_{k-1}(v)$ for all $v \in V$.  Thus
$\sum_{v \in V} d_1(v) d_{k-1}(v) = \sum_{v \in V} (d_k(v) + d_{k-1}(v)) \leq \sum_{v \in V} n = n^2$, as required.
\end{proof}

We now bound the high nodes, with one degree assumption which we will later remove.
\begin{lemma}\label{lem:high}
Let $G=(V,E)$ be a graph of girth at least $2k+1$ with $ k\geq 3$.  Further assume that the graph has minimum degree at least $4$. Then $\| V_{high,j} \|_{k/(k-1)} = O(n)$ for all $0 \leq j \leq \lfloor (k-3)/2\rfloor$.
\end{lemma}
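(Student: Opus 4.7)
The strategy is to use the defining inequality of $V_{high,j}$ in order to reduce a bound on $\sum_{v\in V_{high,j}} d_1(v)^{k/(k-1)}$ to an instance of Corollary~\ref{techratio}, but with the role of $k$ replaced by a smaller integer $k'$ matching the distances that appear in the definition of $V_{high,j}$. The key observation is that the two neighborhood sizes in the definition of $V_{high,j}$, namely $d_{k-2j-1}(v)$ and $d_{k-2j-3}(v)$, are exactly a ``$k$'' and a ``$k-2$'' after we set $k'=k-2j-1$. Since $G$ has girth $\ge 2k+1\ge 2k'+1$ and $j\le \lfloor(k-3)/2\rfloor$ forces $k'\ge 2$, the corollary is applicable in this shifted form and gives
\[
\sum_{v\in V}\frac{d_1(v)^2\, d_{k-2j-3}(v)}{d_{k-2j-1}(v)+d_1(v)\,d_{k-2j-3}(v)}\;\le\;2n.
\]

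\textbf{From the definition to the ratio.} First I would rewrite the membership condition as an upper bound on the denominator above. For $v\in V_{high,j}$,
\[
d_{k-2j-1}(v)+d_1(v)\,d_{k-2j-3}(v)\;\le\;d_{k-2j-3}(v)\bigl(n^{1/(k-1)}d_1(v)^{(k-2)/(k-1)}+d_1(v)\bigr).
\]
Since trivially $d_1(v)\le n$, we have $d_1(v)^{1/(k-1)}\le n^{1/(k-1)}$, hence $d_1(v)\le n^{1/(k-1)}d_1(v)^{(k-2)/(k-1)}$, which collapses the parenthesized expression to at most $2n^{1/(k-1)}d_1(v)^{(k-2)/(k-1)}$. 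Taking reciprocals and multiplying through by $d_1(v)^2 d_{k-2j-3}(v)$ yields the clean inequality
\[
d_1(v)^{k/(k-1)}\;\le\;2\,n^{1/(k-1)}\cdot\frac{d_1(v)^2\, d_{k-2j-3}(v)}{d_{k-2j-1}(v)+d_1(v)\,d_{k-2j-3}(v)},
\]
where the exponent $k/(k-1)=2-(k-2)/(k-1)$ is exactly what we want on the left-hand side.

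\textbf{Summing and invoking the corollary.} Summing this inequality over $v\in V_{high,j}$ and extending the sum on the right to all of $V$ (every summand is nonnegative), we obtain
\[
\sum_{v\in V_{high,j}}d_1(v)^{k/(k-1)}\;\le\;2\,n^{1/(k-1)}\sum_{v\in V}\frac{d_1(v)^2\, d_{k-2j-3}(v)}{d_{k-2j-1}(v)+d_1(v)\,d_{k-2j-3}(v)}\;\le\;4\,n^{k/(k-1)},
\]
by applying Corollary~\ref{techratio} with $k'=k-2j-1$, which is legitimate because $2\le k'\le k$, the girth hypothesis and the minimum degree hypothesis carry over, and the case $k-2j-3=0$ is harmless since $d_0\equiv 1$ is consistent with the corollary's statement. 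Taking $(k-1)/k$-th roots gives $\|V_{high,j}\|_{k/(k-1)}\le 4^{(k-1)/k}\,n=O(n)$, as required.

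\textbf{Where the work lies.} The calculation itself is short; the nontrivial insight is that the exponent $(k-2)/(k-1)$ in the definition of $V_{high,j}$, together with the multiplicative structure of the bound, is chosen precisely so that the weight $d_1(v)^{k/(k-1)}$ falls out from Corollary~\ref{techratio} after one rearrangement. The only real obstacles are (i) noticing that one should shift the corollary by $2j+1$ to align distances, and (ii) verifying that the edge cases $k-2j-3\in\{0,1\}$ and $k-2j-1=2$ at the extreme values of $j$ do not break the applicability of Corollary~\ref{techratio}; both are resolved by the range constraint $0\le j\le\lfloor(k-3)/2\rfloor$.
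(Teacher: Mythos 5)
Your proof is correct and follows essentially the same route as the paper: use the defining inequality of $V_{high,j}$ to convert $d_1(v)^{k/(k-1)}$ into the ratio appearing in Corollary~\ref{techratio} (applied with the shift $k'=k-2j-1$), then sum and invoke the corollary. The one difference is cosmetic but nice: the paper splits $V_{high,j}$ into two subsets according to whether $d_{k-2j-1}(v)\geq d_1(v)d_{k-2j-3}(v)$ and handles them separately, whereas your observation that $d_1(v)\leq n^{1/(k-1)}d_1(v)^{(k-2)/(k-1)}$ bounds both terms of the denominator at once and eliminates the case analysis.
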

\begin{proof}
We will break the high nodes into the following two sets: 
\begin{align*}
    V'_{high,j} &= \{v \in V_{high,j} : d_{k-2j-1}(v) \geq d_{k-2j-3}(v)d_{1} (v)\}\\
    V''_{high,j} &= \{v \in V_{high,j} : d_{k-2j-1}(v) < d_{k-2j-3}(v)d_{1} (v)\}.
\end{align*}
Obviously $V_{high,j}=V'_{high,j} \cup V''_{high,j}$, so we can bound each of the two sets separately.  For the first set, we get that
\begin{align*}
\|V'_{high,j}\|_{k/(k-1)}
& = \left(\sum_{v \in V'_{high,j}}(d_1(v))^{k/(k-1)} \right)^{(k-1)/k} \leq \left(\sum_{v \in V'_{high,j}}
\frac{n^{\frac{1}{k-1}}d_1(v)^2 d_{k-2j-3}(v)}{d_{k-2j-1}(v)}\right)^{(k-1)/k} \\
&\leq \left(2\sum_{v \in V'_{high,j}}  
\frac{n^{\frac{1}{k-1}}d_1(v)^2 d_{k-2j-3}(v)}{d_{k-2j-1}(v)+d_1(v) d_{k-2j-3}(v)}\right)^{(k-1)/k} \leq 4n.
\end{align*}
The first inequality is from the definition of $V_{high}$, the second is from the definition of $V'_{high}$, and the final inequality is from Corollary~\ref{techratio}.

To analyze $V''_{high}$, note that, by definition, $d_{k-2j-1}(v)+d_1(v)d_{k-2j-3}(v) < 2d_1(v)d_{k-2j-3}(v)$ for all $v \in V''_{high,j}$.
%\begin{equation}
%\label{den}
%    d_{k-2j-1}(v)+d_1(v)d_{k-2j-3}(v) < 2d_1(v)d_{k-2j-3}(v),
%\end{equation}
%so that
Combining this with Corollary~\ref{techratio} implies that 
\begin{align*}
\|V''_{high,j}\|_{k/(k-1)} &\leq \|V''_{high,j}\|_1 =\sum_{v \in V''_{high,j}}d_1(v) = \sum_{v \in V''_{high,j}}\frac{d_1(v)^2 d_{k-2j-3}(v)}{d_1(v)d_{k-2j-3}(v)} \\
&\leq 2\sum_{v \in V''_{high,j}}\frac{d_1(v)^2 d_{k-2j-3}(v)}{d_{k-2j-1}(v)+d_1(v)d_{k-2j-3}(v)} \leq 4n. 
\end{align*}
Thus $\|V_{high,j}\|_{k/(k-1)} \leq \|V'_{high,j}\|_{k/(k-1)} + \|V''_{high,j}\|_{k/(k-1)} \leq 8n$.
\end{proof}

%So we have now bounded the contribution of high, medium and low degree nodes in the greedy spanner (since the greedy spanner has girth at least $2k+1$).  

Putting this all together gives the following theorem.
\begin{theorem} \label{thm:upper-assuming-degree}
Let $G = (V,E)$ have girth at least $2k+1$, $k \geq 2$ and minimum degree at least $4$.  Then $\|G\|_p \leq O(k n)$ for $p =\frac{k}{k-1}$.
\end{theorem}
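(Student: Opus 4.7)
The plan is to combine Theorem~\ref{thm:decomposition} with the per-part bounds already established in Lemmas~\ref{lem:low}, \ref{lem:med}, and \ref{lem:high}. For $k \geq 3$, Theorem~\ref{thm:decomposition} gives
\[
V = V_{low} \cup V_{med} \cup \bigcup_{0 \leq j \leq \lfloor (k-3)/2 \rfloor} V_{high,j},
\]
which is a \emph{cover} of $V$ but not necessarily a disjoint partition. Since the degree vector $\vec{d_G}$ is pointwise bounded by the sum of its restrictions to the covering pieces, Minkowski's inequality yields
\[
\|G\|_p \;\leq\; \|V_{low}\|_p + \|V_{med}\|_p + \sum_{j=0}^{\lfloor (k-3)/2 \rfloor} \|V_{high,j}\|_p.
\]
With $p = k/(k-1)$, each summand on the right is $O(n)$ by Lemmas~\ref{lem:low}, \ref{lem:med}, and \ref{lem:high} respectively (using the girth and minimum-degree hypotheses). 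Since there are $2 + \lfloor (k-3)/2 \rfloor + 1 = O(k)$ terms, the total is $O(kn)$.

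The only wrinkle is the boundary case $k=2$, since Theorem~\ref{thm:decomposition} and Lemma~\ref{lem:high} are stated for $k \geq 3$. Here $p = k/(k-1) = 2$, and I would instead appeal directly to Theorem~\ref{thm:upper-3}, whose proof only relies on the fact that the graph has girth at least $5$ (via Lemma~\ref{lem:stretch3-large}). Instantiating it at $p = 2$ for the graph $G$ gives $\|G\|_2 \leq \max(O(n), O(n)) = O(n) = O(kn)$, handling this case.

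I do not anticipate any genuine obstacle: all of the technical work was done in establishing Theorem~\ref{thm:decomposition} and the three per-part lemmas, and what remains is essentially bookkeeping. The two small points that require attention are (i) that the decomposition is a cover rather than a disjoint partition, so one invokes the triangle inequality (Minkowski) for $\ell_p$-norms rather than an additive identity on $p$-th powers, and (ii) that the number of high-part terms is $O(k)$, which follows immediately from the range $0 \leq j \leq \lfloor (k-3)/2 \rfloor$ and accounts for the $k$ factor in the final $O(kn)$ bound.
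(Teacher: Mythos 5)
Your proposal is correct and follows essentially the same route as the paper's own proof: apply Theorem~\ref{thm:decomposition}, bound each piece by $O(n)$ via Lemmas~\ref{lem:low}, \ref{lem:med}, and \ref{lem:high}, sum the $O(k)$ terms by the triangle inequality, and handle $k=2$ separately (the paper itself notes that one can either observe $V_{med}=V$ or invoke Theorem~\ref{thm:upper-3}, exactly as you do). No gaps.
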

\begin{proof}
We know from Theorem~\ref{thm:decomposition} that $V=V_{low} \cup V_{med} \cup \left( \cup _{0 \leq j \leq \lfloor (k-3)/2 \rfloor} V_{high,j} \right)$ for $k \geq 3$. Thus $\lp{G} \leq  \lp{V_{low}} + \lp{V_{med}}  +\sum_{j = 0}^{\lfloor (k-3)/2\rfloor} \lp{V_{high,j}} \leq O\left(k n\right)$,
where we used Lemmas~\ref{lem:low}, \ref{lem:med}, \ref{lem:high}, to bound the contribution of each set. If $k=2$ then $V_{med}=V$ and the proof is similar (alternatively see Theorem~\ref{thm:upper-3}).
\end{proof}

We can now remove the degree assumption and the restriction to $p=\frac{k}{k-1}$, to finally prove Theorem~\ref{thm:upper-main}.

\begin{proof}[Proof of Theorem~\ref{thm:upper-main}]
The case of $k=1$ is trivial since every graph $H$ has $\|H\|_p \leq O(n^{\frac{p+1}{p}})$.  For $k \geq 2$, by Lemma~\ref{lem:upper-main-reduction}, we may assume that $p=\frac{k}{k-1}$.  We will use induction on the number of vertices of degree less than $4$.  If $H$ has no vertices with degree less than $4$, then Theorem~\ref{thm:upper-assuming-degree} implies Theorem~\ref{thm:upper-main}.  Otherwise, let $v \in V$ be a vertex of degree at most $3$, and let $G' = G - v$ be the graph obtained by removing $v$.  Then it is easy to see that $\|\vec{d_G} - \vec{d_{G'}}\|_1 \leq 6$, since one entry in the degree vector of value at most $3$ gets removed and at most three other entries get decreased by $1$. Thus we can use triangle inequality and monotonicity of norms to get that $\|G\|_p - \|G'\|_p \leq \|\vec{d_G} - \vec{d_{G'}}\|_p \leq \|\vec{d_G} - \vec{d_{G'}}\|_1 \leq 6$.  Hence by the induction hypothesis we get that $\|G\|_p \leq O\left(k n \right)$ as required.
\end{proof}

\section{Universal Lower Bound} \label{sec:lower-bound-main}
As stated in Theorem~\ref{thm:lower-meta}, our lower bound can be calculated by a simple linear program of size $O(t)$ (where $t$ is the stretch).  We give this linear program formally in Section~\ref{sec:LP}.  The linear program assumes that the graph has a fairly regular structure. In particular, it assumes that the extremal $t$-spanner $H$ is a layered graph with $t+1$ layers $V_0,\ldots, V_{t}$, such that the subgraph induced on every two subsequent layers $V_{i},V_{i+1}$ is bipartite and biregular (in each side, all vertices have the same degree), and that the original extremal graph $G$ (the graph whose spanner $H$ achieves the lower bound) in addition has a biregular graph between $V_0$ and $V_t$ which contributes most of the $p$-norm of $G$, and is spanned by the layered graph $H$. Such a spanner $H$ can be briefly described by the cardinalities of the layers $V_i$ and the degrees of the bipartite graphs connecting every two consecutive layers.

As we show, this assumption is without loss of generality, in the sense that pruning any graph to obtain this structure can change the $p$-norm of the graph or its spanner by at most a polylogarithmic factor. The linear program captures the constraints that the parameters of a spanner with such a regular structure must satisfy. These constraints are also sufficient in the sense that given any solution to the linear program, we can construct a graph $G$ and spanner $H$ of this form with the parameters given by this LP solution.

In fact, the extremal spanners which match our lower bound have a fairly specific structure with consistent properties:
\begin{itemize}
    \item The layers in the extremal can be partitioned into three sections: an initial section in which we have layers of decreasing size $|V_0|\geq |V_1|\ldots\geq|V_{L}|$, a middle section consisting of equal size layers $|V_L|=\ldots=|V_{L+C}|$, and a final section with layers of increasing size $|V_{L+C}|\leq\ldots\leq|V_{L+C+R}|$. In some cases one of the first two sections may be missing.
    \item The bipartite graphs between every two consecutive layers in the spanner have the same contribution to the $p$-norm of the spanner.
    \item In addition to the edges in the spanner, the original graph also contains a biclique between the outer layers $V_0$ and $V_t$, so that $\|G\|_p=\Theta(|V_0|^{1/p}|V_t|)$.
\end{itemize}
The structure of these spanners has the property that given the lengths of the three sections, we can derive the exact structure of the spanner, and hence the exact value of the lower bound. In our analysis, we focus on this specific family of graphs, and show that it suffices to describe our lower bound.

While the lower bound for $p=1$ or $p=\infty$ is simple, it turns out that the lower bound for intermediate values of $p$ is quite complex, and depends on the stretch $t$, the norm parameter $p$, and the $p$-norm of the input graph $\Lambda$ in a highly non-trivial way. To identify the extremal spanners and prove their optimality, we look at the dual of our linear program, and for every graph in our family of candidate extremal spanners, examine whether there exists a dual solution which satisfies complementary slackness w.r.t.\ the primal LP solution corresponding to our spanner. With this approach, for every $p,t,\Lambda$, we are able to identify the exact constraints that the parameters of an optimal spanner from our family must satisfy, and give an explicit solution, which gives our lower bound.

As an example, our analysis identifies the lower bound for relatively low values of $p$:\footnote{The complete description of our lower bound is quite long, but Theorem~\ref{thm:low-p} can be seen to follow from Claim~\ref{clm:LB-defs}, Theorem~\ref{thm:LB-nice}, and the parameters described in Section~\ref{sec:lowest-p}.}

\begin{theorem} If $t$ is even, then for all $p\in[1,\varphi]$ (where $\varphi=\frac{1+\sqrt{5}}{2}$ is the golden ratio),   $$\textstyle\mathrm{LB}^p_t(n,\Lambda)=\tilde\Theta\left(\max\left\{n^{1/p},\Lambda^{\alpha}\right\}\right)\text{ for }\alpha=1/\left((p+1)\left(1-\left((p-1)/p\right)^{t/2}\right)\right).$$
If $t$ is odd, then for all $p\in[1,2]$,  $$\textstyle\mathrm{LB}^p_t(n,\Lambda)=\tilde\Theta\left(\max\left\{n^{1/p},\Lambda^{\beta}\right\}\right)\text{ for }\beta=1/\left(1+p\left(1-\left((p-1)/p\right)^{(t-1)/2}\right)\right).$$
\label{thm:low-p}
\end{theorem}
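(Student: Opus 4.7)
The plan is to invoke the LP framework established in Theorem~\ref{thm:LB-nice} (which, up to a $\log(n)^{O(t)}$ factor, reduces computing $\mathrm{LB}^p_t(n,\Lambda)$ to the optimum of the explicit $O(t)$-sized LP from Section~\ref{sec:LP}), and then, within this reduction, to identify the correct primal extremal graph in the three-section family described after Theorem~\ref{thm:lower-meta} and certify its optimality with a matching dual solution. Before any of this, the bound $\mathrm{LB}^p_t(n,\Lambda) \geq n^{1/p}$ is trivial: every connected graph on $n$ vertices has all degrees at least $1$, so any spanner $H$ satisfies $\|H\|_p \geq n^{1/p}$; all the real work is in the $\Lambda^\alpha$ (resp.\ $\Lambda^\beta$) term.

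For the candidate extremal, I would guess, based on the form of the exponent, a pure ``V-shape'' with no constant middle section: take $C=0$ and $L=R=t/2$ when $t$ is even, and $\{L,R\}=\{(t-1)/2,(t+1)/2\}$ when $t$ is odd. Consecutive layer sizes should shrink by a geometric ratio $r=(p-1)/p$ on the decreasing side (so that $|V_{t/2}|=|V_0|\,r^{t/2}$) and grow by the same ratio on the increasing side, and the biregular degrees of the bipartite graphs between consecutive layers are chosen to enforce the ``equal contribution to the $p$-norm per layer'' property highlighted after Theorem~\ref{thm:lower-meta}. The outer layers $V_0$ and $V_t$ are joined in $G$ by a biclique, which contributes $\Theta(|V_0|^{1/p}|V_t|)$ and dominates $\|G\|_p$.

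With this combinatorial guess in hand, I would translate the three-section data into values of the LP variables and evaluate both $\|G\|_p$ and $\|H\|_p$ in closed form: $\|G\|_p=\Theta(|V_0|^{(p+1)/p} r^{t/2})$ and $\|H\|_p=\Theta(|V_0|^{(p+1)/p})$ (the latter because the equal-contribution normalization pins the $V_0$-side degree once $|V_0|$ is fixed). Setting $\|G\|_p=\Lambda$ and eliminating $|V_0|$ immediately gives $\|H\|_p=\Theta(\Lambda^\alpha)$ in the even case, with $\alpha=1/((p+1)(1-r^{t/2}))$, and similarly $\Theta(\Lambda^\beta)$ in the odd case after accounting for the asymmetric endpoint. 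This settles feasibility, that is, the upper bound on $\mathrm{LB}^p_t(n,\Lambda)$ implicit in the theorem.

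The main obstacle, and where the range restriction $p\leq\varphi$ (resp.\ $p\leq 2$) actually enters, is the matching lower bound. For this, I would write down a dual solution supported only on the constraints that are tight at our primal — the degree/edge-count equalities along the decreasing arm of the V — and verify dual feasibility by complementary slackness. The V-shape will be dual-optimal exactly when a scalar inequality of the shape $p^2-p-1\leq 0$ holds, i.e., precisely when $p\leq\varphi$ in the even case; in the odd case the analogous inequality degenerates to $p\leq 2$. For $p$ outside these ranges the V-shape becomes dual-infeasible and one must reintroduce a nonempty constant middle section $C>0$, which is exactly the case split handled by the subsequent subsections of Section~\ref{sec:lower-bound-main} and is out of scope here. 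Verifying dual feasibility and assembling the complementary-slackness certificate for our specific V-shaped primal is the one genuinely computational step in the proof; once it goes through, combining with the trivial $n^{1/p}$ bound yields the $\tilde\Theta$ statement of Theorem~\ref{thm:low-p}.
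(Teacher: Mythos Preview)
Your high-level plan—reduce to the LP of Section~\ref{sec:LP}, exhibit an extremal primal from the three-section family, and certify optimality via a complementary-slackness dual—is exactly the paper's route (this is the content of the footnote pointing to Claim~\ref{clm:LB-defs}, Theorem~\ref{thm:LB-nice}, and Section~\ref{sec:lowest-p}). However, two parts of the execution are off.

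First, your description of the extremal spanner is incorrect. The layers of an $(L,0,R)$-minimal spanner do \emph{not} shrink by a fixed geometric ratio $r=(p-1)/p$; what the equal-contribution property actually forces (see the proof of Lemma~\ref{lem:LCR-LB}) is that the successive \emph{degree ratios} $\tilde d_i=|V_i|/|V_{i-1}|$ satisfy the power recursion $\tilde d_{i+1}=\tilde d_i^{(p-1)/p}$, so the layer sizes are far from geometric. Consequently your closed forms $\|G\|_p=\Theta(|V_0|^{(p+1)/p}r^{t/2})$ and $\|H\|_p=\Theta(|V_0|^{(p+1)/p})$ are both wrong (as written they even give $\|H\|_p>\|G\|_p$), and eliminating $|V_0|$ from them yields $\Lambda$ times a constant, not $\Lambda^\alpha$. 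The correct computation goes through the $E_{i,j}$ exponents of Lemma~\ref{lem:LCR-LB}, which then specialize to $\alpha$ and $\beta$.

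Second, and more structurally, for odd $t$ you take $C=0$ with $\{L,R\}=\{(t-1)/2,(t+1)/2\}$, whereas the paper takes $C=1$ with $L=R=(t-1)/2$ (Section~\ref{sec:lowest-p}). Your choice does produce the same exponent $\beta$, and the dual threshold it yields is also $p\le 2$, but it is \emph{asymmetric}: one has $n_0<n_t$, so as the construction scales up, $n_t$ hits $n$ strictly before $\lambda$ reaches its maximum $1+1/p$ (e.g.\ for $p=2$, $t=3$ your primal stops at $\lambda=4/3$ while the range extends to $3/2$). The paper's symmetric choice has $n_0=n_t$ and therefore covers the entire range of $\lambda$ in one shot—this is precisely the point made in Section~\ref{sec:lowest-p}. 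Your proposal leaves the high-$\lambda$ regime for odd $t$ unhandled.
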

\begin{corollary} \label{cor:LB-low-stretch}
 For all $p\in[1,\varphi]$, we have $\mathrm{LB}^p_2(n,\Lambda)=\tilde\Theta\left(\max\left\{n^{1/p},\Lambda^{p/(p+1)}\right\}\right).$  For all $p\in[1,2]$, we have
$\mathrm{LB}^p_3(n,\Lambda)=\tilde\Theta\left(\max\left\{n^{1/p},\sqrt{\Lambda}\right\}\right).$
\end{corollary}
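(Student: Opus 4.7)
The plan is to derive the corollary directly from Theorem~\ref{thm:low-p} by instantiating the stated exponent formulas at $t=2$ and $t=3$. Since the corollary's hypothesis ranges ($p\in[1,\varphi]$ for $t=2$, and $p\in[1,2]$ for $t=3$) match exactly the hypothesis ranges of the even-$t$ and odd-$t$ cases of Theorem~\ref{thm:low-p} respectively, no additional work is required to justify applicability — only to carry out the algebraic simplification of the exponent in each case.

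For the first claim ($t=2$, even), I would apply the even-$t$ formula $\alpha=1/\bigl((p+1)(1-((p-1)/p)^{t/2})\bigr)$ at $t/2=1$. Then $((p-1)/p)^1=(p-1)/p$ so $1-(p-1)/p=1/p$, giving $\alpha=1/((p+1)/p)=p/(p+1)$. Substituting back into Theorem~\ref{thm:low-p} yields $\mathrm{LB}^p_2(n,\Lambda)=\tilde\Theta(\max\{n^{1/p},\Lambda^{p/(p+1)}\})$, as claimed.

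For the second claim ($t=3$, odd), I would apply the odd-$t$ formula $\beta=1/\bigl(1+p(1-((p-1)/p)^{(t-1)/2})\bigr)$ at $(t-1)/2=1$. Again $1-(p-1)/p=1/p$, so $\beta=1/(1+p\cdot 1/p)=1/2$, giving $\Lambda^\beta=\sqrt{\Lambda}$. Substituting back yields $\mathrm{LB}^p_3(n,\Lambda)=\tilde\Theta(\max\{n^{1/p},\sqrt{\Lambda}\})$, as claimed.

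Since the underlying work is done in Theorem~\ref{thm:low-p}, there is essentially no obstacle here: the only ``risk'' is a bookkeeping one, namely verifying that the hypothesis ranges on $p$ in the theorem are indeed the ranges $[1,\varphi]$ and $[1,2]$ claimed in the corollary for these specific values of $t$, and verifying that the algebraic simplifications above are correct. Both are immediate.
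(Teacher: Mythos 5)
Your proposal is correct and is exactly the derivation the paper intends: the corollary is stated as an immediate instantiation of Theorem~\ref{thm:low-p} at $t=2$ and $t=3$, and your algebraic simplifications ($\alpha=p/(p+1)$ and $\beta=1/2$) check out. Nothing more is needed.
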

Note that the dependence on $n$ for this range of parameters is minimal. In fact, the only dependence on $n$ is due to the fact that any connected $n$-vertex graph (such as the spanner of a connected $n$-vertex graph) must have $p$-norm at least $n^{1/p}$. If we remove the condition that the graph must be connected, the lower bounds in Theorem~\ref{thm:low-p} become $\tilde\Theta(\Lambda^{\alpha})$ and $\tilde\Theta(\Lambda^{\beta})$.

For higher values of $p$, the lower becomes more complex. In particular, the parameters which determine the extremal spanner depend not only on $p$ and $t$, but also on the \emph{$p$-log density} of the graph, which we define to be $\log_n(\Lambda)$. This parameter generalizes the notion of log-density, which is at the heart of several recent breakthroughs in approximation algorithms~\cite{BCCFV10,CDK12,CDM17,CMMV17,CM18}, in which log-density was used to mean $p$-log density for $p=1$ or $p=\infty$. As in that line of work, the structure and parameters of the graphs of interest here (the extremal spanners) is a function of the $p$-log density of our graph which does not depend on $n$. The complete technical details of our lower bound appear in Section~\ref{sec:LB-technical}.

\section{Details of Our Lower Bound}\label{sec:LB-technical}

\subsection{Main Technical Theorems and Overview}
We now focus on analyzing and describing the lower bound $\mathrm{LB}_{t}^p(n, \Lambda)$. As stated in Theorem~\ref{thm:lower-meta}, the main tool in our analysis is a small linear program which calculates this lower bound for any value of $t,p,\Lambda$. For technical reasons, we will focus here on a slightly different, but closely related lower bound:
$$\overline{\mathrm{LB}}_t^p(n,\Lambda)=\min_{\substack{G=(V, E): |V|=n,\\ \|G\|_p = \Lambda}} \;\min_{H:\text{ $H$ is a $t$-spanner of $G$}}\|H\|_p$$

Note the only difference between $\overline{\mathrm{LB}}$ and $\mathrm{LB}$: in the definition of $\overline{\mathrm{LB}}$ we do not require that the graph $G$ be connected, or even that it will not have any isolated vertices. This may seem like a strange choice, since any reasonable analysis of approximation algorithms for spanners (the motivation for our lower bound) would assume wlog that the graph is connected. However, this assumption would make the presentation of our lower bound unwieldy. Fortunately, there is a very simple and straightforward connection between these two definitions:

\begin{claim}
For any $p,t,n$ and $\Lambda\geq 2n^{1/p}$, we have $$\mathrm{LB}_{t}^p(n, \Lambda)=\Theta(\max\{n^{1/p},\overline{\mathrm{LB}}_t^p(n,\Lambda)\}).$$
\label{clm:LB-defs}
\end{claim}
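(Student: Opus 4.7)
\medskip

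\noindent\textbf{Proof proposal.} The $\geq$ direction is immediate. Since the minimum defining $\mathrm{LB}_t^p(n,\Lambda)$ is taken over a subset (the connected graphs) of those considered by $\overline{\mathrm{LB}}_t^p(n,\Lambda)$, we have $\mathrm{LB}_t^p(n,\Lambda)\geq \overline{\mathrm{LB}}_t^p(n,\Lambda)$. Moreover, any $t$-spanner $H$ of a connected $n$-vertex graph must itself be connected (every edge of $G$ is $t$-spanned in $H$, so $H$ has finite diameter), so every vertex has degree at least $1$ in $H$, giving $\|H\|_p \geq n^{1/p}$ and hence $\mathrm{LB}_t^p(n,\Lambda) \geq \max\{n^{1/p}, \overline{\mathrm{LB}}_t^p(n,\Lambda)\}$.

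For the $\leq$ direction, the plan is to take the extremal (possibly disconnected) graph $G_0$ and its optimal $t$-spanner $H_0$ certifying $\overline{\mathrm{LB}}_t^p(n,\Lambda)$, and patch $G_0$ into a connected graph by adding a Hamiltonian path $P$ on $V(G_0)$ (using edges not already in $G_0$, which exist whenever $G_0$ is not too dense; the dense case is trivial). Set $G^* = G_0\cup P$ and $H^* = H_0\cup P$. Then $G^*$ is connected, and $H^*$ is a valid $t$-spanner of $G^*$ because $H_0$ already $t$-spans every edge of $G_0$ and each edge of $P$ is $1$-spanned by itself in $H^*$. Since each vertex's degree grows by at most $2$, the triangle inequality for the $p$-norm gives $\|G^*\|_p \leq \Lambda + 2n^{1/p}$ and $\|H^*\|_p \leq \overline{\mathrm{LB}}_t^p(n,\Lambda) + 2n^{1/p}$.

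The main obstacle is that this construction produces a witness graph whose $p$-norm is $\Lambda + O(n^{1/p})$ rather than exactly $\Lambda$, while the definitions of $\mathrm{LB}$ and $\overline{\mathrm{LB}}$ both require the input graph to have $p$-norm exactly $\Lambda$. The hypothesis $\Lambda \geq 2n^{1/p}$ is precisely what makes this additive slack a constant factor of $\Lambda$ rather than super-constant. I would bridge the gap by a mild continuity argument on the LP of Section~\ref{sec:LP}: perturbing the target $\Lambda$ by an $O(n^{1/p})$ additive amount perturbs $\overline{\mathrm{LB}}_t^p(n,\Lambda)$ by at most a constant multiplicative factor, because the LP's feasible region and objective scale smoothly in the $\Lambda$-parameter. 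This lets me apply the construction above starting from the extremal graph at some $\Lambda_0 \in [\Lambda/2,\Lambda]$ chosen so that $\|G^*\|_p \geq \Lambda$, and then attach a small adjustment gadget (a few additional path/matching edges, or conversely drop a few carefully chosen redundant edges) to land at $\|G^*\|_p = \Lambda$ exactly while only changing the spanner norm by $O(n^{1/p})$. The result is a connected $G^*$ with $\|G^*\|_p=\Lambda$ admitting a $t$-spanner of $p$-norm $O(\overline{\mathrm{LB}}_t^p(n,\Lambda)+n^{1/p}) = O(\max\{n^{1/p},\overline{\mathrm{LB}}_t^p(n,\Lambda)\})$, establishing the claim.
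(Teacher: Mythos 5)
Your proposal is correct and follows essentially the same route as the paper: the lower bound via connectivity forcing minimum degree $1$ in any spanner, and the upper bound by taking the extremal (possibly disconnected) pair $(G_0,H_0)$ for $\overline{\mathrm{LB}}$ and adding a path to both graphs, which costs only $O(n^{1/p})=O(\Lambda)$ in $p$-norm (the paper routes the path through one representative vertex per component rather than a Hamiltonian path, but the degree increase is at most $2$ per vertex either way). The only divergence is your final paragraph about hitting $\|G^*\|_p=\Lambda$ exactly; the paper silently absorbs this into the $\Theta(\cdot)$ and your extra care, while slightly sketchy as written, addresses a technicality rather than a gap in the main argument.
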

\begin{proof}
  Clearly, this is an lower bound on $\mathrm{LB}_{t}^p(n, \Lambda)$, since if $G$ is connected, any $t$-spanner of $G$ must also be connected, and in particular have minimum degree at least $1$, giving a lower bound of $n^{1/p}$ on the $\ell_p$ norm of the minimum spanner.
  
  On the other hand, this is also an upper bound. Indeed, let $G$ and $H$ be graphs matching the bound in the definition of $\overline{\mathrm{LB}}_t^p(n,\Lambda)\geq n^{1/p}$. If we let $C_1,\ldots,C_s$ be the connected components of $G$, then we can add a path $v_1,\ldots,v_s$ connecting some arbitrary choice of vertices $v_i\in C_i$ to both $G$ and $H$. This will have the effect of making $G$ connected, while adding at most $O(n^{1/p})$ (which is also $O(\Lambda)$) to the $\ell_p$ norm of both $G$ and $H$, which gives the upper bound on $\mathrm{LB}_{t}^p(n, \Lambda)$.
\end{proof}

To understand the structure of our extremal graphs, it will be helpful to consider the $\ell_p$ norm of the graphs and their spanners through the lens of log-density.

\begin{definition} The $p$-log density of an $n$-vertex graph $G$ is defined to be $\log_n\|G\|_p$.
\label{def:plog-density}
\end{definition}

For consistency, we will use $\lambda$ to denote the $\ell_p$ norm of our graph (so $\Lambda=n^{\lambda}$), and $\ell$ to denote the $\ell_p$ norm of the extremal spanner (so $\overline{\mathrm{LB}}_t^p(n,n^{\lambda})=n^{\ell}$).

Note that the $p$-log density of a graph can range up to $1+1/p$ (the $p$-log density of a clique). For most of the possible range of $\lambda$, we have a consistent lower bound:

\begin{theorem} For all $t\in\nats$, $p\geq1$, there exist nonnegative integers $L,C,R (=L(p,t),C(p,t),R(p,t))$, derivable from our LP, such that $L\leq R$ and $L+C+R=t$, and such that for all $\lambda\leq1+E_{C,L}/(pE_{C,R})$, we have $\overline{\mathrm{LB}}_{t}^p(n, n^\lambda)=\tilde\Theta(n^{\ell})$, where
$$\ell=\left\{\begin{array}{ll}\displaystyle\frac{1+p/C}{E_{C,L}+pE_{C,R}}\cdot\lambda&\text{if }C>0,\\
\phantom{.}\\
\displaystyle\frac{p}{E_{1,L}-1+p(E_{1,R}-1)}\cdot\lambda&\text{if }C=0.\end{array}\right.$$ for $$E_{i,j}:=1+\frac{p}{i}\left(1-\left(\frac{p-1}{p}\right)^j\right).$$
\label{thm:LB-nice}
\end{theorem}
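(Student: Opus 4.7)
The plan is to prove this by an explicit construction for the upper bound and linear programming duality (via the LP of Theorem~\ref{thm:lower-meta}) for the matching lower bound. The three-section template described just above Theorem~\ref{thm:LB-nice} already pins down the form of the extremal spanner; the work is in verifying the claimed exponents and in certifying optimality.

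First, for the upper bound, I would construct the claimed extremal pair $(G,H)$ directly from the triple $(L,C,R)$. The spanner $H$ is layered on $V_0,\ldots,V_t$ with $|V_i|$ strictly decreasing for $0\leq i\leq L$, constant for $L\leq i\leq L+C$, and strictly increasing for $L+C\leq i\leq t$; between consecutive layers $H$ carries a biregular bipartite graph. Writing $v_i=\log_n|V_i|$ and $n^{d_i}$ for the degree from $V_i$ into $V_{i+1}$, the design constraints are that every bipartite block contributes equally to $\|H\|_p^p$ (so $v_i+pd_i$ is essentially constant), and that on the plateau biregularity with constant $v_i$ yields a linear recursion on the $d_i$ whose characteristic ratio is $(p-1)/p$; this is the source of the $((p-1)/p)^j$ terms inside $E_{i,j}$. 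The graph $G$ is then obtained by adding a near-biclique between $V_0$ and $V_t$ so that $v_0+v_t=p\lambda$ dominates $\|G\|_p^p$, while $H$ still spans each such edge with stretch $\leq t$ because of the layered BFS structure. Solving the resulting linear system in the $v_i,d_i$ produces the closed forms of $\ell$ in the two cases $C>0$ and $C=0$, and the hypothesis $\lambda\leq 1+E_{C,L}/(pE_{C,R})$ is exactly what keeps this solution feasible (non-negative, correctly monotone layer sizes).

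For the matching lower bound, I would first reduce an arbitrary pair $(G,H)$ with $\|G\|_p=n^\lambda$ to this layered biregular form at a cost of $\log(n)^{O(t)}$ in the $p$-norms, via standard bucketing: partition vertices of $G$ by degree into $O(\log n)$ buckets, select a biregular bipartite subgraph of $G$ between two classes $A,B$ carrying a $1/\mathrm{polylog}(n)$ fraction of $\|G\|_p^p$, BFS inside $H$ from $A$ (so $B$ lies within layer $t$ since $H$ is a $t$-spanner), and iteratively bucket each BFS layer by both in- and out-degree with respect to the layer structure. The resulting parameters form a feasible primal solution to the LP of Theorem~\ref{thm:lower-meta}. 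I would then exhibit an explicit dual solution whose objective evaluates to the $\ell$ in the statement, supported on three groups of constraints corresponding to the decreasing, plateau, and increasing sections of the extremal spanner. Verifying dual feasibility and complementary slackness amounts to a telescoping computation on geometric sums, and it is precisely these sums that produce the expressions $E_{C,L}$ and $E_{C,R}$.

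The main obstacle will be this LP-duality step: identifying the correct $(L,C,R)$ as a function of $(p,t,\lambda)$, writing down the explicit dual witness, and checking algebraically that it achieves the stated $\ell$. The bound $\lambda\leq 1+E_{C,L}/(pE_{C,R})$ should emerge as exactly the parameter range in which this specific triple dominates every competing partition; outside this range a different $(L',C',R')$ becomes tight, handled in companion results beyond this excerpt. I expect the degenerate case $C=0$ to require separate treatment, since the plateau recursion collapses: the numerator $1+p/C$ must be replaced by $p$ in the limit, and each $E_{1,j}$ loses its leading $1$, which matches the second formula for $\ell$ in the theorem. Once the dual is pinned down, substituting the geometric sums from the three sections closes the bound within a few lines of algebra.
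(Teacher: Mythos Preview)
Your proposal is correct and takes essentially the same approach as the paper: the upper bound via the explicit $(L,C,R)$-minimal spanner construction (the paper's Lemma~\ref{lem:LCR-LB} carries out exactly the computation you sketch), the reduction of an arbitrary $(G,H)$ to layered biregular form via iterated bucketing and pruning to feed the LP, and optimality certified by an explicit dual solution satisfying complementary slackness with respect to that primal. One minor correction to your sketch: the $(p-1)/p$ geometric recursion on the degrees lives in the \emph{increasing and decreasing} sections (from the equal-contribution constraint $n_i d_{i+1}^p = n_{i+1} d_{i+2}^p$, which forces $d_{i+2}=d_{i+1}^{(p-1)/p}$), not on the plateau, where the $d_i$ are constant and equal to $n_L^{1/C}$.
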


The parameters $L,C,R$ can be derived from the linear program, and we will calculate them, as functions of $p,t$, explicitly. Note that this bound does not depend at all on $n$. If $L=R$ above, then it follows that this bound applies to all $\lambda$ up to $1+1/p$. That is, for graphs of every possible $p$-log density. However, as we shall see, this will only be the case for relatively small values of $p$. When $L<R$, there will be a high range of $\lambda$ for which the parameters and graph structures will depend on the $p$-log density $\lambda$, not only on $p,t$.

\begin{theorem} For all $t\in\nats$, $p\geq 1$, let $L,C,R$ be as in Theorem~\ref{thm:LB-nice}, and for all $i=0,1,\ldots,R-L$, define
 $$L_i=L+\lfloor i/2\rfloor\qquad C_i=C+\lceil i/2\rceil-\lfloor i/2\rfloor \qquad R_i=R-\lceil i/2\rceil.$$
 If $L>0$, then for every $i\in[R-L]$, if $$\lambda=1+\theta\cdot\frac{E_{C_{i-1},L_{i-1}}}{pE_{C_{i-1},R_{i-1}}}+(1-\theta)\cdot\frac{E_{C_i,L_i}}{pE_{C_i,R_i}},$$
 then $\mathrm{LB}_{t}^p(n, n^\lambda)=\tilde\Theta(n^{\ell})$, where
 $$\ell=\theta\cdot\left(\frac{1}{p}+\frac{1}{C_{i-1}}\right)\cdot\frac{1}{E_{C_{i-1},R_{i-1}}}+(1-\theta)\cdot\left(\frac{1}{p}+\frac{1}{C_{i}}\right)\cdot\frac{1}{E_{C_{i},R_{i}}}.$$
 If $L=0$, then for every $i\in[R]$, if $$\lambda=1+\theta\cdot\frac{1}{pE_{C+i-1,R-i+1}}+(1-\theta)\cdot\frac{1}{pE_{C+i,R-i}},$$
\label{thm:LB-not-nice}
\end{theorem}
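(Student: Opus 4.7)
The plan is to analyze the $O(t)$-sized linear program of Theorem~\ref{thm:lower-meta} (formally given in Section~\ref{sec:LP}), whose optimum equals $\mathrm{LB}^p_t(n,n^\lambda)$ up to polylogarithmic factors. As discussed earlier, a feasible primal solution corresponds to a layered spanner $H$ divided into a decreasing ``left'' section of length $L'$, a constant ``middle'' section of length $C'$, and an increasing ``right'' section of length $R'$ (with $L'+C'+R'=t$), together with a biclique between $V_0$ and $V_t$ accounting for $\|G\|_p$. Theorem~\ref{thm:LB-nice} identifies the optimal LP basis on the initial interval $\lambda \leq 1 + E_{C,L}/(pE_{C,R})$, where the optimal parameters are $(L_0,C_0,R_0)=(L,C,R)$. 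The goal here is to show that as $\lambda$ increases past this threshold, the parametric LP optimum is a piecewise-linear function of $\lambda$ with breakpoints exactly at the values $\lambda_i := 1 + E_{C_i,L_i}/(pE_{C_i,R_i})$, and that on each sub-interval $[\lambda_{i-1},\lambda_i]$ the optimum equals the convex combination (with weight $\theta$) of the values attained at the two endpoints.

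First I would establish the achievability direction by explicit construction: for each $i$ and $\theta\in[0,1]$, I would exhibit a layered graph whose layer cardinalities interpolate (in log-scale) between the configurations indexed by $i-1$ and $i$. The transitions are highly structured: going from $i-1$ to $i$, odd $i$ shifts one length unit from $R$ to $C$ while even $i$ shifts one unit from $C$ to $L$; the intermediate graph blends these choices by introducing a single ``boundary'' layer whose size varies linearly in $\theta$ on the log-scale. Substituting these parameters into the LP objective and using the recursive form of $E_{i,j}=1+(p/i)\bigl(1-((p-1)/p)^{j}\bigr)$ recovers exactly the claimed convex combination for $\ell$. Second, and this is the main technical step, I would certify optimality by producing a matching feasible dual solution. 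Guided by complementary slackness, I would set the dual support to consist of exactly those primal constraints that are tight at the constructed solution --- the monotonicity inequalities within each section, the ``equal-contribution'' identities between consecutive bipartite pieces of the spanner, and the constraint fixing $\|G\|_p=n^\lambda$ via the outer biclique. These conditions determine the remaining dual variables through a small linear system which can be solved in closed form using the algebraic identities for $E_{i,j}$; feasibility of the remaining dual inequalities then reduces to simple monotonicity properties of $E_{i,j}$ in both indices.

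The case $L=0$ is handled analogously but with a distinct set of configurations: even-parity transitions degenerate (there is no ``left section'' to absorb mass shifted away from $C$), so the dual support and the resulting $\ell$ formula take the simpler form obtained by dropping the left-section factors, consistent with the $\lambda$ formula stated for this case. The main obstacle is on the dual side: correctly identifying the active LP basis across all $R-L$ transitions and both parities of $i$, since the ``correct'' set of tight constraints changes on each sub-interval and must be re-derived each time. Once the basis is pinned down for each parity, the algebraic identities for the sequence $E_{i,j}$ do most of the work, reducing the verification of each dual solution's objective value and feasibility to a handful of routine manipulations.
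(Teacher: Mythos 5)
Your proposal follows essentially the same route as the paper: the paper realizes your ``boundary layer'' interpolation via left- and right-skewed $(L,C,R)$-minimal spanners with $n_t=n$ held fixed, alternates right-skewed transitions (odd $i$, shifting a unit from $R$ to $C$) with left-skewed ones (even $i$, shifting from $C$ to $L$) exactly as you describe, degenerates to right-skewed-only transitions when $L=0$, and certifies optimality with explicit complementary-slackness dual solutions whose feasibility reduces to the conditions already derived for the nice range. The only detail to make explicit is that in this regime the size constraint $n_t\leq n$ becomes tight, so the dual support must include the corresponding variable $s$; with that addition your plan matches the paper's proof.
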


\subsection{Formal linear program}\label{sec:LP}

Up to polylogarithmic factors, the following linear program expresses the minimum possible $\ell_p$ norm of a $t$-spanner of an $n$-vertex graph with $p$-log-density at least $\lambda$ (i.e., $\ell_p$ norm at least $n^\lambda$):
\\

\noindent {\bf LP$(n,n^{\lambda},p,t)$:}
\begin{align} \min\qquad&n^\ell\nonumber\\
\text{s.t.}\qquad%&\Delta^{-1}\prod_{i=1}^t  d_i \geq 1\\
&n_{i-1}^{1/p}d_i\leq n^\ell&\forall i\in[t]\label{LP1:left-norm}\\
&n_i^{1/p}(n_{i-1}d_i/n_i) \leq n^\ell&\forall i\in[t]\label{LP1:right-norm}\\
&d_i \leq n_i&\forall i\in[t]\label{LP1:degree-size}\\
&n_{i-1}d_i\geq n_i&\forall i\in[t]\label{LP1:right-deg}\\
&\Delta_1=d_1\label{LP1:d_1}\\
&\Delta_i\leq \Delta_{i-1}d_i &\forall i\in\{2,\ldots,t\}\label{LP1:i-degree}\\
&\Delta_i\leq n_i &\forall i\in\{2,\ldots,t\}\label{LP1:i-degree-size}\\
&n_0^{1/p}\Delta_t\geq n^\lambda\label{LP1:G-norm}\\
%&n_0\leq n_t\label{LP1:outer-layers}\\
&n_i\leq n&\forall i\in\{0,\ldots,t\}\label{LP1:size-bound}\\
&n^\lambda,n_0,\ldots,n_t,d_1,\ldots,d_t,\Delta_1,\ldots,\Delta_t\geq 1
\end{align}

While the constraints are not linear, if we think of $\lambda$ as a constant and take $\log$ base $n$ of all the expressions, this becomes a linear program in the variables $\ell,\log_n n_0,\ldots,\log_n n_t,\log_n d_1,\ldots,\log_n d_t$, $\log_n\Delta_1,\ldots,\log_n\Delta_t$. Note that after this manipulation, we get a linear program which is independent of $n$. It is only a function of $p$, $t$, and the $p$-log density, $\lambda$. Note also that this is indeed a linear program of size $O(t)$. To prove Theorem~\ref{thm:lower-meta}, we need to show that it does in fact compute $\overline{\mathrm{LB}}_t^p(n,\Lambda)$ (which gives us $\mathrm{LB}_t^p(n,\Lambda)$ by Claim~\ref{clm:LB-defs}). The proof of the theorem follows from two basic lemmas. First, we show that the LP gives an upper bound on $\overline{\mathrm{LB}}$:

\begin{lemma} For any $p,t,n,\Lambda$, any feasible solution to $\mathrm{LP}(n,n^{\lambda},p,t)$ corresponds to a $t$-spanner with $\ell_p$ norm $O(n^{\ell}\log n)$ of an $n$-vertex graph with $\ell_p$ norm at least $n^{\lambda}$.
\end{lemma}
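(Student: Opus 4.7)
The plan is to construct an $n$-vertex graph $G$ and a $t$-spanner $H$ of $G$ directly from a feasible LP solution, with $\|H\|_p = O(n^\ell \log n)$ and $\|G\|_p \geq n^\lambda$. Given variables $(n_i, d_i, \Delta_i)_{0 \le i \le t}$, first round each to a nearby integer at only constant-factor loss, and set up disjoint layers $V_0,\ldots,V_t$ of sizes $|V_i|=\lceil n_i\rceil$, adding isolated vertices if needed to reach exactly $n$ total. Between each consecutive pair of layers $V_{i-1},V_i$, I would place a (random) biregular bipartite graph with left-degree $d_i$ on the $V_{i-1}$ side and right-degree $n_{i-1}d_i/n_i$ on the $V_i$ side; these degrees are valid integers, bounded by the opposite layer size by~\eqref{LP1:degree-size} and~\eqref{LP1:right-deg}. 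The spanner $H$ is the union of these $t$ bipartite graphs. The graph $G$ is $H$ together with, for each $u \in V_0$, exactly $\Delta_t$ further edges to vertices in $V_t$ chosen from those reachable from $u$ in $H$ within $t$ hops; every such added edge is spanned by construction.

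For the norm bounds: the $i$-th bipartite graph in $H$ contributes at most $n_{i-1}^{1/p}d_i \le n^\ell$ from its $V_{i-1}$ side and at most $n_i^{1/p}(n_{i-1}d_i/n_i) \le n^\ell$ from its $V_i$ side, by~\eqref{LP1:left-norm} and~\eqref{LP1:right-norm}. Since every internal layer vertex lies in two consecutive bipartite graphs, applying $(a+b)^p \le 2^{p-1}(a^p+b^p)$ to combine degree contributions gives $\|H\|_p = O(t\, n^\ell)$, which we absorb into $O(n^\ell \log n)$. For $G$: each $u \in V_0$ gains $\Delta_t$ added edges, so by~\eqref{LP1:G-norm}, $\|G\|_p \ge n_0^{1/p}\Delta_t \ge n^\lambda$, as required.

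The crux of the argument is verifying the spanner property: for the added $V_0$--$V_t$ edges to be consistently defined and spanned within $t$ hops, each $u \in V_0$ must reach at least $\Delta_t$ vertices of $V_t$ within $t$ hops of $H$. I would prove by induction on $i$ that BFS from $u$ reaches $\Omega(\Delta_i)$ distinct vertices in $V_i$ (up to polylog factors). The base case~\eqref{LP1:d_1} gives exactly $\Delta_1 = d_1$ vertices at $V_1$. For the inductive step, constraints~\eqref{LP1:i-degree} and~\eqref{LP1:i-degree-size} imply $\Delta_i \le \min(\Delta_{i-1}d_i,\, n_i)$, and a standard balls-in-bins concentration argument on the random biregular bipartite graph shows that when each of $\Omega(\Delta_{i-1})$ previously reached vertices in $V_{i-1}$ sends $d_i$ random edges into $V_i$, the number of distinct targets is $\Omega(\min(\Delta_{i-1}d_i,\, n_i))$ with high probability; a union bound over all $u \in V_0$ yields reachability simultaneously for all $u$. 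The polylog slack from this concentration argument together with the integer rounding is the main source of the $\log n$ factor in the final bound $O(n^\ell \log n)$, and is where the bulk of the technical care in the proof is spent.
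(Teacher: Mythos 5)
Your construction is essentially the paper's, but it omits the one ingredient that makes the argument close: the paper gives each vertex of $V_{i-1}$ not $d_i$ but $\min\{d_i\log n,\, n_i\}$ random neighbors in $V_i$. That $\log n$ boost is not incidental bookkeeping --- it is the entire reason the lemma's bound is $O(n^\ell\log n)$ rather than $O(n^\ell)$, and it is what makes the reachability induction \emph{lossless}. With the boosted degrees, a set $S_i\subseteq V_i$ of reached vertices sends $|S_i|\cdot d_{i+1}\log n$ random edges into $V_{i+1}$, and a Chernoff bound gives at least $\min\{|S_i|d_{i+1},\,n_{i+1}\}\geq \Delta_{i+1}$ \emph{distinct} targets (no constant or polylog loss), with failure probability small enough to union-bound over all $u\in V_0$ and all levels. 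Consequently every $u\in V_0$ genuinely reaches $\Delta_t$ vertices of $V_t$, and $\|G\|_p\geq n_0^{1/p}\Delta_t\geq n^\lambda$ holds exactly, as the lemma requires.

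Your version, with left-degree exactly $d_i$, breaks at both of these points. First, when $d_{i+1}$ or $|S_i|$ is small (say both are $O(1)$), throwing $|S_i|d_{i+1}$ balls into $n_{i+1}$ bins does not concentrate: the event that the distinct-target count falls below $\Omega(\min\{|S_i|d_{i+1},n_{i+1}\})$ can have probability that is only polynomially small in $n_{i+1}$ (not in $n$), so the union bound over the $n_0$ vertices of $V_0$ can fail. Second, even where concentration does hold, it only yields $\Omega(\cdot)$ with a per-level constant (or, as you concede, polylog) loss, which compounds over $t$ levels; you then end up with $\|G\|_p\geq n^\lambda/\mathrm{polylog}(n)$, whereas the lemma demands $\|G\|_p\geq n^\lambda$ with no slack. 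You have in effect spent your $\log n$ budget in the wrong place: you ``absorb'' it into the bound on $\|H\|_p$ without using it, and then have nothing left to pay for the expansion in $G$. The repair is exactly the paper's move --- inflate the random degrees by a $\log n$ factor, accept that each bipartite layer now contributes $O(n^\ell\log n)$ to $\|H\|_p$ via the LP constraints, and in exchange get $\Delta_{i+1}$ reachable vertices at every level with no loss. The rest of your argument (the layer sizes, the biregularity checks against the LP constraints, the norm accounting for $H$, and the definition of $G$ via the reachable pairs) matches the paper and is fine.
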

\begin{remark}
The additional $O(\log n)$ factor is not necessary, since at the very least for our family of optimal solutions, we can easily construct matching graphs with at most a constant loss (for fixed $t$). However, we do not focus on this point since at any rate there is a polylogarithmic loss in the other direction.
\end{remark}
\begin{proof}
 Let $\lambda,n_0,\ldots,n_t,d_1,\ldots, d_t,\Delta_1,\ldots,\Delta_t$ be a feasible solution. Then we construct a spanner $H$ as follows: Define disjoint vertex layers $V_0,\ldots,V_t$ of size $n_0,\ldots,n_t$, respectively. If these layers contribute less than $n$ vertices, add isolated vertices to reach size $n$. If the contribute more, then note by constraint~\eqref{LP1:size-bound} that they contribute at most  $tn=O(n)$, so the after the construction we can scale the whole graph down by at most a constant factor $t$. Now for every $i\in[t]$, for every vertex $v\in V_{i-1}$, independently add edges from $v$ to $\min\{d_i\log n,n_i\}$ random neighbors in $V_i$. The graph $G$ is then defined as follows: $G$ includes all the edges in $H$, and in addition, for all $u\in V_0$, $v\in V_t$, add an edge $(u,v)$ iff there is a path of length $t$ from $u$ to $v$ in $H$. Note that $H$ is clearly a $t$-spanner of $G$ by construction.
 
 For every $i\in[t]$, every vertex in $V_{i-1}$ has $O(d_i\log n)$ neighbors in $V_i$ by definition, and w.h.p.\ every vertex in $V_i$ has $O((n_{i-1}d_i/n_i)\log n)$ neighbors in $V_{i-1}$ by a simple Chernoff bound and constraint~\eqref{LP1:right-deg}. Thus by constraints~\eqref{LP1:left-norm} and~\eqref{LP1:right-norm}, the $\ell_p$ norm of $H$ is bounded by $O(t^{1/p}n^{\ell}\log n)$. Since $t$ is fixed, it remains only to show that the $\ell_p$ norm of $G$ is at least $n^{\lambda}$.
 
 To show this, we will show that from every $v\in V_0$, for every $i\in[t]$, there are paths of length $i$ to at least $\Delta_i$ nodes in $V_i$. In particular, this means that every vertex $v\in V_0$ has degree at least $\Delta_t$ in $G$, which by constraint~\eqref{LP1:G-norm} gives the desired lower bound on $\|G\|_p$. The claim can be shown by induction. For $i=1$, it follows immediately from the definition of $H$ and constraints~\eqref{LP1:degree-size} and~\eqref{LP1:d_1} that the degree of $v$ is at least $d_1$. Now let $i\in[t-1]$, and let $S_i$ be the set of vertices in $V_i$ reachable from $v$ by a path of length $i$. By the inductive hypothesis, we have $|S_i|\geq\Delta_i$. Since by construction, every vertex $u\in\S_i$ has $\min\{d_{i+1}\log n,n_{i+1}\}$ random neighbors in $V_{i+1}$, it follows by a simple Chernoff bound that $S_i$ has at least $\max\{n_{i+1},d_{i+1}|S_i|\}$ neighbors in $V_{i+1}$. By constraints~\eqref{LP1:i-degree} and~\eqref{LP1:i-degree-size}, this is at least $\Delta_{i+1}$. Thus $v$ has paths of length $i+1$ to at least $\Delta_{i+1}$ nodes in $V_{i+1}$.
\end{proof}

Next, we show that the LP gives an lower bound on $\overline{\mathrm{LB}}$:

\begin{lemma} Let $G$ by an $n$-vertex graph with $\|G\|=n^{\lambda}$, and $H$ a $t$-spanner of $G$. Then $\mathrm{LP}(n,n^{\lambda},p,t)\leq\|H\|_p\log^{O(t)}(n)$.
\end{lemma}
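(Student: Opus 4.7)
The plan is to start from any $G$ and $t$-spanner $H$ and construct a feasible solution to $\mathrm{LP}(n,n^{\lambda},p,t)$ with objective $n^\ell \leq \|H\|_p \cdot \log^{O(t)} n$. The construction is an iterated dyadic bucketing argument in the spirit of the log-density framework used in \cite{BCCFV10,CDK12}: at each stage we pay a multiplicative $\log n$ factor to replace arbitrary structure with something ``regular enough'' to plug into the LP.

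First, to satisfy constraint~\eqref{LP1:G-norm}, I would locate $V_0$ and $\Delta_t$ by dyadic bucketing of the $G$-degrees: partition $V$ into $O(\log n)$ classes of the form $\{v : d_G(v)\in [2^i,2^{i+1})\}$, and select the class contributing the largest $\ell_p^p$-mass. This yields a set $V_0 \subseteq V$ on which every vertex has $G$-degree $\Theta(\Delta_t)$ and $n_0^{1/p}\Delta_t \geq n^\lambda / \log^{O(1)} n$, where $n_0=|V_0|$. Since $H$ is a $t$-spanner, for each $v\in V_0$ and each of its $\Theta(\Delta_t)$ $G$-neighbors $u$, there is an $H$-path from $v$ to $u$ of length at most $t$; pigeonholing over the $O(t)$ possible path lengths, I keep a further $1/O(t)$ fraction of these neighbors using paths of the same length $r^*$. (Layers $V_{r^*+1},\ldots,V_t$ can then be set equal to $V_{r^*}$ with $d_i=1$ and $\Delta_i=\Delta_{r^*}$ for $i>r^*$, making the remaining LP constraints trivial.)

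Next, to build the layered structure $(V_1,\ldots,V_{r^*},d_1,\ldots,d_{r^*},\Delta_1,\ldots,\Delta_{r^*})$, I would perform BFS from $V_0$ in $H$ and iteratively bucket each layer. At layer $i$, the vertices reachable from $V_0$ through the current partial structure have some distribution of forward $H$-degree toward the next layer and of backward $H$-degree toward the previous layer; I bucket both distributions dyadically and retain the heaviest class, losing $O(\log n)$ factors per layer. After $r^*$ rounds I obtain a biregular layered subgraph with well-defined $n_i,d_i$, forward degrees $\Theta(d_i)$ out of $V_{i-1}$ and backward degrees $\Theta(n_{i-1}d_i/n_i)$ into $V_i$ (yielding~\eqref{LP1:right-deg} up to constants), and per-source reachability values $\Delta_i=\Theta(\Delta_{i-1}d_i)$ (yielding \eqref{LP1:d_1}--\eqref{LP1:i-degree}). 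Constraints~\eqref{LP1:degree-size}, \eqref{LP1:i-degree-size}, \eqref{LP1:size-bound} are automatic, and~\eqref{LP1:G-norm} holds by our choice of $V_0$. Crucially, the bipartite piece of $H$ between $V_{i-1}$ and $V_i$ contributes at least $\max\{n_{i-1}^{1/p}d_i,\, n_i^{1/p}(n_{i-1}d_i/n_i)\}$ to $\|H\|_p$, directly giving constraints~\eqref{LP1:left-norm} and~\eqref{LP1:right-norm} with $n^\ell = \|H\|_p \cdot \log^{O(t)} n$.

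The main obstacle will be coordinating the iterated bucketing: whenever we prune $V_i$ to enforce biregularity we can easily destroy the reachability $\Delta_i$ from $V_0$ that we were tracking, and whenever we prune $V_{i-1}$ we can destroy the backward-degree uniformity at $V_i$ that we just established. The standard fix, which I would carry out carefully here, is to perform the thinning in alternating rounds of ``trim by forward degree'' and ``trim by backward degree,'' propagating each trim back to $V_0$ by removing any source whose surviving reachability has dropped below its bucket. Each such round costs only a $1/\log n$ factor in the tracked quantity (the contribution to $\|G\|_p$, or the current $\Delta_i$, or $n_i$), and since there are $O(t)$ layers and $O(1)$ rounds per layer, the cumulative loss is $\log^{O(t)} n$ as required.
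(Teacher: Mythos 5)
Your skeleton is the same as the paper's: dyadically bucket the $G$-degrees to pick out $V_0$ and $\Delta_t$, take the union of spanner paths to form a layered graph, prune each layer to near-regularity, and read off an LP solution whose norm constraints are witnessed by the bipartite pieces of $H$. But the step you yourself flag as ``the main obstacle'' is where your argument has a real gap, and the paper resolves it with an idea you don't supply. You ``retain the heaviest class'' at each layer and then patch up reachability by ``removing any source whose surviving reachability has dropped below its bucket,'' with alternating forward/backward trims. Two problems. First, heaviest in what measure? A forward-degree bucket of $V_i$ that is heaviest by cardinality or by edge mass can still carry almost none of the reachability from $V_0$ to the $G$-neighbors you are trying to preserve, so your subsequent source-removal step can delete nearly all of $V_0$ and destroy constraint~\eqref{LP1:G-norm}. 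Second, the back-propagation scheme is not analyzed: removing sources changes $n_0$ and the backward degrees at layer $1$, which by your own rules triggers re-trimming, and you give no argument that this cascade terminates after $O(1)$ rounds per layer with only polylogarithmic loss. The paper avoids all of this by pruning in a single backward pass ($i=t-1$ down to $0$), by forward degree only, and by choosing the surviving bucket $B^i_j$ to maximize the potential $\sum_{v\in\hat V_0}\bigl|\bigcup_{u\in B^i_j}\hat N^i_u(v)\bigr|^p$ --- the $p$-th moment of per-source reachability to genuine $G$-neighbors. Convexity of $x\mapsto x^p$ shows this potential drops by at most a $\lceil\log n\rceil^p$ factor per layer, which is exactly the $\log^{O(t)}(n)$ budget; no source is ever removed and no re-trimming is needed. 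Backward regularity, which you go to some lengths to enforce, is simply not required: constraint~\eqref{LP1:right-norm} only involves the \emph{average} backward degree $n_{i-1}d_i/n_i$, and since the $\ell_p$ contribution of $\hat V_i$ is minimized when all backward degrees are equal, the average already lower-bounds $\|H\|_p$.

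One smaller omission: after the pruning your solution only satisfies $n_0^{1/p}\Delta_t\geq n^{\lambda}/\log^{O(t)}(n)$, which violates constraint~\eqref{LP1:G-norm}, so the point is not feasible; since the LP value is a minimum over feasible points, you must rescale (replace $d_i,n_i,\Delta_i$ by $\min\{\sigma\hat d_i,n\}$ etc.\ for $\sigma=\log^{O(t)}(n)$) and absorb the resulting $\sigma^{1+1/p}$ blowup of the left-hand sides of~\eqref{LP1:left-norm} and~\eqref{LP1:right-norm} into the final $\log^{O(t)}(n)$ factor.
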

\begin{proof}
 Let $G$ and $H$ be as above. We will use a common bucketing and pruning argument to transform $H$ into a layered graph composed of a sequence of nearly regular graphs, which approximately satisfies the LP constraints. By our assumption, $\sum_{v\in V}d_G(v)^p=n^{\lambda p}$. If we partition the non-isolated vertices of $G$ by their degrees by defining $B_j=\{v\in V\mid 2^{j-1}\leq\deg(v)<2^j$ for all $j\in[\lceil\log n\rceil]$, then there is some choice of $\hat j$ such that $\sum_{v\in B_{\hat j}}d_G(v)=\Omega(n^{\lambda p}/\log n)$, and all vertices in $B_{\hat j}$ have roughly the same degree $\Delta:=2^{\hat j-1}$ (up to a factor 2). In particular, $|B_{\hat j}|\Delta^p=\Omega(n^{\lambda p}/(2^p\log n))$.
 
 We will construct a subgraph $\hat H$ of $H$ as follows: For every vertex $u\in B_{\hat j}$, and every $v\in N_G(u)$, $H$ contains a path of length at most $t$ from $u$ to $v$. W.l.o.g.\ we will assume this path is of length exactly $t$, by duplicating vertices. Since $t$ is fixed, this will affect the norms of our graphs by a constant factor, which we will ignore. Start with $\hat H$ as the union of all these paths for all $u\in B_{\hat j}$ and all $v\in N_G(u)$. This is a layered graph with $t+1$ layers $\hat V_0,\ldots, \hat V_t$ where for now $\hat V_0=B_{\hat j}$.
 
 Now perform the following pruning procedure:
 \begin{itemize}
     \item For all $i=t-1$ down to 0
     \begin{itemize}
         \item For every $u\in \hat V_i$, define $$\hat N^i_u(v):=\{w\in N_G(v)\mid u\text{ is the $i$th vertex on a path of length $t$ from $v$ to $w$ in }\hat H\}$$
         (note that these paths must be in the current version of $\hat H$).
         \item Partition $\hat V_i$ into $B^i_j=\{u\in \hat V_j\mid 2^{j-1}\leq |N_{\hat H}(u)\cap \hat V_{i+1}|<2^j\}$, for $j\in [\lceil\log n\rceil]$.
         \item Let $j_{i}$ be the value of $j$ that maximizes the quantity $\sum_{v\in \hat V_0}\left|\bigcup_{u\in B^i_j}\hat N^i_u(v)\right|^p$.
         \item Delete from $\hat V_i$ all vertices not in $B^i_{j_i}$ and their incident edges.
     \end{itemize}
 \end{itemize}

Note that after this pruning procedure, for all $i\in[t]$, the number of neighbors in $\hat V_i$ of every vertex in $\hat V_{i-1}$ is between $\hat d_i$ and $\hat 2d_i$ for $\hat d_i=2^{j_{i-1}-1}$. Let us examine what happens to the quantity 
\begin{equation}
  \sum_{v\in \hat V_0}|N^t_{\hat H}(v)\cap N_G(v)|^p    
  \label{eq:pruning-norm}
\end{equation}
 after every iteration of the pruning loop. Note that we have
\begin{align*}
    \sum_{v\in \hat V_0}|N^t_{\hat H}(v)\cap N_G(v)|^p&=\sum_{v\in \hat V_0}\left|\bigcup_{u\in\hat V_i}\hat N^i_u(v)\right|^p\\
    &= \sum_{v\in \hat V_0}\left|\bigcup_{j=1}^{\lceil\log n\rceil}\bigcup_{u\in B^i_j}\hat N^i_u(v)\right|^p\\
    &\leq \sum_{v\in \hat V_0}\left(\sum_{j=1}^{\lceil\log n\rceil}\left|\bigcup_{u\in B^i_j}\hat N^i_u(v)\right|\right)^p\\
    &\leq \sum_{v\in \hat V_0}\lceil\log n\rceil^{p-1}\sum_{j=1}^{\lceil\log n\rceil}\left|\bigcup_{u\in B^i_j}\hat N^i_u(v)\right|^p&\text{by convexity of }f(x)=x^p\\
    &=\lceil\log n\rceil^p\cdot \frac{1}{\lceil\log n\rceil}\sum_{j=1}^{\lceil\log n\rceil}\sum_{v\in \hat V_0}\left|\bigcup_{u\in B^i_j}\hat N^i_u(v)\right|^p.
\end{align*}
Thus, there is at least one choice of $j\in\lceil\log n\rceil$ such that deleting from $\hat V_i$ all vertices not in $B^i_{j}$ reduces~\eqref{eq:pruning-norm} by at most a factor of $\lceil\log n\rceil^p$. After the loop is completed, the quantity~\eqref{eq:pruning-norm} has been reduced by at most $\lceil\log n\rceil^{pt}$.

We can now define the following LP solution, based on the pruned graph $\hat H$: Let $\hat d_i$ be as above, let $\hat n_i=|\hat V_i\|$, and inductively define $\hat\Delta_1=\hat d_1$, and $\hat\Delta_i=\min\{\hat n_i,\hat\Delta_{i-1}\hat d_i\}$. Let us consider the various LP constraints. For constraint~\eqref{LP1:left-norm}, note that $\hat n_{i-1}\hat d_i\leq \left(\sum_{v\in \hat V_{i-1}}|N_{\hat H}(v)\cap \hat V_{i}|^p\right)^{1/p}\leq\|H\|_p$. Constraint~\eqref{LP1:right-norm} is less immediate, since the nodes in $\hat V_{i}$ don't all have roughly the same number of neighbors in $\hat V_{i-1}$. However, on average, the have at least $\hat n_{i-1}\hat d_{i-1}/\hat n_i$ neighbors in $\hat V_{i-1}$, and their contribution to the $\ell_p$ norm is minimized when they all have the same degree, so this bounds the left hand side again by $\|H\|_p$. Constraint~\eqref{LP1:degree-size} follows since clearly no node can have more neighbors in $\hat V_i$ than the number of vertices in $\hat V_i$. Constraints~\eqref{LP1:d_1} through~\eqref{LP1:i-degree-size} follow by construction, and constraint~\eqref{LP1:size-bound} follows since $\hat V_i$ is a subset of $V$. Finally, consider constraint~\eqref{LP1:G-norm}. It is easy to see by induction that every vertex in $\hat V_0$ can reach at most $2^i\hat\Delta_i$ nodes in $\hat V_i$ by a paths of length $i$. This follows since $\hat V_i$ only contains $\hat n_i$ nodes, and since if it reaches $2^{i-1}\hat\Delta_{i-1}$ nodes in $V_{i-1}$, each of those can account for at most $2\hat d_i$ nodes it can reach in $\hat V_i$. Thus, it can reach at most $2^t\hat\Delta_t$ nodes in $\hat V_t$, and so the quantity~\eqref{eq:pruning-norm} is at most $\hat n_0 2^{tp}\hat\Delta^{tp}$. On the other hand, we've also shown that this quantity is at least $\Omega(n^{\lambda p}/(2^p(\log n)^{pt+1})$. Thus we have $\hat n_0^{1/p}\hat\Delta_t=\Omega(n^\lambda/(2^t(\log n)^{t+1/p}))$.

Thus, this solution satisfies all the constraints except for constraint~\eqref{LP1:G-norm}, which is violated by at most a $\log^{O(t)}(n)$ factor. Thus, letting $\sigma:=n^{\lambda}/(\hat n_0^{1/p}\hat\Delta_t)=\log^{O(t)}(n)$, we can define a new solution $d_i:=\min\{\sigma\hat d_i,n\}$, $n_i:=\min\{\sigma\hat n_i,n\}$, and $\Delta_i:=\min\{\sigma\hat\Delta_i,n\}$, and it is not hard to see that this is a feasible LP solution. Since it increases the left hand side of constraints~\eqref{LP1:left-norm} and~\eqref{LP1:right-norm} by at most a $\sigma^{1+1/p}=\log^{O(t)}(n)$ factor, this gives the desired bound on the LP value.
\end{proof}

%\ecnote{Add explanation relating this LP to the bucketing argument.}

We relax this LP by eliminating a number of constraints, and combining other constraints into a new LP. As we will see, the new, simpler LP always has an optimal solution which satisfies all the above constraints, so the simpler LP gives the same bound. In particular, we eliminate constraint~\eqref{LP1:degree-size}, %~\eqref{LP1:outer-layers},
and constraints~\eqref{LP1:i-degree-size} and~\eqref{LP1:size-bound} for all $i$ except for $i=t$. We also combine constraints~\eqref{LP1:d_1} and~\eqref{LP1:i-degree} into a single constraint by multiplying them out (getting $\prod_{i=1}^t d_i\geq\Delta_t$), and write $\Delta=\Delta_t$. With these modifications, and rewriting the remaining constraints in standard form, we get the following LP:

\begin{align} \min\qquad&n^\ell\\
\text{s.t.}\qquad&\Delta^{-1}\prod_{i=1}^t  d_i \geq 1\label{LP2:spanning}\\
&n^\ell \cdot n_{i-1}^{-1/p}d_i^{-1}\geq 1&\forall i\in[t]\label{LP2:left-deg-norm}\\
&n^\ell \cdot n_i^{(p-1)/p}n_{i-1}^{-1}d_i^{-1} \geq 1&\forall i\in[t]\label{LP2:right-deg-norm}\\
%&d_i^{-1}n_i \geq 1&\forall i\in[t]\\
&n_i^{-1}n_{i-1}d_i\geq 1&\forall i\in[t]\label{LP2:right-deg-positive}\\
&n_0^{1/p}\Delta\geq n^\lambda\label{LP2:Lambda}\\
%&n_0^{-1}n_t\geq 1\\
&n_t\Delta^{-1}\geq 1\label{LP2:final-layer}\\
&n_t^{-1}\geq n^{-1}\label{LP2:n}\\
&n^\ell,\Delta,n_0,\ldots,n_t,d_1,\ldots,d_t\geq 1
\end{align}

If we associate dual variables $x,a_1,\ldots,a_t,b_1,\ldots,b_t,D_1,\ldots,D_t,y,w,s$ with the above primal constraints (in that order), we get the following dual:
\begin{align}
\max\qquad&n^{\lambda y-s}\nonumber\\
\text{s.t.}\qquad&\sum_{i=1}^t(a_i+b_i)\leq 1&\text{corresponding to $\ell$}\label{LP3:lambda}\\
&y\leq x+w&\text{corresp.\ to $\Delta$}\label{LP3:Delta}\\
&p^{-1}y+D_1\leq p^{-1}a_1+b_1&\text{corresp.\ to $n_0$}\label{LP3:n_0}\\
&\frac{p-1}{p}\cdot b_i+D_{i+1}\leq p^{-1}a_{i+1}+b_{i+1}+D_i&\text{corresp.\ to $n_i,1\leq i\leq t-1$}\label{LP3:n_i}\\
&\frac{p-1}{p}\cdot b_t+w\leq D_t+s&\text{corresp.\ to $n_t$}\label{LP3:n_t}\\
&x+D_i\leq a_i+b_i&\text{corresp.\ to $d_i,i\in[t]$}\label{LP3:d_i}\\
&x,a_1,\ldots,a_t,b_1,\ldots,b_t,D_1,\ldots,D_t,y,w,s\geq 0
\end{align}

For every stretch $t$, value of $p$, and $p$-log-density $\lambda$, we give an explicit optimal solution to the primal LP (corresponding to an optimal graph, up to polylogarithmic factors). By complementary slackness, it suffices to find a dual solution such that for every non-tight primal constraint, the corresponding dual variable is $0$, and for every primal variable strictly greater than 1, the corresponding dual constraint is tight.

\subsection{Optimal Solutions and Corresponding Duals}

We consider first the following family of solutions:
\begin{definition}
An $(L,C,R)$ minimal spanner is a layered graph with $L+C+R+1$ layers of size $n_0\geq n_1\geq\ldots\geq n_L=n_{L+1}=\ldots=n_{L+C}\leq n_{L+C+1}\leq\ldots\leq n_{L+C+R}$. In every layer $i$, $i\in[L]$, every vertex has exactly $n_{i-1}/n_i$ unique neighbors in layer $i-1$ (the induced subgraph is a collection of disjoint stars, and $d_1=\ldots=d_L=1$), in every layer $L+C+i$, $i\in\{0,\ldots,R-1\}$, every vertex has exactly $d_{L+C+i+1}=n_{L+C+i+1}/n_{L+C+i}$ unique neighbors in layer $i+1$, and for every $i\in[C]$, the induced subgraph on layers $L+i-1,L+i$ is regular with degree $d_{L+1}=\ldots=d_{L+C}=n_L^{1/C}$, so that from every vertex in layer $L$ there is a unique path of length $L$ to every vertex in layer $L+C$. Finally, the layer sizes are set so that the subgraph induced on every two consecutive layers has roughly the same contribution to the $\ell_p$ norm. That is, $n_1(n_0/n_1)^p=\ldots=n_L(n_{L-1}/n_L)^p=n_L d_{L+C+1}^p=n_{L+C}d_{L+C+1}
^p=\ldots=n_{L+C+R-1}d_{L+C+R}^p$.
\end{definition}

Note that an $(L,C,R)$ minimal spanner has $\ell_p$ norm $\Theta(n_L^{1/p+1/C})$ (unless $C=0$, in which case we set $n_L=n_{L+C}=1$, and then the $\ell_p$ norm is $n_{L-1}=n_{L+1}$). It is also an $(L+C+R)$-spanner of the graph obtained by adding to the current graph edges between every vertex in layer 0 and every vertex in layer $L+C+R$. This graph has $\ell_p$ norm $\Theta(n_0^{1/p}n_{L+C+R})$. The proof of the following lemma shows that the condition $n_0^{1/p}n_{L+C+R}=n^{\lambda}$ determines the values of all other parameters in an $(L,C,R)$ minimal spanner, including the size of the largest layer, $n_{L+C+R}$. As we will see, for every value $p\geq 1$ and every stretch $t$, there is some setting of $L,C,R$ so that $L+C+R=t$, and an $(L,C,R)$-minimal spanner give an optimum solution to the LP, for all $\lambda$ up to the value for which $n_{L+C+R}=n$. Thus, once we show this, Theorem~\ref{thm:LB-nice} will follow from the following lemma.

\begin{lemma} For any non-negative integers $L,C,R$, and $\lambda>0$, an $(L,C,R)$-minimal spanner of a graph with $\ell_p$ norm $n_0^{1/p}n_{L+C+R}=n^{\lambda}$ has $\ell_p$ norm $\Theta(n^\ell)$, where
$$\ell=\left\{\begin{array}{ll}\displaystyle\frac{1+p/C}{E_{C,L}+pE_{C,R}}\cdot\lambda&\text{if }C>0,\\
\phantom{.}\\
\displaystyle\frac{p}{E_{1,L}-1+p(E_{1,R}-1)}\cdot\lambda&\text{if }C=0.\end{array}\right.$$ for $$E_{i,j}:=1+\frac{p}{i}\left(1-\left(\frac{p-1}{p}\right)^j\right).$$
\label{lem:LCR-LB}
\end{lemma}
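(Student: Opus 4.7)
The plan is to reduce the lemma to a direct computation: in an $(L,C,R)$-minimal spanner, the defining equal-contribution property means every pair of consecutive layers contributes the same value $M$ to $\sum_v d(v)^p$, so since $t = L+C+R$ is a constant, $\|H\|_p = \Theta(M^{1/p})$. When $C>0$, the biregular middle section gives $M = n_L \cdot (n_L^{1/C})^p = n_L^{1+p/C}$, so $\|H\|_p = \Theta(n_L^{1/p+1/C})$; when $C=0$ we instead set $n_L=1$ by convention and track $m := M^{1/p}$, which is the degree of the single middle vertex. In either case, the lemma reduces to expressing $n_L$ (resp.\ $m$) as a function of $\lambda$ via the constraint $n^\lambda = n_0^{1/p}n_{L+C+R}$.

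To do so, I would express $n_0$ and $n_{L+C+R}$ as functions of $n_L$ by unrolling the recursion forced by equal contribution. For the left section ($i \in [L]$), the condition $n_i(n_{i-1}/n_i)^p = M$ rearranges to the linear-in-log recursion $n_{i-1} = n_i^q\,m$ with $q := (p-1)/p$; telescoping this is a geometric-sum computation that yields
\[ n_{L-k} \;=\; n_L^{q^k}\,m^{(1-q^k)/(1-q)} \;=\; n_L^{q^k}\,m^{p(1-q^k)}. \]
Substituting $m = n_L^{1/p+1/C}$ collapses the exponent of $n_L$ to $1+(p/C)(1-q^L) = E_{C,L}$, giving $n_0 = n_L^{E_{C,L}}$. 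The right section runs the same recursion in mirror image (each vertex in layer $L+C+i$ has $d_{L+C+i+1} = n_{L+C+i+1}/n_{L+C+i}$ star-like neighbors in the next layer), so by symmetry $n_{L+C+R} = n_L^{E_{C,R}}$; monotonicity of the layer sizes then follows from $E_{C,L}, E_{C,R} \geq 1$.

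Plugging these into $n^{\lambda} = n_0^{1/p}\,n_{L+C+R}$ gives $n_L^{E_{C,L}/p + E_{C,R}} = n^{\lambda}$, so $n_L = n^{p\lambda/(E_{C,L}+pE_{C,R})}$, and substituting into $\|H\|_p = \Theta(n_L^{1/p+1/C})$ yields the claimed $\ell = (1+p/C)\lambda/(E_{C,L}+pE_{C,R})$ for $C>0$. The $C=0$ case uses the same recursion with base case $n_L = 1$ to get $n_0 = m^{p(1-q^L)}$ and $n_{L+R} = m^{p(1-q^R)}$; using the identity $p(1-q^j) = E_{1,j}-1$, the constraint becomes $m^{(E_{1,L}-1)/p + (E_{1,R}-1)} = n^\lambda$, and $\|H\|_p = \Theta(m)$ gives the stated formula. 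The main obstacle is purely bookkeeping: carrying the geometric sum $\sum_{j=0}^{k-1}q^j$ through the substitution and recognizing that it collapses precisely to the quantity defining $E_{i,j}$; no combinatorial argument beyond the explicit definition of the spanner is required.
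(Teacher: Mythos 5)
Your proposal is correct and follows essentially the same route as the paper: both derive a geometric recursion from the equal-contribution condition (you telescope the layer sizes via $n_{i-1}=n_i^{(p-1)/p}M^{1/p}$, the paper equivalently telescopes the degrees via $\tilde d_{i+2}=\tilde d_{i+1}^{(p-1)/p}$), arrive at $n_0=n_L^{E_{C,L}}$ and $n_{t}=n_L^{E_{C,R}}$, and solve the constraint $n_0^{1/p}n_t=n^{\lambda}$ for $n_L$ (or for the middle degree when $C=0$). Your exponent bookkeeping, including the collapse to $E_{C,L}$ after substituting $m=n_L^{1/p+1/C}$, matches the intended computation.
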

\begin{proof}
  First, note that if we have a sequence of vertex layers of increasing size $U_0,\ldots,U_j$ where for all $i$, every node in $U_i$ has $\tilde d_{i+1}=|U_{i+1}|/|U_i|$ neighbors in $U_{i+1}$, \emph{and} for all $i$ we have the same contribution $|U_i|\tilde d_{i+1}^p$ to the $\ell_p$ norm, then for all $i$ we have
  $$|U_i|(|U_{i+1}|/|U_i|)^p=|U_{i+1}|(|U_{i+2}|/|U_{i+1}|)^p\quad\Rightarrow\quad \tilde d_{i+2}=|U_i|^{1/p}(|U_{i+1}|/|U_i|)/|U_{i+1}|^{1/p}=\tilde d_{i+1}^{(p-1)/p}.$$
  In particular, in such a sequence we have $$|U_j|=|U_0|\prod_{i=1}^j\tilde d_i=|U_0|\tilde d_0^{\sum_{i=0}^{j-1}((p-1)/p)^{i}}=|U_0|\tilde d_0^{E_{1,j}-1}.$$
  Thus, if $C>0$, since nodes in layer $L+C$ have $n_L^{1/C}$ neighbors in layer $L+C+1$, and nodes in layer $L$ have $n_L^{1/C}$ neighbors in layer $L-1$, we have $$n_0=n_L\cdot(n_L^{1/C})^{E_{1,L}-1}=n_L^{E_{C,L}}\qquad\text{and}\qquad n_R=n_L\cdot(n_L^{1/C})^{E_{1,R}-1}=n_L^{E_{C,R}}.$$
  Thus, $n^{\lambda}=n_L^{E_{C,L}/p+E_{C,R}}$. Note that the spanner has $\ell_p$ norm $\Theta(n_L^{1+1/C})$, where $n_L^{(1+1/C)}$ we can rewrite as
  $$\left(n_L^{E_{C,L}/p+E_{C,R}}\right)^{(1+1/C)/(E_{C,L}/p+E_{C,R})}=n^{\lambda\cdot((1+1/C)/(E_{C,L}/p+E_{C,R}))},$$ which gives the result for $C>0$.
  
  For $C=0$, the norm of the spanner is $\Theta(n^{L-1})=\Theta(n^{L+1})=d_{L+1}$. To get the same contribution to the $\ell_p$ norm in every layer, we must have
  $$n_{L+1}d_{L+2}^p=d_{L+1}^p\quad\Rightarrow\quad d_{L+2}=d_{L+1}/n_{L+1}^{1/p}=d_{L+1}^{(p-1)/p},$$ and similarly any node in layer $L-1$ has $d^{L+2}=d_{L+1}^{(p-1)/p}$ neighbors in layer $L-2$. Thus, by the same analysis as before, we have
  $$n_{L+R}=n_{L+1}d_{L+2}^{E_{1,R-1}-1}=d_{L+1}^{1+((p-1)/p)\cdot(E_{1,R-1}-1)}=d_{L+1}^{E_{1,R}-1},$$ and similarly
  $n_0=d_{L+1}^{E_{1,L}-1}$. Thus, if $n^{\lambda}=n_0^{1/p}n_{L+R}=d_{L+1}^{(E_{1,L-1}-1)/p+E_{1,R-1}-1}$, then the $\ell_p$ norm of the spanner is $\Theta(n_{L+1})$, where
  $$n_{L+1}=(n^{\lambda})^{1/((E_{1,L-1}-1)/p+E_{1,R-1}-1)},$$ which completes the proof.
\end{proof}

\subsection{The general solution for low $p$, $\Lambda$}
Here we examine for which parameter settings an $(L,C,R)$ minimal spanner is an optimal solution, when $L,C,R>0$. Note that in an $(L,C,R)$ minimal spanner (assuming $n_t<n$) the only tight constraints are~\eqref{LP2:spanning},~\eqref{LP2:left-deg-norm} for $L+1\leq i\leq L+C+R$,~\eqref{LP2:right-deg-norm} for $i\in[L+C]$,~\eqref{LP2:right-deg-positive} for $L+C+1\leq i\leq L+C+R$,~\eqref{LP2:Lambda}, and~\eqref{LP2:final-layer}. Thus, by complementary slackness, the only (possibly) non-zero dual variables in an optimal dual solution are $x,a_{L+1},\ldots,a_{L+C+R},b_1,\ldots,b_{L+C},D_{L+C+1},\ldots,D_{L+C+R},y$ and $w$. Also, all the primal variables are greater than $1$ except for $d_1,\ldots,d_L$, so by complementary slackness, in an optimal dual solution, all the constraints corresponding to other primal variables must hold with equality. To summarize, an $(L,C,R)$ minimal spanner is optimal iff there exists a non-negative solution to the following system of equations and inequalities:

\begin{align}
%\max\qquad&n^{\Lambda y-s}\nonumber\\
%\text{s.t.}\qquad
&\sum_{i=L+1}^{L+C+R}a_i+\sum_{i=1}^{L+C}b_i= 1&\label{LP:general-scaling}\\%\text{corresponding to $\lambda$}\label{LP4:lambda}\\
&y= x+w&\\%\text{corresp.\ to $\Delta$}\label{LP4:Delta}\\
&p^{-1}y= b_1&\\%\text{corresp.\ to $n_0$}\label{LP4:n_0}\\
&\frac{p-1}{p}\cdot b_i= b_{i+1}&\forall i\in[L-1]\\%\text{corresp.\ to $n_i,1\leq i\leq t-1$}\label{LP4:n_i}\\
&\frac{p-1}{p}\cdot b_i= p^{-1}a_{i+1}+b_{i+1}&\forall i\in\{L,\ldots, L+C-1\}\\%\text{corresp.\ to $n_i,1\leq i\leq t-1$}\label{LP4:n_i}\\
&\frac{p-1}{p}\cdot b_{L+C}+D_{L+C+1}= p^{-1}a_{L+C+1}\\%&\text{corresp.\ to $n_i,1\leq i\leq t-1$}\label{LP4:n_i}\\
& D_{i+1}= p^{-1}a_{i+1}+D_i&\forall i\in\{L+C+1,\ldots,L+C+R-1\}\\%\text{corresp.\ to $n_i,1\leq i\leq t-1$}\label{LP4:n_i}\\
&w= D_{L+C+R}\\%&\text{corresp.\ to $n_t$}\label{LP4:n_t}\\
&x\leq b_i&\forall i\in[L]\label{LP:general-ineq}\\%\text{corresp.\ to $d_i,i\in[t]$}\label{LP4:d_i}\\
&x= a_i+b_i&\forall i\in\{L+1,\ldots,L+C\}\\%\text{corresp.\ to $d_i,i\in[t]$}\label{LP4:d_i}\\
&x+D_i= a_i&\forall i\in\{L+C+1,\ldots,L+C+R\}%\text{corresp.\ to $d_i,i\in[t]$}\label{LP4:d_i}\\
%&x+D_i\leq a_i+b_i&\text{corresp.\ to $d_i,i\in[t]$}\label{LP4:d_i}\\
%&x,a_1,\ldots,a_t,b_1,\ldots,b_t,D_1,\ldots,D_t,y,w,s\geq 0
\end{align}

It is not hard to check that there is a unique (not necessarily non-negative) solution to the above constraints excluding~\eqref{LP:general-ineq}. For some $\eps>0$, this solution is
\begin{align*}
x&=\left(1+p\left((p-1)/p\right)^{R-L}\right)\eps\\
a_i&=\left(1+\frac{i-L}{p-1}\right)x - \frac{C+p}{p-1}\cdot\eps&\forall i\in\{L+1,\ldots,L+C\}\\
a_i&=(C+p)\left((p-1)/p\right)^{R+C-i}\cdot\eps&\forall i\in\{L+C+1,\ldots,L+C+R\}\\
b_i&=\frac{C+p}{p-1}\cdot(p/(p-1))^{L-i}\cdot\eps&\forall i\in[L]\\
b_i&=\frac{C+p}{p-1}\cdot\eps-\frac{i-L}{p-1}\cdot x&\forall i\in\{L,\ldots,L+C\}\\
D_i&=(C+p)\left((p-1)/p\right)^{R+C-i}\cdot\eps-x&\forall i\in\{L+C+1,\ldots,L+C+R\}\\
y&=(C+p)\left(p/(p-1)\right)^L\cdot\eps\\
w&=(C+p)\left(p/(p-1)\right)^L\cdot\eps-x
\end{align*}
If constraint~\eqref{LP:general-ineq} holds, and all these values are non-negative (assuming $\eps>0$), then we can get a dual solution satisfying complementary slackness by choosing $\eps$ which satisfies constraint~\eqref{LP:general-scaling}:
$$\eps=\left(C\left(1+p\left(\frac{p}{p-1}\right)^{R-L}\right)+(C+p)\cdot\left(\left(\frac{p}{p-1}\right)^L-1+\frac{p^{L-1}}{(p-1)^{L-2}}-\frac{(p-1)^{R-L+2}}{p^{R-L+1}}\right)\right)^{-1}$$
Let us first check the condition for the above solution satisfying constraints~\eqref{LP:general-ineq}. Since in the above solution, the sequence $\{b_1,\ldots,b_L\}$ is monotonically decreasing, this is equivalent to the condition $x\leq b_L$, or $$1+p\left(\frac{p-1}{p}\right)^{R-L}\leq\frac{C+p}{p-1},$$
or equivalently
\begin{equation}
    (C+1)\left(\frac{p}{p-1}\right)^{R-L+1}\geq p^2
    \label{eq:LCR-cond1}
\end{equation}

Now consider the conditions for non-negativity in the above solution. $x$ is non-negative by definition. The sequence $a_{L+1},\ldots,a_{L+C}$ is monotonically increasing. Thus for these values it suffices to check that $a_{L+1}\geq 0$, or
$$\frac{p}{p-1}\cdot\left(1+p\cdot\left(\frac{p-1}{p}\right)^{R-L}\right)-\frac{C+p}{p-1}\geq 0,$$
or equivalently
\begin{equation}
    C\cdot\left(\frac{p}{p-1}\right)^{R-L}\leq p^2
    \label{eq:LCR-cond2}
\end{equation}

The values $a_{L+C+1},\ldots,a_{L+C+R}$ are again non-negative by definition, as are $b_1,\ldots,b_L$. The sequence $b_L,\ldots,b_{L+C}$ is monotonically decreasing, so for these values it suffices to check that $b_{L+C}\geq 0$, or
$$\frac{C+p}{p-1}-\frac{C}{p-1}\cdot\left(1+p\cdot\left(\frac{p-1}{p}\right)^{R-L}\right)\geq 0,$$
or equivalently
\begin{equation}
    \left(\frac{p}{p-1}\right)^{R-L}\geq C
    \label{eq:LCR-cond3}
\end{equation}

Note that $y$ is non-negative by definition, and $w=D_{L+C+R}$. This leaves the sequence of values $D_{L+C+1},\ldots,D_{L+C+R}$, which is monotonically increasing, so it remains to check when $D_{L+C+1}\geq 0$. This occurs when
$$(C+p)\left(\frac{p-1}{p}\right)^{R-L-1}-\left(1+p\cdot\left(\frac{p-1}{p}\right)^{R-L}\right)\geq 0,$$
or equivalently
\begin{equation}
    \left(\frac{p}{p-1}\right)^{R-L-1}\leq C+1
    \label{eq:LCR-cond4}
\end{equation}

Finally, note that conditions~\eqref{eq:LCR-cond1} and~\eqref{eq:LCR-cond4} together imply $C\geq p-2$, and conditions~\eqref{eq:LCR-cond2} and~\eqref{eq:LCR-cond3} together imply $C\leq p$. So we have
\begin{equation}
    p-2 \leq C \leq p
    \label{eq:LCR-cond5}
\end{equation}

For $p$ in the range where $L,C,R>0$ above, we can give an explicit solution for $L,C,R$ that satisfies conditions \eqref{eq:LCR-cond1}-\eqref{eq:LCR-cond5}. Start by defining 
$$\Delta_0=%\left\lfloor
\frac{\log(p^2/\lfloor p\rfloor )}{\log(p/(p-1))}%\right\rfloor
\quad\Delta_1=%\left\lfloor
\frac{\log(p\lfloor p\rfloor/(p-1))}{\log(p/(p-1))}%\right\rfloor
\quad \Delta^+=\max\{\Delta_0,\Delta_1\}\quad\Delta^-=\min\{\Delta_0,\Delta_1\}$$
%$$\Delta^+=\left\lfloor\frac{\log\left(\max\left\{p^2/\lfloor p\rfloor,p\lfloor p\rfloor/(p-1)\right\}\right)}{\log(p/(p-1))}\right\rfloor\qquad\Delta^-=\left\lfloor\frac{\log\left(\min\left\{p^2/\lfloor p\rfloor,p\lfloor p\rfloor/(p-1)\right\}\right)}{\log(p/(p-1))}\right\rfloor$$
If $\lfloor\Delta^+\rfloor>\lfloor\Delta^-\rfloor$ (it is easy to check that $\Delta^+-\Delta^-\leq 1$), then in our solution we let
$$C=\lfloor\sqrt{p(p-1)}\rfloor$$
$$L=\lfloor(t-C-\lfloor\Delta^-\rfloor)/2\rfloor$$
$$R=\lceil(t-C+\lfloor\Delta^-\rfloor)/2\rceil.$$
Note that $R-L\in\{\lfloor\Delta^-\rfloor,\lfloor\Delta^+\rfloor\}$.

Otherwise, if $\lfloor\Delta^-\rfloor=\lfloor\Delta^+\rfloor$, we let
$$L=\lceil(t-p-\lfloor\Delta^-\rfloor)/2\rceil$$
$$R=\lceil(t-p+\lfloor\Delta^-\rfloor)/2\rceil$$
$$C=t-L-R$$
Note that here $R-L=\lfloor\Delta^-\rfloor=\lfloor\Delta^+\rfloor$ and that $C\in\{\lfloor p\rfloor,\lfloor p-1\rfloor\}$.

Let us see that these solutions all satisfy conditions \eqref{eq:LCR-cond1}-\eqref{eq:LCR-cond5}. Clearly, condition \eqref{eq:LCR-cond5} is satisfied by definition. %First consider the case where $\Delta^+>\Delta^-$. Let us write
%$$\Delta_0=\left\lfloor\frac{\log(p^2/\lfloor p\rfloor )}{\log(p/(p-1))}\right\rfloor\qquad\Delta_1=\left\lfloor\frac{\log(p\lfloor p\rfloor/(p-1))}{\log(p/(p-1))}\right\rfloor$$. Then the following is easy to check.
The following is easy to check.

\begin{observation}\label{obs:simplified-LB} Conditions \eqref{eq:LCR-cond2}, \eqref{eq:LCR-cond4} are satisfied when $C=\lfloor p\rfloor$ and $R-L=\lfloor\Delta_0\rfloor$ or when $C=\lfloor p-1\rfloor$ and $R-L=\lfloor\Delta_1\rfloor$. Conditions \eqref{eq:LCR-cond1} and \eqref{eq:LCR-cond3} are satisfied when $C=\lfloor p\rfloor$ and $R-L=\lfloor\Delta_1\rfloor$ or when $C=\lfloor p-1\rfloor$ and $R-L=\lfloor\Delta_0\rfloor$.
\end{observation}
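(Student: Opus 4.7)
The plan is to reduce the observation to verification of a handful of elementary floor inequalities. For readability, I would first introduce the shorthand $q := p/(p-1)$ and $D := R-L$, so that conditions~\eqref{eq:LCR-cond1}--\eqref{eq:LCR-cond4} become, respectively, $(C+1)q^{D+1}\geq p^2$, $Cq^D\leq p^2$, $q^D\geq C$, and $q^{D-1}\leq C+1$. With this notation, the threshold values take the compact form $\Delta_0=\log_q(p^2/\lfloor p\rfloor)$ and $\Delta_1=1+\log_q\lfloor p\rfloor$. In particular, with $C=\lfloor p\rfloor$, condition~\eqref{eq:LCR-cond2} is exactly $D\leq\Delta_0$ and condition~\eqref{eq:LCR-cond3} is exactly $D\geq\Delta_1-1$, which is the origin of these two threshold values.

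The central arithmetic lemma will be the following two inequalities, both valid for all real $p\geq 1$:
\begin{equation*}
  \lfloor p\rfloor(\lfloor p\rfloor+1)\geq p(p-1)\qquad\text{and}\qquad \lfloor p\rfloor(\lfloor p\rfloor-1)\leq p(p-1).
\end{equation*}
Writing $p=n+f$ with $n=\lfloor p\rfloor$ and $f\in[0,1)$, both follow from a one-line expansion: the first difference equals $(1-f)(2n+f)$ and the second equals $f(2n+f-1)$, both clearly non-negative for $n\geq 1$.

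The heart of the argument is then four symmetric case analyses, one for each pairing $(C,D)$ in the statement. For $(C,D)=(\lfloor p\rfloor,\lfloor\Delta_0\rfloor)$, condition~\eqref{eq:LCR-cond2} is immediate from $D\leq\Delta_0$, while condition~\eqref{eq:LCR-cond4} follows from $q^{D-1}\leq q^{\Delta_0-1}=p(p-1)/\lfloor p\rfloor$ together with the first floor inequality. For $(C,D)=(\lfloor p-1\rfloor,\lfloor\Delta_1\rfloor)$, condition~\eqref{eq:LCR-cond4} holds because $q^{D-1}\leq q^{\Delta_1-1}=\lfloor p\rfloor=C+1$, and condition~\eqref{eq:LCR-cond2} reduces (via $q^D\leq q^{\Delta_1}=q\lfloor p\rfloor$) to the second floor inequality. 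The two remaining pairings dually establish conditions~\eqref{eq:LCR-cond1} and~\eqref{eq:LCR-cond3}: for $(\lfloor p\rfloor,\lfloor\Delta_1\rfloor)$, condition~\eqref{eq:LCR-cond3} is immediate from $D\geq\Delta_1-1$ and condition~\eqref{eq:LCR-cond1} reduces to $(\lfloor p\rfloor+1)\lfloor p\rfloor\geq p(p-1)$; for $(\lfloor p-1\rfloor,\lfloor\Delta_0\rfloor)$, condition~\eqref{eq:LCR-cond1} is immediate from $D+1\geq\Delta_0$ and condition~\eqref{eq:LCR-cond3} reduces (via $q^D\geq q^{\Delta_0-1}=p(p-1)/\lfloor p\rfloor$) to $p(p-1)\geq\lfloor p\rfloor(\lfloor p\rfloor-1)$.

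There is no real obstacle here: once the notational substitution is made and the two floor inequalities are pinned down, every case is a two-line computation comparing $q^D$ or $q^{D\pm 1}$ to the relevant threshold. The only care needed is bookkeeping, namely tracking for each pairing which side of the condition is ``immediate'' from the definition of $\Delta_i$ and which side requires one of the two floor inequalities; the pattern is symmetric, with conditions~\eqref{eq:LCR-cond2} and~\eqref{eq:LCR-cond3} being the ``tight'' sides when paired with the $\Delta$ that defined them, and the complementary conditions being handled by the elementary floor bound.
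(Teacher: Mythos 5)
Your verification is correct: the four case analyses each reduce, exactly as you describe, to one of the two elementary floor inequalities $\lfloor p\rfloor(\lfloor p\rfloor+1)\geq p(p-1)$ and $\lfloor p\rfloor(\lfloor p\rfloor-1)\leq p(p-1)$, and your expansions $(1-f)(2n+f)$ and $f(2n+f-1)$ check out. The paper omits the proof entirely (it merely asserts the observation is ``easy to check''), and your direct verification is precisely the intended argument, carried out correctly.
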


Now consider the case where $\lfloor\Delta^+\rfloor>\lfloor\Delta^-\rfloor$. We separate this into two cases:

\paragraph{Case 1:} $\lfloor p\rfloor\leq\sqrt{p(p-1)}$. In this case, we get $$C=\lfloor p\rfloor\qquad \Delta^+=\Delta_0\qquad\Delta^-=\Delta_1<\Delta^+$$

\paragraph{Case 2:} $\lfloor p\rfloor>\sqrt{p(p-1)}$. In this case, we get $$C=\lfloor p-1\rfloor\qquad \Delta^+=\Delta_1\qquad\Delta^-=\Delta_0<\Delta^+$$

In both cases, Observation~\ref{obs:simplified-LB} implies that the solutions in which we set $R-L=\lfloor\Delta^+\rfloor$ satisfy conditions \eqref{eq:LCR-cond2}, \eqref{eq:LCR-cond4}, and the solutions in which we set $R-L=\lfloor\Delta^-\rfloor$ satisfy conditions \eqref{eq:LCR-cond1}, \eqref{eq:LCR-cond3}. The other conditions are satisfied in these solutions by monotonicity, since $\Delta^-<\Delta^+$.

Now consider the case where $\lfloor\Delta^-\rfloor=\lfloor\Delta^+\rfloor$. In this case, we have $\lfloor\Delta^-\rfloor=\lfloor\Delta^+\rfloor=\lfloor\Delta_0\rfloor=\lfloor\Delta_1\rfloor$. Then in both solutions (where $C\in\{\lfloor p-1\rfloor,\lfloor p\rfloor\}$), conditions \eqref{eq:LCR-cond1}-\eqref{eq:LCR-cond4} are satisfied by Observation~\ref{obs:simplified-LB} and the fact that $\lfloor\Delta_0\rfloor=\lfloor\Delta_1\rfloor$.

Finally, we will need the following observation.

\begin{observation}\label{obs:LCR-continuity} $\Delta^-,\Delta^+$ are continuous functions of $p$. This, combined with our choice of $L,C,R$ implies that as $p$ increases, $L$, $C$, and $R$ change by at most 1 at every transition.
\end{observation}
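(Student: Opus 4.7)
Plan: The observation has two parts, which I will address in order: (i) establishing continuity of $\Delta^-$ and $\Delta^+$ as functions of $p$, and (ii) deducing from that continuity, together with the explicit case-by-case definitions of $L,C,R$, that each of $L,C,R$ changes by at most $1$ at any transition.

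For part (i), I would start by noting that the only possible discontinuities of $\Delta_0$ and $\Delta_1$ arise from the factor $\lfloor p\rfloor$, which jumps only at positive integers. For non-integer $p$, both $\Delta_0$ and $\Delta_1$ are compositions of continuous functions and hence continuous. The remaining task is to handle an integer $p=n$. Comparing the left-limit (where $\lfloor p\rfloor=n-1$) with the value at $p=n$ (where $\lfloor p\rfloor=n$), a direct computation gives $\lim_{p\to n^-}\Delta_0(p)=\log(n^2/(n-1))/\log(n/(n-1))=\Delta_1(n)$ and symmetrically $\lim_{p\to n^-}\Delta_1(p)=\Delta_0(n)$. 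Hence $\Delta_0$ and $\Delta_1$ exchange values across each integer. Since $\max$ and $\min$ are symmetric in their two arguments, $\Delta^+$ and $\Delta^-$ are unchanged by this swap, so they extend continuously across $p=n$.

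For part (ii), continuity of $\Delta^{\pm}$ immediately gives that the step functions $\lfloor\Delta^-\rfloor$ and $\lfloor\Delta^+\rfloor$ change by $\pm 1$ at isolated transition points; the same holds for $\lfloor\sqrt{p(p-1)}\rfloor$ and $\lfloor p\rfloor$. In the regime $\lfloor\Delta^+\rfloor>\lfloor\Delta^-\rfloor$, we have $C=\lfloor\sqrt{p(p-1)}\rfloor$, and $L,R$ are floors/ceilings of affine expressions in $C$ and $\lfloor\Delta^-\rfloor$; a single unit change in any of these inputs changes each of $L,C,R$ by at most $1$. In the regime $\lfloor\Delta^-\rfloor=\lfloor\Delta^+\rfloor$, $L$ and $R$ are again ceilings of expressions whose non-integer part changes continuously, so each moves by at most $1$ per transition, and $C=t-L-R$ is controlled by the auxiliary note that $C\in\{\lfloor p\rfloor,\lfloor p-1\rfloor\}$, which already guarantees $C$ changes by at most $1$. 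I would verify this by checking, at each possible joint transition, that the parities of $L$ and $R$ and the value of $C$ line up with $t-L-R$; Observation~\ref{obs:simplified-LB} gives precisely the algebraic identities needed.

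The genuinely delicate step, and the main obstacle, is the transition between the two cases, i.e., when $\lfloor\Delta^+\rfloor-\lfloor\Delta^-\rfloor$ jumps between $0$ and $1$. At such a $p$ one of $\Delta_0,\Delta_1$ becomes integral, so both formulas for $(L,C,R)$ are in force, and one has to check that they agree up to $\pm 1$ on each coordinate. I would handle this by matching $\sqrt{p(p-1)}$ against $\lfloor p\rfloor$ (which differ by less than $1$ for $p\geq 1$) to bound $|C_{\mathrm{case1}}-C_{\mathrm{case2}}|\leq 1$, and then use $L+R=t-C$ together with the parity choices in the ceilings/floors to bound $|L_{\mathrm{case1}}-L_{\mathrm{case2}}|,\,|R_{\mathrm{case1}}-R_{\mathrm{case2}}|\leq 1$. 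Combined with the per-regime analysis, this yields the claimed stability of $L,C,R$ at every transition.
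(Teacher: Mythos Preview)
The paper states this as an observation without proof, so there is no argument on the paper's side to compare against; your proposal supplies a correct and complete verification. Your treatment of part~(i) is exactly the right one: the only possible discontinuities of $\Delta_0,\Delta_1$ occur at integer $p$, and the direct computation you describe shows that as $p\to n^-$ one has $\Delta_0\to\Delta_1(n)$ and $\Delta_1\to\Delta_0(n)$, so the pair swaps and $\Delta^\pm=\max/\min(\Delta_0,\Delta_1)$ extend continuously. For part~(ii) your sketch is sound; continuity of the inputs forces each floor to move by at most one at any transition, and the only genuinely delicate point is the cross-case transition you flag. Your proposed check there---bounding $|C_{\text{case 1}}-C_{\text{case 2}}|$ via $|\sqrt{p(p-1)}-\lfloor p\rfloor|<1$ and the paper's own remark that $C\in\{\lfloor p-1\rfloor,\lfloor p\rfloor\}$ in both regimes, then propagating to $L,R$ through $L+R=t-C$ and the parity choices in the floors/ceilings---is the natural route and works.
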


\subsection{The lowest range of $p$}\label{sec:lowest-p}
Note that for even stretch $t$, when $p\in[1,\varphi)$ (where $\varphi=\frac{1+\sqrt{5}}{2}$ is the golden ratio), the above solution is simply $C=0$, $L=R=t/2$ (also a feasible dual when $p=\varphi$). For odd $t$, in the range  $p\in[1,2]$, the solution given above is slightly more complicated, but it is easy to check that in this case conditions~\eqref{eq:LCR-cond1}-\eqref{eq:LCR-cond5} also hold for the simpler solution $C=1$, $L=R=\lfloor t/2\rfloor$.

Note that for both odd and even $t$, the $(L,C,R)$-minimal spanner corresponding to these solutions is symmetric. In particular, its outer layers have the same cardinality $n_0=n_t$. Recall that this spanner spans a graph with $\ell_p$ norm $n_0^{1/p}n_t$, and so this graph is a tight example for every possible value of $\lambda$ up to the maximum possible value of $1+1/p$ (corresponding to the case where $n_0=n_t=n$). This is not the case when the spanner is not symmetric, since by gradually increasing $n_0$, eventually $n_t$ hits the upper bound of $n$ before $\lambda$ reaches its maximum possible value. As we shall see, for larger $p$ ($p>2$ for odd $t$ and $p>\varphi$ for even $t$), we will have slightly different extremal graphs when $\lambda$ is above this threshold. However, first we will consider the case of high $p$.

\subsection{General Solution for High $p$, Low $\Lambda$}

Let us now consider $(L,C,R)$-minimal spanners for $L=0$. Note that in a $(0,C,R)$ minimal spanner (assuming $n_t<n$) the only tight constraints are~\eqref{LP2:spanning},~\eqref{LP2:left-deg-norm} for all $i\in[C+R]$,~\eqref{LP2:right-deg-norm} for all $i\in[C]$,~\eqref{LP2:right-deg-positive} for $C+1\leq i\leq C+R$,~\eqref{LP2:Lambda}, and~\eqref{LP2:final-layer}. Thus, by complementary slackness, the only (possibly) non-zero dual variables in an optimal dual solution are $x,a_1,\ldots,a_{C+R},b_1,\ldots,b_C,D_{C+1},\ldots,D_{C+R},y$ and $w$. Also, all the primal variables are greater than $1$, so by complementary slackness, in an optimal dual solution, all the constraints must hold with equality. To summarize, a $(0,C,R)$ minimal spanner is optimal iff there exists a non-negative solution to the following system of linear equations:

\begin{align}
%\max\qquad&n^{\Lambda y-s}\nonumber\\
%\text{s.t.}\qquad
&\sum_{i=1}^{C+R}a_i+\sum_{i=1}^{C}b_i= 1&\label{LP:general-scaling2}\\%\text{corresponding to $\lambda$}\label{LP4:lambda}\\
&y= x+w&\\%\text{corresp.\ to $\Delta$}\label{LP4:Delta}\\
&p^{-1}y= p^{-1}a_1+b_1&\\%\text{corresp.\ to $n_0$}\label{LP4:n_0}\\
%&\frac{p-1}{p}\cdot b_i= b_{i+1}&\forall i\in[L-1]\\%\text{corresp.\ to $n_i,1\leq i\leq t-1$}\label{LP4:n_i}\\
&\frac{p-1}{p}\cdot b_i= p^{-1}a_{i+1}+b_{i+1}&\forall i\in\{1,\ldots,C-1\}\\%\text{corresp.\ to $n_i,1\leq i\leq t-1$}\label{LP4:n_i}\\
&\frac{p-1}{p}\cdot b_{C}+D_{C+1}= p^{-1}a_{C+1}\\%&\text{corresp.\ to $n_i,1\leq i\leq t-1$}\label{LP4:n_i}\\
& D_{i+1}= p^{-1}a_{i+1}+D_i&\forall i\in\{C+1,\ldots,C+R-1\}\\%\text{corresp.\ to $n_i,1\leq i\leq t-1$}\label{LP4:n_i}\\
&w= D_{C+R}\\%&\text{corresp.\ to $n_t$}\label{LP4:n_t}\\
%&x\leq b_i&\forall i\in[L]\label{LP:general-ineq}\\%\text{corresp.\ to $d_i,i\in[t]$}\label{LP4:d_i}\\
&x= a_i+b_i&\forall i\in\{1,\ldots,C\}\\%\text{corresp.\ to $d_i,i\in[t]$}\label{LP4:d_i}\\
&x+D_i= a_i&\forall i\in\{C+1,\ldots,C+R\}%\text{corresp.\ to $d_i,i\in[t]$}\label{LP4:d_i}\\
%&x+D_i\leq a_i+b_i&\text{corresp.\ to $d_i,i\in[t]$}\label{LP4:d_i}\\
%&x,a_1,\ldots,a_t,b_1,\ldots,b_t,D_1,\ldots,D_t,y,w,s\geq 0
\end{align}

One can check that this system of linear equations has a unique (not necessarily non-negative) solution. For some $\eps>0$, this solution is
\begin{align*}
x&=\left((p-1)+\left(\frac{p}{p-1}\right)^{R-1}\right)\eps\\
a_i&=\left(i+p-1+\frac{i-C-1}{p-1}\left(\frac{p}{p-1}\right)^{R-1}\right)\cdot\eps&\forall i\in[C]\\
a_i&=(C+p)\left(\frac{p}{p-1}\right)^{i-C-1}\cdot\eps&\forall i\in\{C+1,\ldots,C+R\}\\
b_i&=\left(\frac{C+p-i}{p-1}\cdot\left(\frac{p}{p-1}\right)^{R-1}-i\right)\cdot\eps&\forall i\in[C]\\
D_i&=(C+p)\left(\frac{p}{p-1}\right)^{i-C-1}\cdot\eps-x&\forall i\in\{C+1,\ldots,C+R\}\\
y&=a_{C+R}=(C+p)\left(\frac{p}{p-1}\right)^{R-1}\cdot\eps\\
w&=D_{C+R}=(C+p)\left(\frac{p}{p-1}\right)^{R-1}\cdot\eps-x
\end{align*}
If all these values are non-negative (assuming $\eps>0$), then we can get a dual solution satisfying complementary slackness by choosing $\eps$ which satisfies constraint~\eqref{LP:general-scaling2}:
$$\eps=(p-1)\left((C+p)\left(\frac{p}{p-1}\right)^R-p\right)$$

Consider the conditions for non-negativity in the above solution. $x$ is non-negative by definition. The sequence $a_{1},\ldots,a_{C}$ is monotonically increasing. Thus for these values it suffices to check that $a_{1}=p-\frac{C}{p-1}\cdot\left(\frac{p-1}{p}\right)^{R-1}\geq 0$, %or
%$$p-\frac{C}{p-1}\cdot\left(\frac{p-1}{p}\right)^{R-1}\geq 0,$$
or equivalently
\begin{equation}
    C\cdot\left(\frac{p}{p-1}\right)^R\leq p^2
    \label{eq:CR-cond2}
\end{equation}

The values $a_{C+1},\ldots,a_{C+R}$ are again non-negative by definition. The sequence $b_1,\ldots,b_{L+C}$ is monotonically decreasing, so for these values it suffices to check that $b_{C}=\left(\frac{p}{p-1}\right)^{R}-C\geq 0$, %or
%$$\left(\frac{p}{p-1}\right)^{R}-C\geq 0,$$
or equivalently
\begin{equation}
    \left(\frac{p}{p-1}\right)^{R}\geq C
    \label{eq:CR-cond3}
\end{equation}

Note that $y$ is non-negative by definition, and $w=D_{C+R}$. This leaves the sequence of values $D_{C+1},\ldots,D_{C+R}$, which is monotonically increasing, so it remains to check when $D_{C+1}=(C+p)-\left((p-1)+\left(\frac{p}{p-1}\right)^{R-1}\right)\geq 0$. This occurs when
%$$(C+p)-\left((p-1)+\left(\frac{p}{p-1}\right)^{R-1}\right)\geq 0,$$
%or equivalently
\begin{equation}
    \left(\frac{p}{p-1}\right)^{R-1}\leq C+1
    \label{eq:CR-cond4}
\end{equation}

Note that conditions~\eqref{eq:CR-cond2}, \eqref{eq:CR-cond3}, and \eqref{eq:CR-cond4} are in fact the same as conditions~\eqref{eq:LCR-cond2}, \eqref{eq:LCR-cond3}, and \eqref{eq:LCR-cond4}, respectively, when we set $L=0$.

When these conditions hold simultaneously for \emph{some} values of $C,R$, it is simple to see what $C$ and $R$ must be. Since $R=t-C$, conditions~\eqref{eq:CR-cond3} and~\eqref{eq:CR-cond4} can be rewritten as
$$C\left(\frac{p}{p-1}\right)^C\leq \left(\frac{p}{p-1}\right)^t\leq (C+1)\left(\frac{p}{p-1}\right)^{C+1}$$
Thus, the value $\left(\frac{p}{p-1}\right)^t$ occurs in exactly one of the disjoint intervals $\left\{I_C\mid C\in\nats\right\}$, where $I_C=\left[C\left(\frac{p}{p-1}\right)^C,(C+1)\left(\frac{p}{p-1}\right)^{C+1}\right)$, and we we choose the corresponding value of $C$ (which also determines $R=t-C$).

It remains to show that there exist such solutions, and that the range of $p$ for which there exists such a solution, together with the range for which there exists a solution as in the previous section, cover all possible $p\in[1,\infty)$.

To see that there exists a solution for \emph{some} value of $p$, note that in the solution in the previous section, we had $L\approx(t-p-p\ln p)/2$. Thus, for sufficiently large $p$, $L$ can no longer be positive. However, by Observation~\ref{obs:LCR-continuity}, $L$ cannot jump from being strictly positive to being strictly negative. For the minimum value of $p$ such that in the previous section $L$ can no longer be positive, we have $L=0$. In particular, this means that for this value of $p$ conditions~\eqref{eq:LCR-cond2}, \eqref{eq:LCR-cond3}, and \eqref{eq:LCR-cond4} hold for $L=0$, meaning, conditions~\eqref{eq:CR-cond2}, \eqref{eq:CR-cond3}, and \eqref{eq:CR-cond4} hold.

Let us see that in fact a solution (with $L=0$) exists for all values of $p$ greater than or equal to the above value. This can be seen via a simple monotonicity argument. We claim that if such a solution exists for some $p$, with $C\leq p$, then such a solution exists for all $p'\geq p$ as well (note that $C\leq p$ by our choice of $C$ for the initial value of $p$). First, note that as long as $C$ does not change, we can increase $p$ and the inequalities in~\eqref{eq:CR-cond2} and~\eqref{eq:CR-cond4} will be strengthened. However, at some point~\eqref{eq:CR-cond3} may be violated. Suppose we reach $p$ such that condition~\eqref{eq:CR-cond3} becomes tight. That is, we have $(p/(p-1))^{t-C}=C$. Then it is easy to see that conditions~\eqref{eq:CR-cond2}-\eqref{eq:CR-cond4} hold for $C'=C-1$. Indeed, if~\eqref{eq:CR-cond2} holds for $p,C$, then \begin{align*}
    C'\left(\frac{p}{p-1}\right)^{t-C'}=(C-1)\cdot\frac{p}{p-1}\left(\frac{p}{p-1}\right)^{t-C}&\leq \frac{C-1}{C}&\text{by~\eqref{eq:CR-cond2}}\\\cdot\frac{p}{p-1}\cdot p^2&\leq p^2&\text{since $C\leq p$},
\end{align*}
giving us condition~\eqref{eq:CR-cond2} for $p,C'$.
Since condition~\eqref{eq:CR-cond3} holds (in fact with equality) for $p,C$, then it trivially holds for $p,C'$ by monotonicity. Finally, since~\eqref{eq:CR-cond3} holds with equality, we have
$$\left(\frac{p}{p-1}\right)^{t-C'-1}=\left(\frac{p}{p-1}\right)^{t-C}=C=C'+1,$$ which gives condition~\eqref{eq:CR-cond4} for $p,C'$.

Note that since $t\geq 2$ and $p/(p-1)>1$, condition~\eqref{eq:CR-cond4} cannot be satisfied for $C=0$. Thus, for the very highest range of $p$ we have $C=1$, $R=t-1$.

\subsection{Handling Large Values of $\Lambda$}
As we've seen, for the lowest range of $p$ (specifically, $p\leq 2$ for odd $t$, and $p\leq\varphi$ for even $t$), for every $t$ there is a single setting of $(L,C,R)$ such that the tight lower bound is given by an $(L,C,R)$-minimal spanner, for every possible value of $\Delta$.

However, for larger values of $p$, this is no longer the case. Note that for larger values of $p$, our solution always has $C\geq 1$ and $R>L$. A careful examination of the properties of an $(L,C,R)$-minimal spanner shows that if its central layers have size $n_L(=n_{L+1}=\ldots=n_{L+C})$, then its outer layers have sizes $n_0=n_C^{1+\frac{p}{c}\left(1-\left(\frac{p-1}{p}\right)^L\right)}$ and $n_t=n_{L+C+R}=n_C^{1+\frac{p}{c}\left(1-\left(\frac{p-1}{p}\right)^R\right)}$. In the range where $L<R$, this means that $n_0<n_t$. In particular, we cannot increase the size of such an $(L,C,R)$-minimal spanner past the point where $n_t=n$. A simple calculation shows that this occurs when $\lambda=1+\frac{1}{p}\left(1+\frac{p}{c}\left(1-\left(\frac{p-1}{p}\right)^L\right)\right)\left(1+\frac{p}{c}\left(1-\left(\frac{p-1}{p}\right)^R\right)\right)^{-1}<1+\frac{1}{p}$. We call this the \emph{nice} range of $\lambda$ (for the corresponding choice of $p,t$). Beyond this value of $\lambda$, our tight lower bound examples are no longer $(L,C,R)$-minimal spanners, but some slight variant.

\begin{definition}
A \emph{skewed} $(L,C,R)$ minimal spanner is a layered graph with $L+C+R+1$ layers of size $n_0\geq n_1\geq\ldots\geq n_L=n_{L+1}=\ldots=n_{L+C}\leq n_{L+C+1}\leq\ldots\leq n_{L+C+R}$, similar to an $(L,C,R)$ minimal spanner, with the following exceptions:
\begin{itemize}
    \item In a \emph{left-skewed} $(L,C,R)$ minimal spanner, we have $d_{L}\geq 1$ (not necessarily equal 1), while in a \emph{right-skewed} $(L,C,R)$ minimal spanner, we have $n_{L+C}d_{L+C+1}\geq n_{L+C+1}$ (that is, the degree from layer $(L+C+1)$ back into layer $(L+C)$ may be greater than 1). We do not allow strict inequality in both simultaneously.
    \item We no longer necessarily have $d_{L+1}=\ldots=d_{L+C}=n_L^{1/C}$. Rather, if we define the \emph{skew degree} $\tilde{d}$ to be $d_L$ in a left-skewed spanner, and $n_{L+C}d_{L+C+1}/n_{L+C+1}$ in a right-skewed spanner, then these degrees are $d_{L+1}=\ldots=d_{L+C}=(n_L/\tilde{d})^{1/C}$. That is, they are chosen such that $\tilde d\cdot\prod_{i={L+1}}^{L+C}d_i=n_L$.
    %\item In the special case of a left-skewed $(L,0,R)$ minimal spanner, we simply have $\tilde d=d_L=n_L$.
\end{itemize}
\end{definition}

The idea of a skewed $(L,C,R)$ minimal spanner is that it allows us to move smoothly between different $(L,C,R)$ minimal spanners by gradually changing the subgraph between two consecutive layers (and possibly rescaling the entire graph). In particular, note that an $(L,C,R)$ minimal spanner could be considered a left-skewed $(t-C'-R,C',R)$ minimal spanner as well as a right-skewed $(L,C',t-L-C')$ minimal spanner for either $C'\in\{C-1,C\}$.

\subsubsection{Conditions for Optimality of a Left-Skewed Spanner, for Low $p$}
Here we examine for which parameter settings a left-skewed $(\tilde L,C,\tilde R)$ minimal spanner with $n_t=n$ is an optimal solution, for $\tilde L,C,\tilde R>0$. Note that in such a spanner, the only tight constraints are~\eqref{LP2:spanning},~\eqref{LP2:left-deg-norm} for $\tilde L+1\leq i\leq \tilde L+C+\tilde R$,~\eqref{LP2:right-deg-norm} for $i\in[\tilde L+C]$,~\eqref{LP2:right-deg-positive} for $\tilde L+C+1\leq i\leq \tilde L+C+\tilde R$,~\eqref{LP2:Lambda},~\eqref{LP2:final-layer}, and~\eqref{LP2:n}. Thus, by complementary slackness, the only (possibly) non-zero dual variables in an optimal dual solution are $x,a_{\tilde L+1},\ldots,a_{\tilde L+C+\tilde R},b_1,\ldots,b_{\tilde L+C},D_{\tilde L+C+1},\ldots,D_{\tilde L+C+\tilde R},y$, $w$, and $s$. Also, all the primal variables are greater than $1$ except for $d_1,\ldots,d_{\tilde L-1}$, so by complementary slackness, in an optimal dual solution, all the constraints corresponding to other primal variables must hold with equality.\footnote{Note that constraint~\eqref{LP2:left-deg-norm} could be tight for $i=\tilde L$, or we could have $d_L=1$, but this does not hurt our argument. It would only mean that the conditions for complementary slackness we present will be sufficient, not necessary.} To summarize, a left-skewed $(\tilde L,C,\tilde R)$ minimal spanner is optimal iff there exists a non-negative solution to the following system of equations and inequalities:

\begin{align}
%\max\qquad&n^{\Lambda y-s}\nonumber\\
%\text{s.t.}\qquad
&\sum_{i=\tilde L+1}^{\tilde L+C+\tilde R}a_i+\sum_{i=1}^{\tilde L+C}b_i= 1&\label{LP:general-scaling3}\\%\text{corresponding to $\lambda$}\label{LP4:lambda}\\
&y= x+w&\\%\text{corresp.\ to $\Delta$}\label{LP4:Delta}\\
&p^{-1}y= b_1&\\%\text{corresp.\ to $n_0$}\label{LP4:n_0}\\
&\frac{p-1}{p}\cdot b_i= b_{i+1}&\forall i\in[\tilde L-1]\\%\text{corresp.\ to $n_i,1\leq i\leq t-1$}\label{LP4:n_i}\\
&\frac{p-1}{p}\cdot b_i= p^{-1}a_{i+1}+b_{i+1}&\forall i\in\{\tilde L,\ldots, \tilde L+C-1\}\\%\text{corresp.\ to $n_i,1\leq i\leq t-1$}\label{LP4:n_i}\\
&\frac{p-1}{p}\cdot b_{\tilde L+C}+D_{\tilde L+C+1}= p^{-1}a_{\tilde L+C+1}\\%&\text{corresp.\ to $n_i,1\leq i\leq t-1$}\label{LP4:n_i}\\
& D_{i+1}= p^{-1}a_{i+1}+D_i&\forall i\in\{\tilde L+C+1,\ldots,\tilde L+C+\tilde R-1\}\\%\text{corresp.\ to $n_i,1\leq i\leq t-1$}\label{LP4:n_i}\\
&w= D_{\tilde L+C+\tilde R}+s\\%&\text{corresp.\ to $n_t$}\label{LP4:n_t}\\
&x\leq b_i&\forall i\in[\tilde L-1]\label{LP:general-ineq3}\\%\text{corresp.\ to $d_i,i\in[t]$}\label{LP4:d_i}\\
&x=b_{\tilde L}\\
&x= a_i+b_i&\forall i\in\{\tilde L+1,\ldots,\tilde L+C\}\\%\text{corresp.\ to $d_i,i\in[t]$}\label{LP4:d_i}\\
&x+D_i= a_i&\forall i\in\{\tilde L+C+1,\ldots,\tilde L+C+\tilde R\}%\text{corresp.\ to $d_i,i\in[t]$}\label{LP4:d_i}\\
%&x+D_i\leq a_i+b_i&\text{corresp.\ to $d_i,i\in[t]$}\label{LP4:d_i}\\
%&x,a_1,\ldots,a_t,b_1,\ldots,b_t,D_1,\ldots,D_t,y,w,s\geq 0
\end{align}
It is not hard to check that there is a unique (not necessarily non-negative) to the above constraints excluding~\eqref{LP:general-ineq3}. For some value of $x>0$ (that can be scaled so as to satisfy constraint~\eqref{LP:general-scaling3}), this solution is
\begin{align*}
    a_i&=\frac{i-\tilde L}{p-1}\cdot x&\forall i\in\{\tilde L+1,\ldots,\tilde L+C\}\\
    a_i&=\frac{C+1}{p-1}\cdot\left(\frac{p}{p-1}\right)^{i-(\tilde L+C+1)}x&\forall i\in\{\tilde L+C+1,\ldots,\tilde L+C+\tilde R\}\\
    b_i&=\left(\frac{p}{p-1}\right)^{\tilde L-i}x&\forall i\in[\tilde L]\\
    b_i&=\left(1-\frac{i-\tilde L}{p-1}\right)x&\forall i\in\{\tilde L,\ldots,\tilde L+C\}\\
    D_i&=\left(\frac{C+1}{p-1}\cdot\left(\frac{p}{p-1}\right)^{i-(\tilde L+C+1)}-1\right)x&\forall i\in\{\tilde L+C+1,\ldots,\tilde L+C+\tilde R\}\\
    y&=p\left(\frac{p}{p-1}\right)^{\tilde L-1}x\\
    w&=\left(p\left(\frac{p}{p-1}\right)^{\tilde L-1}-1\right)x\\
    s&=\left(p\left(\frac{p}{p-1}\right)^{\tilde L-1}-\frac{C+1}{p-1}\cdot\left(\frac{p}{p-1}\right)^{\tilde R-1}\right)x
\end{align*}
The non-negativity of most of the above variables follows by definition, as does constraint~\eqref{LP:general-ineq3}. The non-negativity of the $D_i$s is equivalent to the non-negativity of $D_{\tilde L+C+1}$, which follows iff
\begin{equation}
    C\geq p-2
\end{equation}
The only additional variable which is not trivially non-negative is $s$, which is non-negative iff
\begin{equation}
    (C+1)\left(\frac{p}{p-1}\right)^{\tilde R-\tilde L-1}\leq p^2
    \label{eq:left-skew}
\end{equation}

\subsubsection{Conditions for Optimality of a Right-Skewed Spanner, for Low $p$}
Here we examine for which parameter settings a right-skewed $(L,C,R)$ minimal spanner with $n_t=n$ is an optimal solution, for $L,C,R>0$. Note that in such a spanner, the only tight constraints are~\eqref{LP2:spanning},~\eqref{LP2:left-deg-norm} for $L+1\leq i\leq L+C+R$,~\eqref{LP2:right-deg-norm} for $i\in[L+C]$,~\eqref{LP2:right-deg-positive} for $L+C+2\leq i\leq L+C+R$,~\eqref{LP2:Lambda},~\eqref{LP2:final-layer}, and~\eqref{LP2:n}. Thus, by complementary slackness, the only (possibly) non-zero dual variables in an optimal dual solution are $x,a_{L+1},\ldots,a_{L+C+R},b_1,\ldots,b_{L+C},D_{L+C+2},\ldots,D_{L+C+R},y$, $w$, and $s$. Also, all the primal variables are greater than $1$ except for $d_1,\ldots,d_{L}$, so by complementary slackness, in an optimal dual solution, all the constraints corresponding to other primal variables must hold with equality.\footnote{Note that constraint~\eqref{LP2:right-deg-positive} or constraint~\eqref{LP2:right-deg-norm} could be tight for $i=L+C+1$, but as before, this does not hurt our argument.} To summarize, a right-skewed $(L,C,R)$ minimal spanner is optimal iff there exists a non-negative solution to the following system of equations and inequalities:

\begin{align}
%\max\qquad&n^{\Lambda y-s}\nonumber\\
%\text{s.t.}\qquad
&\sum_{i=L+1}^{L+C+R}a_i+\sum_{i=1}^{L+C}b_i= 1&\label{LP:general-scaling4}\\%\text{corresponding to $\lambda$}\label{LP4:lambda}\\
&y= x+w&\\%\text{corresp.\ to $\Delta$}\label{LP4:Delta}\\
&p^{-1}y= b_1&\\%\text{corresp.\ to $n_0$}\label{LP4:n_0}\\
&\frac{p-1}{p}\cdot b_i= b_{i+1}&\forall i\in[L-1]\\%\text{corresp.\ to $n_i,1\leq i\leq t-1$}\label{LP4:n_i}\\
&\frac{p-1}{p}\cdot b_i= p^{-1}a_{i+1}+b_{i+1}&\forall i\in\{L,\ldots, L+C-1\}\\%\text{corresp.\ to $n_i,1\leq i\leq t-1$}\label{LP4:n_i}\\
&\frac{p-1}{p}\cdot b_{L+C}= p^{-1}a_{L+C+1}\\%&\text{corresp.\ to $n_i,1\leq i\leq t-1$}\label{LP4:n_i}\\
& D_{L+C+2}=p^{-1}a_{L+C+2}\\
& D_{i+1}= p^{-1}a_{i+1}+D_i&\forall i\in\{L+C+2,\ldots,L+C+R-1\}\\%\text{corresp.\ to $n_i,1\leq i\leq t-1$}\label{LP4:n_i}\\
&w= D_{L+C+R}+s\\%&\text{corresp.\ to $n_t$}\label{LP4:n_t}\\
&x\leq b_i&\forall i\in[L]\label{LP:general-ineq4}\\%\text{corresp.\ to $d_i,i\in[t]$}\label{LP4:d_i}\\
&x= a_i+b_i&\forall i\in\{L+1,\ldots,L+C\}\\%\text{corresp.\ to $d_i,i\in[t]$}\label{LP4:d_i}\\
&x=a_{L+C+1}\\
&x+D_i= a_i&\forall i\in\{L+C+2,\ldots,L+C+R\}%\text{corresp.\ to $d_i,i\in[t]$}\label{LP4:d_i}\\
%&x+D_i\leq a_i+b_i&\text{corresp.\ to $d_i,i\in[t]$}\label{LP4:d_i}\\
%&x,a_1,\ldots,a_t,b_1,\ldots,b_t,D_1,\ldots,D_t,y,w,s\geq 0
\end{align}

It is not hard to check that there is a unique (not necessarily non-negative) to the above constraints excluding~\eqref{LP:general-ineq4}. For some value of $x>0$ (that can be scaled so as to satisfy constraint~\eqref{LP:general-scaling4}), this solution is
\begin{align*}
    a_i&=\left(1-\frac{L+C+1-i}{p-1}\right)\cdot x&\forall i\in\{L+1,\ldots,L+C\}\\
    a_i&=\left(\frac{p}{p-1}\right)^{i-(L+C+1)}x&\forall i\in\{L+C+1,\ldots,L+C+R\}\\
    b_i&=\frac{C+1}{p-1}\cdot\left(\frac{p}{p-1}\right)^{L-i}x&\forall i\in[L]\\
    b_i&=\frac{L+C+1-i}{p-1}\cdot x&\forall i\in\{L,\ldots,L+C\}\\
    D_i&=\left(\left(\frac{p}{p-1}\right)^{i-(L+C+1)}-1\right)x&\forall i\in\{L+C+2,\ldots,L+C+R\}\\
    y&=(C+1)\cdot\left(\frac{p}{p-1}\right)^{L}x\\
    w&=\left((C+1)\left(\frac{p}{p-1}\right)^{L}-1\right)x\\
    s&=\left((C+1)\left(\frac{p}{p-1}\right)^{L}-\left(\frac{p}{p-1}\right)^{R-1}\right)x
\end{align*}
The non-negativity of most of the above variables follows by definition. By monotonicity, constraint~\eqref{LP:general-ineq4} follows for all $i\in[L]$ iff it follows for $i=L$, which is when
\begin{equation}
    C\geq p-2
\end{equation}
The only variable which is not trivially non-negative is $s$, which is non-negative iff
\begin{equation}
    \left(\frac{p}{p-1}\right)^{R-L-1}\leq C+1
    \label{eq:right-skew}
\end{equation}

\subsubsection{Conditions for Optimality of a Right-Skewed Spanner, for High $p$}
Here we examine for which parameter settings a right-skewed $(0,C,R)$ minimal spanner with $n_t=n$ is an optimal solution. Note that in such a spanner, the only tight constraints are~\eqref{LP2:spanning},~\eqref{LP2:left-deg-norm} for $i\leq [C+R]$,~\eqref{LP2:right-deg-norm} for $i\in[C]$,~\eqref{LP2:right-deg-positive} for $C+2\leq i\leq C+R$,~\eqref{LP2:Lambda},~\eqref{LP2:final-layer}, and~\eqref{LP2:n}. Thus, by complementary slackness, the only (possibly) non-zero dual variables in an optimal dual solution are $x,a_{1},\ldots,a_{C+R},b_1,\ldots,b_{C},D_{C+2},\ldots,D_{C+R},y$, $w$, and $s$. Also, all the primal variables are greater than $1$, so by complementary slackness, in an optimal dual solution, all the constraints must hold with equality.\footnote{As before, constraint~\eqref{LP2:right-deg-positive} or constraint~\eqref{LP2:right-deg-norm} could be tight for $i=C+1$.} To summarize, a right-skewed $(0,C,R)$ minimal spanner is optimal iff there exists a non-negative solution to the following system of linear equations:

\begin{align}
%\max\qquad&n^{\Lambda y-s}\nonumber\\
%\text{s.t.}\qquad
&\sum_{i=1}^{C+R}a_i+\sum_{i=1}^{C}b_i= 1&\label{LP:general-scaling5}\\%\text{corresponding to $\lambda$}\label{LP4:lambda}\\
&y= x+w&\\%\text{corresp.\ to $\Delta$}\label{LP4:Delta}\\
&p^{-1}y= p^{-1}a_1+b_1&\\%\text{corresp.\ to $n_0$}\label{LP4:n_0}\\
&\frac{p-1}{p}\cdot b_i= p^{-1}a_{i+1}+b_{i+1}&\forall i\in[C-1]\\%\text{corresp.\ to $n_i,1\leq i\leq t-1$}\label{LP4:n_i}\\
&\frac{p-1}{p}\cdot b_{C}= p^{-1}a_{C+1}\\%&\text{corresp.\ to $n_i,1\leq i\leq t-1$}\label{LP4:n_i}\\
& D_{C+2}=p^{-1}a_{C+2}\\
& D_{i+1}= p^{-1}a_{i+1}+D_i&\forall i\in\{C+2,\ldots,C+R-1\}\\%\text{corresp.\ to $n_i,1\leq i\leq t-1$}\label{LP4:n_i}\\
&w= D_{C+R}+s\\%&\text{corresp.\ to $n_t$}\label{LP4:n_t}\\
&x= a_i+b_i&\forall i\in[C]\\%\text{corresp.\ to $d_i,i\in[t]$}\label{LP4:d_i}\\
&x=a_{C+1}\\
&x+D_i= a_i&\forall i\in\{C+2,\ldots,C+R\}%\text{corresp.\ to $d_i,i\in[t]$}\label{LP4:d_i}\\
%&x+D_i\leq a_i+b_i&\text{corresp.\ to $d_i,i\in[t]$}\label{LP4:d_i}\\
%&x,a_1,\ldots,a_t,b_1,\ldots,b_t,D_1,\ldots,D_t,y,w,s\geq 0
\end{align}

It is not hard to check that there is a unique (not necessarily non-negative) to the above system of equations. For some value of $x>0$ (that can be scaled so as to satisfy constraint~\eqref{LP:general-scaling5}), this solution is
\begin{align*}
    a_i&=\left(1-\frac{C+1-i}{p-1}\right)\cdot x&\forall i\in[L+C]\\
    a_i&=\left(\frac{p}{p-1}\right)^{i-(C+1)}x&\forall i\in\{C+1,\ldots,C+R\}\\
    b_i&=\frac{C+1-i}{p-1}\cdot x&\forall i\in[C]\\
    D_i&=\left(\left(\frac{p}{p-1}\right)^{i-(C+1)}-1\right)x&\forall i\in\{C+2,\ldots,C+R\}\\
    y&=(C+1)x\\
    w&=Cx\\
    s&=\left(C+1-\left(\frac{p}{p-1}\right)^{R-1}\right)x
\end{align*}
The non-negativity of most of the above variables follows by definition. 
The only variable which is not trivially non-negative is $s$, which is non-negative iff
\begin{equation}
    \left(\frac{p}{p-1}\right)^{R-1}\leq C+1
    \label{eq:high-p-skew}
\end{equation}

\subsection{Optimal Solutions for High $\Lambda$}
We now describe optimal solutions in the ``not nice" region of $\lambda$. That is, in the region where $\lambda$ is at least at the threshold where an optimal $(L,C,R)$-minimal spanner as described earlier no longer exists, due to the size of the final layer, $n_t$. Our examples will be based on simple manipulations of the optimal $(L,C,R)$ minimal spanner for the corresponding values of $p,t$.

First, consider the case of low $p$ (when there exists an optimal solution with $L>0$). Recall that a $(L,C,R)$ minimal spanner is also a right-skewed $(L,C,R)$ minimal spanner. Not that the conditions for optimality for a right-skewed spanner are already implied by conditions~\eqref{eq:LCR-cond4} and~\eqref{eq:LCR-cond5}. Thus, we can interpolate between an $(L,C,R)$ minimal spanner and an $(L,C+1,R-1)$ minimal spanner (which is also a right-skewed $(L,C,R)$ minimal spanner), by setting $n_t=n$ and $n_{L+C+1}=n_{L+C}^\alpha$ for all possible $\alpha\in[1,1+1/C]$, and all of these will be optimal solutions.

Now, if $L=R-1$ then the final graph in this interpolation (the $(L,C+1,R-1)$-minimal spanner) has $n_0=n_t$ and thus is an optimal solution for $\lambda=1+1/p$, and thus the intermediate graphs cover all the remaining possible values of $\lambda$. Otherwise, we note that this graph is also a left-skewed $(\tilde L,C,\tilde R)$-minimal spanner for $\tilde L=L+1$, $\tilde R=R-1$. To see that such any such spanner with these parameters will be optimal here, note again that $C\geq p-2$, and that~\eqref{eq:left-skew} follows from~\eqref{eq:LCR-cond4}, since
\begin{align*}
    (C+1)\left(\frac{p}{p-1}\right)^{\tilde R-\tilde L-1}&=(C+1)\left(\frac{p}{p-1}\right)^{R-L-3}\\
    &=(C+1)\left(\frac{p-1}{p}\right)^2\cdot\left(\frac{p}{p-1}\right)^{R-L-1}\\
    &\leq (C+1)^2\left(\frac{p-1}{p}\right)^2&\text{by~\eqref{eq:LCR-cond4}}\\
    &\leq (p+1)^2\left(\frac{p-1}{p}\right)^2&\text{by~\eqref{eq:LCR-cond5}}\\
    &=\left(\frac{p^2-1}{p}\right)^2< p^2
\end{align*}
Thus, we can now interpolate between an $(L,C+1,R-1)$ minimal spanner and an $(L+1,C,R-1)$ minimal spanner (both of which are right-skewed $(L+1,C,R-1)$ minimal spanner), by setting $n_t=n$ and $n_{L-1}=n_{L}^\alpha$ for all possible $\alpha\in[1,1+1/C]$, and all of these will be optimal solutions. Once again, if $L+1=R-1$, then we have covered the entire range of $\lambda$, as before. Otherwise, we can continue to repeatedly alternate between right-skewed and left-skewed spanners as above until $n_0=n_1$. Note that all of these graphs will be optimal by the same argument, as the value of $C$ never changes, and condition~\eqref{eq:LCR-cond4} will continue to hold, as we only decrease the value of $R-L$.

Finally, we note that a similar (though much simpler) argument holds for the case of high $p$ (when $L=0$). Since condition~\eqref{eq:high-p-skew} is exactly the same as condition~\eqref{eq:CR-cond4}, the conditions for optimality of a skewed $(0,C,R)$ minimal spanner are already satisfied for an $(L,C,R)$ minimal spanner when $n_t=n$. Furthermore, this condition also holds by monotonicity for higher values of $C$ (and lower values of $R=t-C$). Thus, as before, for every $C\leq C'\leq t-1$ we can interpolate between a $(0,C',t-C')$ minimal spanner and a $(0,C'+1,t-C'-1)$ minimal spanner using right-skewed $(0,C',t-C')$ minimal spanners, all of which will be optimal, and this will cover the entire range of $\lambda$.

\section{Future Work}
In this paper we have initiated the study of graph spanners with cost defined by the $\ell_p$-norm of the degree vector, since this provides an interesting interpolation between the $\ell_1$-norm (only caring about the number of edges) and the $\ell_{\infty}$-norm (only caring about the maximum degree).  But we have only scratched the surface: many of the hundreds of results on graph spanners can be extended or reexamined with respect to the $\ell_p$-norm.  There are also some very interesting direct extensions of this paper that would be interesting to study.  In particular, we showed that the approximation ratio achieved by the greedy algorithm is strictly better than the generic guarantee for the $\ell_2$-norm with stretch $3$, unlike the $\ell_1$ and $\ell_{\infty}$ norms.  This suggests further study of the greedy algorithm in general, but also suggests extending the recent line of work on approximation algorithms for graph spanners (mostly using convex relaxations and rounding) to general $\ell_p$-norms.  The approaches taken for the $\ell_1$-norm in the past~\cite{DK11-stoc,DK11-podc,BBMRY13,DZ16} have been quite different from the approaches used for the $\ell_{\infty}$-norm~\cite{KP98,CDK12,CD14}; is there a way of interpolating between them to get even better approximations for intermediate $\ell_p$-norms?

\bibliographystyle{plainurl}
\bibliography{refs}

\appendix

\section{Tightness of Upper Bound} \label{app:UB-tight}

We show that, assuming the Erd\H{o}s girth conjecture, our upper bound (Theorem~\ref{thm:upper-main}) is tight even when parameterizing by $\Lambda$ in addition to $n$.  More formally, we prove the following theorem:
\begin{theorem} \label{thm:UB-tight}
Assuming the Erd\H{o}s girth conjecture, $\mathrm{UB}_{2k-1}^p(n, \Lambda) \geq \Omega(\min(\max(n, n^{\frac{k+p}{kp}}), \Lambda))$ for all $k \geq 2$, $p \geq 1$, and $\Omega(n^{1/p}) \leq \Lambda \leq O(n^{\frac{1+p}{p}})$.
\end{theorem}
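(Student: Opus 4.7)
The plan is to exhibit, for each $\Lambda$ in the range, an explicit connected $n$-vertex graph $G$ with $\lp{G} = \Theta(\Lambda)$ whose optimum $(2k-1)$-spanner $H$ satisfies $\lp{H} = \Omega(\min(\max(n, n^{(k+p)/(kp)}), \Lambda))$. The three building blocks are: the Erd\H{o}s extremal graph $\Gamma_m$ on $m$ vertices with girth $>2k$ (guaranteed by the conjecture, approximately $m^{1/k}$-regular, with $\lp{\Gamma_m} = \Theta(m^{(k+p)/(kp)})$); the star $S_m$, with $\lp{S_m} = \Theta(m)$; and the clique $K_m$, with $\lp{K_m} = \Theta(m^{(p+1)/p})$.

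In the regime $\Lambda \leq \max(n, n^{(k+p)/(kp)})$ the target is $\Omega(\Lambda)$, and I would exhibit a graph in which every edge must appear in any spanner. When $\Lambda \leq n^{(k+p)/(kp)}$, take $G$ to be $\Gamma_m$ for $m = \Theta(\Lambda^{kp/(k+p)}) \leq n$, with the remaining $n - m$ vertices attached as a tree for connectivity; the girth condition rules out any short alternative for $\Gamma_m$-edges and tree edges are non-removable, so $H = G$ and $\lp{H} = \Theta(\Lambda)$. When $n^{(k+p)/(kp)} < \Lambda \leq n$ (which forces $p > k/(k-1)$), take $S_{\lceil\Lambda\rceil}$ and attach the remaining vertices as a tree; again $H = G$ and $\lp{H} = \Theta(\Lambda)$.

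In the regime $\Lambda > \max(n, n^{(k+p)/(kp)})$ the target becomes $\Omega(\max(n, n^{(k+p)/(kp)}))$. I would take $G$ to be the vertex-disjoint union of a \emph{robust} subgraph on a set $A$ with $|A| = n/2$ and a \emph{boosting} subgraph on $B = V\setminus A$, joined by a single bridge edge $(a_0, b_0)$. The boosting subgraph is a clique $K_m$ on a subset of $B$ with $m = \Theta(\Lambda^{p/(p+1)})$; the upper bound on $\Lambda$ guarantees $m \leq n/2$, and $\lp{K_m} = \Theta(\Lambda)$ dominates $\lp{G}$. The robust subgraph is $\Gamma_{n/2}$ when $p \leq k/(k-1)$ and $S_{n/2}$ when $p > k/(k-1)$, contributing $\Theta(\max(n, n^{(k+p)/(kp)}))$ to $\lp{G}$. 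Because $(a_0, b_0)$ is the unique cross edge, any simple path between two vertices of $A$ that enters $B$ would need to use $(a_0,b_0)$ twice, which is impossible, so every simple alternative path stays inside $A$; the girth (resp.\ tree) argument then forces every edge of $G_A$ into the spanner, giving $\lp{H} \geq \lp{G_A} = \Omega(\max(n, n^{(k+p)/(kp)}))$.

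The main obstacle is the bridging argument in the second regime: the dense clique has many short cycles and could in principle provide shortcut paths around edges of $G_A$, but the single-edge cut between $A$ and $B$ prevents any such simple path from existing. The remaining verifications---that the chosen $m$ satisfies the size constraint and that $\lp{G}$ has the stated order throughout $\Omega(n^{1/p}) \leq \Lambda \leq O(n^{(p+1)/p})$---follow by direct computation from the closed-form norms of $\Gamma_m$, $S_m$, and $K_m$.
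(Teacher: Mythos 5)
Your proposal is correct and follows essentially the same route as the paper: the paper's proof also splits into the same four regimes and uses exactly these building blocks (a path/tree for padding, a star, a clique of size $\Theta(\Lambda^{p/(p+1)})$, and the Erd\H{o}s girth-conjecture graph, joined by a single bridge edge so that the high-girth or tree part is forced into every spanner). The only differences are cosmetic: the paper cases first on $p$ versus $k/(k-1)$ and then on $\Lambda$, and it leaves the single-cut-edge argument implicit where you spell it out.
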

As a simple corollary, if we do not parameterize by $\Lambda$ we get the following straightforward complement to Theorem~\ref{thm:upper-main}:
\begin{corollary}
Assuming the Erd\H{o}s girth conjecture, for every $k \geq 2$ and $p \geq 1$ there is a connected graph $G$ such that every $(2k-1)$-spanner $H$ of $G$ has $\|H\|_p \geq \Omega(\max(n, n^{\frac{k+p}{kp}}))$. 
\end{corollary}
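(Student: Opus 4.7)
The plan is to construct, for every admissible pair $(n, \Lambda)$ with $\Omega(n^{1/p}) \leq \Lambda \leq O(n^{(1+p)/p})$, an explicit connected graph $G$ on $n$ vertices with $\|G\|_p = \Theta(\Lambda)$ whose every $(2k-1)$-spanner $H$ satisfies $\|H\|_p \geq \Omega(\min(\max(n, n^{(k+p)/(kp)}), \Lambda))$. The primary building block is the conjectured extremal Erd\H{o}s graph $\Gamma_m$: an $m$-vertex graph with girth greater than $2k$ and $\Omega(m^{1+1/k})$ edges, which may be taken nearly $m^{1/k}$-regular so that $\|\Gamma_m\|_p = \Theta(m^{(k+p)/(kp)})$. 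Because deleting any edge $(u,v)$ of $\Gamma_m$ would leave $u,v$ at distance greater than $2k$, every $(2k-1)$-spanner of $\Gamma_m$ must equal $\Gamma_m$ itself. A standard branching argument further shows that the ball of radius $k$ around any vertex in $\Gamma_m$ already covers essentially all $m$ vertices, so the diameter of $\Gamma_m$ is at most $k \leq 2k-1$. The secondary building block is the star $K_{1,m-1}$, whose $\ell_p$-norm is $\Theta(m)$ and every edge of which is a bridge; any two leaves are spanned by the $2$-path through the center, which is within stretch $2k-1$ for $k \geq 2$.

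Set $f := n^{(k+p)/(kp)}$. The claim splits into two: (a) when $\Lambda \leq \max(n,f)$ show $\mathrm{UB}_{2k-1}^p(n,\Lambda) \geq \Omega(\Lambda)$, and (b) when $\Lambda \geq \max(n,f)$ show $\mathrm{UB}_{2k-1}^p(n,\Lambda) \geq \Omega(\max(n,f))$. For (a) with $\Lambda \leq f$, pick $m := \Theta(\Lambda^{kp/(k+p)}) \leq n$ so that $\|\Gamma_m\|_p = \Theta(\Lambda)$, and let $G$ consist of $\Gamma_m$ together with a simple path attaching the remaining $n-m$ vertices to one fixed vertex of $\Gamma_m$. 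The path contributes only $O(n^{1/p}) = O(\Lambda)$ to $\|G\|_p$, so $\|G\|_p = \Theta(\Lambda)$, and every spanner contains $\Gamma_m$ together with the attaching path, giving $\|H\|_p \geq \Omega(\Lambda)$. For (a) with $f \leq \Lambda \leq n$ (a range which forces $p \geq k/(k-1)$), replace the Erd\H{o}s component by the star $K_{1,\lfloor\Lambda\rfloor-1}$ together with an attaching path on the remaining vertices; every star edge is a bridge, so the spanner again has $\|H\|_p \geq \Omega(\Lambda)$.

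For (b) we split according to whether $\max(n,f)$ equals $f$ or $n$. In the regime $p \leq k/(k-1)$ (so $\max(n,f) = f$), take $G$ to be $\Gamma_n$ together with an additional structure inflating $\|G\|_p$ to $\Theta(\Lambda)$; a clean choice is to overlay a clique on a subset of $\Theta(\Lambda^{p/(p+1)}) \leq n$ vertices, which drives $\|G\|_p$ up to $\Theta(\Lambda)$. By the diameter bound, every extra edge lies between vertices at distance at most $2k-1$ in $\Gamma_n$ and so can be dropped from the spanner, leaving just $\Gamma_n$ with $\|H\|_p = \Theta(f)$. In the regime $p \geq k/(k-1)$ (so $\max(n,f) = n$), take $G$ to be $K_{1,n-1}$ together with enough additional edges among the leaves to bring $\|G\|_p$ up to $\Theta(\Lambda)$; the attainable range sweeps continuously from $\Theta(n)$ (no extras) up to $\Theta(n^{1+1/p})$ (the complete graph $K_n$). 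Since each extra edge is between two leaves spanned by the central $2$-path, the spanner reduces to the star and has $\|H\|_p = \Theta(n)$.

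The principal technical obstacle is the inflation step in part (b): showing that the extra edges can be tuned so that $\|G\|_p$ hits every target $\Lambda$ up to a constant factor. This follows from a continuity/monotonicity argument in the number (or spread) of added edges, together with the observation that adding one edge changes $\|G\|_p$ by a sub-constant factor relative to the current value, so the norm sweeps continuously between its two endpoint values. A secondary subtlety is certifying that the diameter of $\Gamma_n$ is at most $2k-1$ in some known instantiation of the girth conjecture; this holds in all standard algebraic constructions (incidence graphs of generalized polygons, norm graphs, etc.) and, more generally, can be bypassed by restricting the overlaid edges in part (b) to vertex pairs known to lie within distance $2k-1$ in $\Gamma_n$, of which there are $\Omega(n^2)$ since already the ball of radius $k$ covers all of $V$.
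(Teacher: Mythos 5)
The part of your argument that the corollary actually needs is correct and coincides with the paper's proof: for $p\leq k/(k-1)$ take $G$ to be the Erd\H{o}s girth-conjecture graph itself (a graph of girth $\geq 2k+1$ is its own unique $(2k-1)$-spanner), and for $p\geq k/(k-1)$ take the star, all of whose edges are bridges. Your case (a), evaluated at $\Lambda=\max(n,n^{(k+p)/(kp)})$, is exactly this plus harmless padding by a pendant path, and that is all the corollary requires.

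However, you set out to prove the full parameterized bound, and your case (b) has a genuine gap. Lower-bounding $\mathrm{UB}$ means exhibiting a graph for which \emph{every} spanner has large norm; your argument that ``every extra edge can be dropped, leaving just $\Gamma_n$'' (resp.\ ``the spanner reduces to the star'') only shows that a spanner of norm $\Theta(n^{(k+p)/(kp)})$ (resp.\ $\Theta(n)$) \emph{exists}, which is the wrong direction. Moreover the constructions themselves fail: overlaying a clique on a subset $S\subseteq V(\Gamma_n)$ destroys the girth property, so a spanner may discard $\Gamma_n$-edges whose endpoints are near $S$ by routing through the clique; and adding edges among the leaves of $K_{1,n-1}$ destroys the bridge property of the star edges. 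Concretely, at $\Lambda=\Theta(n^{1+1/p})$ your second construction is $K_n$, which for $p>1$ admits $3$-spanners of $\ell_p$-norm $o(n)$ (a clique on a dominating set of size $\sqrt{n}$ with every other vertex attached to one dominator has norm roughly $n^{1/2+1/(2p)}$), so the claimed $\Omega(n)$ bound is false for that graph. The paper's fix is to keep the norm-inflating clique \emph{vertex-disjoint} from the star or Erd\H{o}s component and attach it by a single edge: any alternative route between two vertices of the original component must then cross the cut vertex twice, so the girth and bridge arguments survive. Finally, the diameter bound you invoke for $\Gamma_n$ does not follow from the girth conjecture itself (which asserts only girth and edge count), though you correctly flag this as a secondary issue.
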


We now prove Theorem~\ref{thm:UB-tight}.  We break into two cases depending on $p$, and then for each case break into two more cases depending on $\Lambda$.  First, suppose that $p \geq k/(k-1)$, so $n \geq n^{\frac{k+p}{kp}}$.  If $\Lambda \leq n$, let $G$ be a graph consisting of a star with $\Lambda$ leaves together with a path of length $n - \Lambda - 1$, where one endpoint of the path is also adjacent to an arbitrary leaf of the star.  Then $G$ clearly has $n$ nodes and $\|G\|_p = \Theta(\Lambda)$.  Moreover, since $G$ is a tree, the only $(2k-1)$-spanner of $G$ is $G$ itself.  Thus in this case $\mathrm{UB}_{2k-1}^p(n, \Lambda) \geq \Omega(\Lambda) = \Omega(\min(\Lambda, n))$.  On the other hand, if $\Lambda > n$, then let $G$ be a clique on $\Lambda^{\frac{p}{1+p}}$ nodes, together with a star with $n$ leaves (with an arbitrary vertex of the clique adjacent to an arbitrary vertex of the star to make $G$ connected).  Then it is easy to see that $G$ has $\Theta(n)$ nodes and $\|G\|_p = \Theta(\Lambda)$, and moreover that any $(2k-1)$-spanner of the tree must include every edge of the star.  Thus we get that $\mathrm{UB}_{2k-1}^p(n, \Lambda) \geq \Omega(n) = \Omega(\min(\Lambda, n))$ in this case.  

Now suppose that $1 \leq p \leq k/(k-1)$, so $n^{\frac{k+p}{kp}} \geq n$.  Let $H_n$ be a graph from the Erd\H{o}s girth conjecture: a graph with $\Theta(n)$ nodes that is regular with degree $\Theta(n^{1/k})$ and has girth at least $2k+1$ (note that such graphs are known to exist for particular values of $k$ such as $k=2,3,5$~\cite{Wenger91}). Note that $\|H_n\|_p = \Theta((n \cdot (n^{1/k})^p)^{1/p}) = \Theta(n^{\frac{k+p}{kp}})$.  If $\Lambda \leq n^{\frac{k+p}{kp}}$, then let $G$ be an arbitrary subgraph of $H_n$ with $\|G\|_p = \Lambda$.  Since $G$ has girth at least $2k+1$, the only $(2k-1)$-spanner of $G$ is $G$ itself.  Thus in this case $\mathrm{UB}_{2k-1}^p(n, \Lambda) \geq \Omega(\Lambda) = \Omega(\min(\Lambda, n^{\frac{k+p}{kp}}))$.  On the other hand, suppose that $\Lambda > n^{\frac{k+p}{kp}}$.  Then we can build $G$ by building a clique of size $\Lambda^{\frac{1+p}{p}}$ and combining this with $H_{n/2}$, with one arbitrary edge between the clique and $H_{n/2}$.  Then $G$ has $\Theta(n)$ nodes and $\|G\|_p = \Theta(\Lambda)$, and any $(2k-1)$-spanner of $G$ must include every edge of $H_{n/2}$.  Thus $\mathrm{UB}_{2k-1}^p(n, \Lambda) \geq \Omega(n^{\frac{k+p}{kp}}) = \Omega(\min(\Lambda, n^{\frac{k+p}{kp}}))$.

\end{document}